\newtheorem{theorem}{Theorem}[section]
\newtheorem{proposition}[theorem]{Proposition}
\newtheorem{corollary}[theorem]{Corollary}
\newtheorem{lemma}[theorem]{Lemma}
\newtheorem{example}[theorem]{Example}
\newtheorem{remark}[theorem]{Remark}
\newtheorem{definition}[theorem]{Definition}
\newtheorem{assumption}[]{Assumption}
\def\th@plain{%
  \thm@notefont{}
  \itshape 
}
\def\th@definition{%
  \thm@notefont{}
  \normalfont 
}
\begin{document}
 \graphicspath{{./figures/}}

\title{Relative equilibria  for Scaling Symmetries and Central Configurations}
\author[1]{Giovanni Rastelli}
\author[2]{Manuele Santoprete}
\affil[1]{Dipartimento di Matematica, Universit\`a di Torino}

\affil[2]{Department of Mathematics, Wilfrid Laurier University, Waterloo ON, N2L 3C5, Canada}
    \makeatletter
    \renewcommand\AB@affilsepx{: \protect\Affilfont}
    \makeatother

    \affil[ ]{Email addresses:}

    \makeatletter
    \renewcommand\AB@affilsepx{, \protect\Affilfont}
    \makeatother
    \affil[1]{giovanni.rastelli@unito.it}
    \affil[2]{msantoprete@wlu.ca}
\setcounter{Maxaffil}{0}
\renewcommand\Affilfont{\itshape\small}
\maketitle

\tableofcontents
\begin{abstract}
In this paper, we explore scaling symmetries within the framework of symplectic geometry. We focus on the action \(\Phi\) of the multiplicative group \(\mathbb{R}^+\) on exact symplectic manifolds \((M, \omega,\theta)\), with \(\omega = -d\theta\), where $ \theta $ is  a given  primitive one-form. Extending established results in symplectic geometry and Hamiltonian dynamics, we introduce conformally symplectic maps, conformally Hamiltonian systems, conformally symplectic group actions, and the notion of conformal invariance. This framework allows us to generalize the momentum map to the conformal momentum map, which is crucial for understanding scaling symmetries. Additionally, we provide a generalized Hamiltonian Noether's theorem for these symmetries.

We introduce the (conformal) augmented Hamiltonian \(H_{\xi}\) and prove that the relative equilibria of scaling symmetries are solutions to equations involving $ H _{ \xi } $ and  the primitive   one-form \(\theta\). We derive their main properties, emphasizing the differences from relative equilibria in traditional symplectic actions.

For cotangent bundles, we define a scaled cotangent lifted action and derive explicit formulas for the conformal momentum map. We also provide a general definition of central configurations for Hamiltonian systems on cotangent bundles that admit scaling symmetries. Applying these results to simple mechanical systems, we introduce the augmented potential \(U_{\xi}\) and show that the relative equilibria of scaling symmetries are solutions to an equation involving  $ U _{ \xi } $ and the Lagrangian one-form \(\theta_L\).

Finally, we apply our general theory to the Newtonian \(n\)-body problem, recovering the classical equations for central configurations.
\end{abstract}

\section{Introduction}

\subsection{Background and Motivation}

Relative equilibria  have been the subject of extensive research  and  are central to  the study of mechanical systems with symmetry, often corresponding to significant physical phenomena such as steady rotations or uniform motions. 
 Relative equilibria have played a significant role in foundational works of modern geometric mechanics
such as those by Arnold \cite{arnold1966geometrie}, Smale \cite{smale1970topologyI,smale1970topologyII}, and Marsden and Weinstein \cite{marsden1974reduction}. 

A relative equilibrium can be defined in two closely related ways: either as a point in phase space where the trajectory of a mechanical system coincides with the orbit of a one-parameter symmetry group, or as a specific solution that traces out the orbit of such a group. The latter interpretation is commonly employed in the context of the Newtonian \(n\)-body problem, where relative equilibria represent configurations of point masses rotating uniformly around their center of mass while maintaining a fixed shape. For the planar case \(d = 2\), the relevant symmetry group is \(G = \text{SO}(2)\), corresponding to rotations in the plane.

In this paper, we adopt the former interpretation, treating relative equilibria as points with the symmetry properties described above. In the context of geometric mechanics, it is common to consider a finite-dimensional symplectic manifold \((M, \omega)\), on which a Lie group \(G\) (with Lie algebra \(\mathfrak{g}\)) acts symplectically, along with a \(G\)-invariant Hamiltonian \(H\). This symplectic action naturally gives rise to the equivariant momentum map \(J: M \rightarrow \mathfrak{g}^*\), which encodes the conserved quantities associated with the system's symmetries.

In this setting, there is an elegant framework for understanding relative equilibria. Relative equilibria can be obtained as the critical points of the augmented Hamiltonian 
\[
H_{\xi} = H - \left\langle \xi, J \right\rangle,
\]
where \(\xi \in \mathfrak{g}\), and $\left\langle \cdot , \cdot \right\rangle$ denotes the paring between a Lie algebra and its dual. If \(Q\) is the configuration space, for simple mechanical systems on the cotangent bundle \(M = T^*Q\), one can define the augmented potential 
\[
U_{\xi} = U(q) - \frac{1}{2} \left\langle \xi, \mathbb{I}(q) \xi \right\rangle,
\]
where \(\mathbb{I}\) is the locked inertia tensor. In this case, a point \((q_e, p_e)\) is a relative equilibrium if \(q_e\) is a critical point of the augmented potential and \(p_e\) satisfies a certain equation. A similar characterization can also be given in terms of the so-called amended potential.

Furthermore, by utilizing the conserved quantities encoded in the momentum map through the process of symplectic reduction \cite{marsden1974reduction}, one can reduce the dimensionality of the phase space, leading to a new, lower-dimensional symplectic manifold known as the reduced space. 
The dynamics on this reduced space are governed by the reduced Hamiltonian equations, and relative equilibria can then be viewed as equilibrium points of these reduced equations.

Scaling symmetries are ubiquitous in physical problems, appearing in fields such as fluid dynamics, field theory, and classical mechanics.
Despite the extensive body of work on this topic, the treatment of scaling symmetries and their corresponding reduction in classical mechanics has historically been approached on a case-by-case basis, focusing on specific systems (see \cite{bravetti2023scaling} for some examples).

Bravetti, Jackman, and Sloan \cite{bravetti2023scaling} have recently shown that a symplectic Hamiltonian system with scaling symmetry can be reduced to a contact Hamiltonian system with one fewer degree of freedom. This reduction depends on a scaling function, which plays a role similar to equivariant momentum maps in symplectic reduction. A key feature of their approach is its flexibility, allowing for various choices of scaling functions to obtain the reduced equations. In systems on cotangent bundles with a Riemannian metric on the configuration space, there is often a natural choice for a scaling function, something that is implicitly apparent in \cite{bravetti2023scaling}. For instance, in the two-body problem, a power of the configuration variable is a natural choice, while in the $n$-body system, a power of the moment of inertia serves this purpose. In this paper, we introduce a new distinguished scaling function, which we refer to as the conformal momentum map, extending the notion of the momentum map in symplectic reduction.

Building on the contact reduction approach for scaling symmetries, it is natural to define relative equilibria as critical points of the reduced contact Hamiltonian system. However, from a practical standpoint, this approach may not be the most convenient. Therefore, it is desirable to characterize relative equilibria in a manner similar to the case of symplectic group actions on symplectic manifolds, by using the augmented Hamiltonian and augmented potential. This method provides a more direct approach to identifying relative equilibria. This motivation is a key focus of our paper.

Another significant motivation is to provide a general definition of central configurations. In the context of the \(n\)-body problem, central configurations are specific arrangements of point masses such that the acceleration vector on each mass, due to the gravitational influence of all other masses, points toward the center of mass and is proportional to the distance from the center of mass. While the study of relative equilibria and central configurations is a key focus of Celestial Mechanics (see, for instance, \cite{santoprete2021uniqueness} for an explanation of the importance of central configurations), the concept of central configurations, unlike that of relative equilibria, has not been widely generalized beyond the Newtonian \(n\)-body problem in Euclidean spaces. An exception is the work by Diacu, Stoica and Zhu \cite{diacu2018central}, which provides a definition for central configurations in the \(n\)-body problem on spaces of non-zero constant curvature. Therefore, another one of our goals is to utilize scaling symmetries to define central configurations in a more general setting.

Generalizing the notion of central configurations to broader  settings becomes both timely and relevant since there  has been a recent surge of interest in studying generalized versions of the Newtonian $n$-body problem. This includes the Kepler problem and $n$-body problems in spaces of constant curvature (e.g., \cite{diacu2012part1,diacu2012part2,diacu2012relative,diacu2014relative,diacu2018central}) and on surfaces of revolution \cite{santoprete2008gravitational,stoica2018n}. Research has also explored the Kepler problem and the two-body problem on the Heisenberg group \cite{dods2019numerical,dods2021self,montgomery2015keplerian,stachowiak2021non}, and the Anisotropic Kepler problem \cite{devaney1978collision,gutzwiller1973anisotropic}.

The approach for defining central configurations introduced in this paper  applies to various contexts, including the Kepler and $n$-body problems on surfaces of revolution that admit scaling symmetries, the Anisotropic Kepler problem, and the Kepler and two-body problem on the Heisenberg group. However, in the case of the Heisenberg group, the general theory is not fully utilized, as the scaling symmetry is symplectic, even though the Hamiltonian is conformally invariant \cite{montgomery2015keplerian}. 
Furthermore, note that the results from Section \ref{sec:rel-eq-cot-bundle} cannot be used for the Kepler and two-body problem on the Heisenberg group, as the configuration space in these cases is endowed with a sub-Riemannian metric rather than a Riemannian one.

Unfortunately, however, the approach adopted in this study associates the definition of central configurations with scaling symmetries, making it infeasible to extend this definition to spaces of constant curvature. In fact,  although the sphere and the hyperbolic plane are homogeneous and isotropic, they do not permit dilations, thereby preventing the application of our central configuration definition to these geometries.

\subsection{Summary of Main Results and Outline of the Paper}
In this paper, we  study  systems with  scaling symmetries within the framework of symplectic geometry. To effectively capture these symmetries, we consider the action \( \Phi \) of the multiplicative group \( \mathbb{R}^{+} \) on an exact symplectic manifold \( (M, \omega) \), where \( \omega = -d\theta \).

Building on established results in symplectic geometry and Hamiltonian dynamics, such as those found in \cite{abraham2008foundations}, \cite{libermann2012symplectic} and \cite{marsden1992lectures}, we extend these concepts to the realm of conformal symplectic systems.
 Our approach is partially inspired by  the study of conformal Hamiltonian systems, as discussed in \cite{mclachlan2001conformal} and \cite{mclachlan2018hamiltonian}.

Traditional concepts of symplectic group actions and function invariance are insufficient for describing scaling symmetries. Therefore, we introduce conformally symplectic maps, conformally symplectic group actions, and the notion of conformal invariance. We define \( \Phi \) as a scaling symmetry for a Hamiltonian system \( (M, \omega, H) \) if \( \Phi \) is a conformally symplectic group action and \( H \) is conformally invariant.

We also introduce conformally Hamiltonian systems on exact symplectic manifolds. These systems extend the traditional Hamiltonian framework and have been used to study certain dissipative systems characterized by specific types of friction, as investigated in \cite{mclachlan2001conformal}. However, our focus is on using conformally Hamiltonian systems to understand scaling symmetries, rather than to study  dissipative systems.

With these concepts in place, we generalize the notion of the momentum map to the conformal momentum map, which is adapted to scaling symmetries. For an element \(\xi\) of the Lie algebra of \(G\), we define the momentum map \(J\) so that the conformal momentum function \(J_{\xi} = \left\langle J, \xi \right\rangle\) is a conformal Hamiltonian, satisfying the equation:
\[
i_{\xi_M} \omega + (c\xi) \theta = dJ_{\xi},
\]
where \(\xi_M\) is the infinitesimal generator of the action \(\Phi\) associated with \(\xi\), $ \omega $ the symplectic form with 
$ \omega = - d \theta $. 

In the context of scaling symmetries, we define relative equilibria and derive their main properties. This requires careful consideration due to key differences from relative equilibria in symplectic actions. A significant property of relative equilibria for scaling symmetries is that they can be determined using the augmented Hamiltonian \( H_{\xi} = H - J_{\xi} \) through the equation:
\[
dH_{\xi} + (c\xi) \theta = 0,
\]
where \( c \) is a constant. This differs from  the case of symplectic actions, where relative equilibria are identified as the critical points of the augmented Hamiltonian.

We specialize our formulas for the momentum map to the case of the cotangent bundle. Given an action \( \Psi \) of \(\mathbb{R}^{+} \) on \( Q \), we define a scaled cotangent lifted action on \( T^*Q \). In this context, the conformal momentum map is given by:
\[
J_{\xi}(p_q) = \left\langle J(p_q), \xi \right\rangle = \left\langle p_q, \xi_Q(q) \right\rangle,
\]
where \( q \in Q \), \( p_q \) denotes a covector at \( q \), and \( \xi_Q \) is the infinitesimal generator of \( \Psi \) on \( Q \). This conformal momentum map has the special property of being a scaling function, which is crucial for the reduction studied in \cite{bravetti2023scaling}.

We then apply our results on the relative equilibria of scaling symmetries to simple mechanical systems on cotangent bundles. A simple mechanical system has a Hamiltonian that can be written as the sum of potential and kinetic energies. In this case, we introduce the {\bf amended potential} \( U_{\xi} \), defined by 
\[
U_{\xi}(q) = U(q) - \frac{1}{2} \left\langle \xi, \mathbb{I} \xi \right\rangle,
\]
where \( \mathbb{I} \) is the {\bf locked inertia tensor} (see Section \ref{sec:rel-eq-cot-bundle}). We show that the relative equilibria of the scaling symmetry are solutions to the following equations:
\[
p_q = \mathbb{F}L(\xi_Q(q)), \quad dU_{\xi} + (c \xi) \theta_L(\xi_Q(q)) = 0,
\]
where \( \mathbb{F}L \) is the {\bf Legendre transform} or {\bf fiber derivative}  (see Section \ref{sec:rel-eq-cot-bundle}), \( \theta_L \) is the {\bf Lagrangian one-form} (see Section \ref{sec:rel-eq-cot-bundle}), and \( \xi_Q \) is the infinitesimal generator of \( \Psi \) on \( Q \).

For a Hamiltonian system on the cotangent bundle \( T^*Q \) with a scaling symmetry, we define {\bf central configurations} as follows: If \( (q_e, p_e) \) is a relative equilibrium of the scaling symmetry, then \( q_e \) is a central configuration. We apply our general theory to the Newtonian \( n \)-body problem, which involves \( n \) point particles moving in \( \mathbb{R}^3 \) and interacting gravitationally, to recover the classical equations for central configurations.

This paper is organized as follows: In Section 2, we introduce conformally symplectic maps, conformally Hamiltonian systems, and their basic properties. Section 3 provides explicit coordinate equations for conformally Hamiltonian vector fields on cotangent bundles, introduces the scaled cotangent lift, and showes that cotangent lifted diffeomorphisms are conformally symplectic. These elements are essential for understanding the geometric framework of our study. 

In Section 4, we delve into conformally symplectic Lie group actions, conformally invariant functions, and scaling symmetries. For cotangent bundles, we define scaled cotangent lifted actions and present a useful coordinate formula for the infinitesimal generators of such actions. Section 5 introduces conformal momentum maps and provides a concise expression for these maps in the case of scaled cotangent lifts. Additionally, we reveal several properties of momentum maps, including a generalization of Noether's theorem and the fact that conformal momentum maps for cotangent bundles are not  equivariant but are conformally invariant.

Section 6 focuses on relative equilibria for scaling symmetries and provides a general and novel definition of central configurations. We present convenient formulas for finding this type of relative equilibria based on the augmented Hamiltonian, offering new insights and extending classical results to this broader setting. An application to the Newtonian \( n \)-body problem is also presented.

Finally, Section 7 specializes the results of Section 6 to simple mechanical systems on cotangent bundles. We introduce the augmented potential, providing a more convenient method for finding relative equilibria of the scaling symmetry and central configurations in this special case. 

\section{Conformally Hamiltonian systems}

In this section, we introduce conformally Hamiltonian systems, focusing our discussion on exact symplectic manifolds. A symplectic manifold \((M, \omega)\) consists of a smooth manifold \(M\) and a closed, non-degenerate 2-form \(\omega\). An {\bf exact symplectic manifold} is a particular case where the symplectic form \(\omega\) is exact, meaning there exists a 1-form \(\theta\), known as the primitive one-form or symplectic potential, such that \(\omega = -d\theta\).

Note that if \(\omega\) is exact, any 1-form \(\theta' = \theta + \lambda\) with \(d\lambda = 0\) is also a primitive one-form, indicating that there are multiple 1-forms \(\theta'\) satisfying \(\omega = -d\theta'\). For an exact symplectic manifold where a specific symplectic potential has been chosen, we will use the notation \((M, \omega, \theta)\). In this paper, for simplicity,  we will always assume that a choice of symplectic potential has been made.


\begin{definition}[{\bf Conformally Symplectic map}] Let $ (M_1 , \omega_1,\theta_1) $  and $ (M_2, \omega_2,\theta_2) $  be  exact symplectic manifolds. A $ C ^{ \infty } $ mapping $ f: M_1\to M_2  $  is called {\bf conformally symplectic}  with parameter $ c $  if $f ^\ast \omega_2 = c \omega_1$ for some $ c \in \mathbb{R}  ^{+ } $. 
\end{definition}
In the literature a range of terms are utilized to describe these mappings, \cite{meyer2017introduction} uses ``symplectic transformations with multiplier $ c $" or ``symplectic scalings", \cite{mclachlan2018hamiltonian} employs ``conformal symplectic maps", while \cite{Carinena_Canonoid_2013} refers to them as  ``non-strictly canonical".  The term ``conformally symplectic" has also been  previously employed, as seen in \cite{calleja2013kam}.
\begin{definition}[{\bf Conformally Hamiltonian Systems}] Let $ (M , \omega,\theta) $ be an exact symplectic manifold   and $ F: M \to \mathbb{R}  $ a $ C ^{ \infty } (M) $ function. We say that  a vector field $ X _F ^c $ is a {\bf conformally Hamiltonian vector field} with parameter $ c $   provided that 
	\[ i _{ X _F ^c } \omega + c \theta = dF\]
	We call the function $ F $ the {\bf conformal Hamiltonian}, and $ ( M , \omega ,\theta, X _F ^c) $ a {\bf conformal Hamiltonian system} with parameter $ c $. 

\end{definition}

The following proposition generalizes the conservation of energy for Hamiltonian systems.

\begin{proposition}\label{prop:dF/dt} Let $ (M, \omega ,\theta, X _F ^c) $ a conformal Hamiltonian system with parameter $ c$, and $ \gamma (t) $ an integral curve for $ X _F ^c $. Then \[\frac{ d F (\gamma (t))    } { dt } = c  i _{ X _F^c } \theta. \]	 
\end{proposition}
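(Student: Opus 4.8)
The plan is to differentiate $F$ along the flow using the chain rule, then substitute the defining equation of the conformally Hamiltonian vector field, and finally exploit the skew-symmetry of $\omega$ to eliminate one term. This is a direct computation, so I would keep it brief.

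First I would note that since $\gamma(t)$ is an integral curve of $X_F^c$, its velocity satisfies $\dot\gamma(t) = X_F^c(\gamma(t))$. By the chain rule,
\[
\frac{dF(\gamma(t))}{dt} = dF\bigl(\dot\gamma(t)\bigr) = dF\bigl(X_F^c\bigr)\big|_{\gamma(t)},
\]
so the problem reduces to evaluating the one-form $dF$ on the vector field $X_F^c$.

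Next I would invoke the defining relation $i_{X_F^c}\omega + c\theta = dF$ and contract both sides with $X_F^c$, obtaining
\[
dF\bigl(X_F^c\bigr) = \bigl(i_{X_F^c}\omega\bigr)\bigl(X_F^c\bigr) + c\,\theta\bigl(X_F^c\bigr) = \omega\bigl(X_F^c, X_F^c\bigr) + c\, i_{X_F^c}\theta.
\]
The first term vanishes because $\omega$ is skew-symmetric, so $\omega(X_F^c, X_F^c) = 0$. This is the only substantive step, and it is precisely the same antisymmetry argument that yields conservation of energy in the ordinary Hamiltonian case ($c=0$); here the surviving $c\,i_{X_F^c}\theta$ term measures the failure of conservation coming from the conformal term. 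Combining the two displays gives
\[
\frac{dF(\gamma(t))}{dt} = c\, i_{X_F^c}\theta,
\]
evaluated along $\gamma(t)$, as claimed.

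There is no real obstacle here: the computation is immediate once one recognizes that contracting a two-form with the same vector field in both slots produces zero. The only point requiring minor care is bookkeeping—keeping track of where each quantity is evaluated along the curve $\gamma(t)$—but this does not affect the structure of the argument.
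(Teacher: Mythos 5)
Your proposal is correct and follows essentially the same route as the paper's proof: differentiate via the chain rule, substitute the defining relation $i_{X_F^c}\omega + c\theta = dF$, and kill the $\omega(X_F^c, X_F^c)$ term by skew-symmetry. No issues.
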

\begin{proof} 
	\begin{align*} \frac{ d } { dt } F (\gamma (t)) & = dF (\gamma (t)) \cdot \gamma' (t) \\
	                                          & = dF (\gamma (t)) \cdot X _{ F } ^c (\gamma (t)) \\
						  & =  i _{ X _F ^c } dF (\gamma (t)) \\
						  & = c i _{  X _F ^c } \theta + i _{ X _F ^c } i _{ X _F ^c } \omega  \\
						  & = c i _{ X _F ^c } \theta 
	\end{align*} 
\end{proof} 
Recall that the if $ (M, \omega,\theta) $ is a $ 2n$-dimensional symplectic manifold, then  the Liouville volume is 
\[\Lambda = \frac{ (- 1) ^{ n /2 }} { n! } \omega \wedge \cdots \wedge \omega \quad \mbox{(n times)}.  \]
\begin{proposition}
Let $ (M , \omega ,\theta, X _F ^c) $ a conformally Hamiltonian system  with parameter $ c $, and $ \phi _t $ be the flow of $ X _F ^c $. Then 
 for each $t$,  $ \phi _t ^\ast \omega = e ^{ ct } \omega $, that is, $ \phi _t $ is conformally symplectic for each fixed value of $t$. Moreover  the phase space volume $ \Lambda $ satisfies the equation $ \phi _t ^\ast\Lambda  = e ^{ nct } \Lambda $.
\end{proposition}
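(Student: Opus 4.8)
The plan is to compute the Lie derivative of $\omega$ along $X_F^c$ and then integrate the resulting ordinary differential equation satisfied by the one-parameter family of pullbacks $t \mapsto \phi_t^* \omega$. First I would apply Cartan's magic formula $\mathcal{L}_{X_F^c} = d \circ i_{X_F^c} + i_{X_F^c} \circ d$ to the symplectic form. Since $\omega$ is closed the second term vanishes, and substituting the defining relation $i_{X_F^c}\omega = dF - c\theta$ together with $\omega = -d\theta$ gives
\[
\mathcal{L}_{X_F^c}\omega = d\bigl(i_{X_F^c}\omega\bigr) = d(dF - c\theta) = -c\,d\theta = c\omega,
\]
where I used $d(dF) = 0$ and $d\theta = -\omega$. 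This identity is the heart of the argument, and the rest is essentially a matter of integrating it.

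Next I would invoke the standard relation between the Lie derivative and the flow, namely $\frac{d}{dt}\phi_t^*\omega = \phi_t^*\bigl(\mathcal{L}_{X_F^c}\omega\bigr)$. Inserting the identity above yields the linear equation
\[
\frac{d}{dt}\phi_t^*\omega = c\,\phi_t^*\omega,
\]
with initial condition $\phi_0^*\omega = \omega$ since $\phi_0$ is the identity. Viewing $t \mapsto \phi_t^*\omega$ as a curve of two-forms governed by a scalar linear growth law, the unique solution is $\phi_t^*\omega = e^{ct}\omega$. This establishes the first assertion and shows that $\phi_t$ is conformally symplectic with parameter $e^{ct}$ for each fixed $t$, consistent with the definition of a conformally symplectic map.

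For the volume form I would exploit that pullback is an algebra homomorphism on differential forms and therefore commutes with the wedge product. Applying $\phi_t^*$ to $\Lambda = \frac{(-1)^{n/2}}{n!}\,\omega^{\wedge n}$ and using $\phi_t^*\omega = e^{ct}\omega$ factors out $n$ copies of the scalar $e^{ct}$:
\[
\phi_t^*\Lambda = \frac{(-1)^{n/2}}{n!}\bigl(\phi_t^*\omega\bigr)^{\wedge n} = \frac{(-1)^{n/2}}{n!}\bigl(e^{ct}\omega\bigr)^{\wedge n} = e^{nct}\,\Lambda,
\]
which is the claimed scaling of the phase space volume.

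I do not expect a serious conceptual obstacle here; the difficulty is entirely in bookkeeping. The point requiring care is the sign arising from the convention $\omega = -d\theta$ in the Lie derivative step, since an error there would propagate into both conclusions. The only other item worth a sentence of justification is that the form-valued differential equation $\frac{d}{dt}\phi_t^*\omega = c\,\phi_t^*\omega$ may be solved exactly as a scalar exponential equation; this is legitimate because, after fixing a point of $M$ and evaluating on a pair of tangent vectors, the relation reduces to a genuine scalar linear ODE whose solution is $e^{ct}$ times the initial value.
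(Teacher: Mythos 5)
Your argument is correct and follows essentially the same route as the paper: Cartan's formula plus the defining relation $i_{X_F^c}\omega = dF - c\theta$ gives $\mathcal{L}_{X_F^c}\omega = c\omega$, and integrating $\frac{d}{dt}\phi_t^*\omega = \phi_t^*\mathcal{L}_{X_F^c}\omega$ yields $\phi_t^*\omega = e^{ct}\omega$. For the volume statement you pull back $\omega^{\wedge n}$ directly using that $\phi_t^*$ is an algebra homomorphism, which is a slightly more economical variant of the paper's second ODE argument (the paper applies the Leibniz rule for $\mathcal{L}_{X_F^c}$ on the $n$-fold wedge and integrates again); both are valid and rest on the same established identity.
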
 
\begin{proof} 
Note that 
\[ \frac{ d } { dt } \left( \phi _t ^\ast \omega  \right) = \left.\frac{ d } { ds } \right| _{ s = t }  \phi _s ^\ast \omega=
	\left.\frac{ d } { ds } \right| _{ s = 0 }  \phi _t ^\ast \phi _s ^\ast  \omega=\phi _t ^\ast \left.\frac{ d } { ds } \right| _{ s = 0 }  \phi _s ^\ast \omega= \phi _t ^\ast \mathcal{L} _{ X _F ^c  } \omega \]
Hence, 
\begin{align*} 
	\frac{ d}{dt} (\phi ^\ast _t \omega ) & = \phi _t ^\ast \mathcal{L} _{ X _F ^c  } \omega\\
				    & = \phi _t ^\ast ( i _{ X _F ^c } d \omega + d (i _{ X _F ^c } \omega)  )\\
				    & = \phi _t ^\ast (0 + d (dF - c \theta ) )\\
				    & = \phi _t ^\ast (c\omega), \\
				    & = c\phi _t ^\ast (\omega)  
\end{align*} 
with the initial condition $ \phi _{ t = 0 } ^\ast \omega = \omega $. Hence, the unique solution of the intial value problem is $ \phi _t ^\ast \omega = e ^{ ct } \omega $. Using this we can compute the following

\begin{align*}  \frac{ d } { dt } \left( \phi _t ^\ast \Lambda\right) & = \phi _t ^\ast \mathcal{L} _{ X _F ^c  } \Lambda \\
& = \phi _t ^\ast \left( \frac{ (- 1) ^{ n /2 }} { n! } \mathcal{L} _{ X _F ^c } (\omega \wedge \cdots \wedge \omega )\right) \\
& = \phi _t ^\ast \left( \frac{ (- 1) ^{ n /2 }} { n! } \left((\mathcal{L} _{ X _F ^c } \omega) \wedge \cdots \wedge \omega+ \omega \wedge (\mathcal{L} _{ X_F ^c } \omega ) \wedge \cdots \wedge \omega + \omega \wedge \omega \wedge \cdots \wedge( \mathcal{L} _{ X _F ^c } \omega)  \right) \right) \\
& = 	\phi _t ^\ast \left( \frac{ (- 1) ^{ n  /2 }} { n! } (c+c+ 
	\cdots + c)  \omega\wedge \cdots \wedge \omega \right)\\
& = \phi _t ^\ast \left( n c \Lambda \right), 
\end{align*} 
which gives $ \phi _t ^\ast \Lambda  = e ^{ nct } \Lambda $. 
\end{proof} 
\begin{definition} 
	A vector field $ X ^c $ on an exact  symplectic manifold $ (M, \omega,\theta) $ is called {\bf locally conformally Hamiltonian} with parameter $ c \in \mathbb{R}  $  if the one-form   $ i _{ X ^c } \omega + c \theta $ is closed, that is $ d(i _{ X ^c } \omega + c \theta) = 0  $. 
\end{definition} 
\begin{remark}
Note that the concept of a locally conformally Hamiltonian vector field seems to be  closely related to the notion of dynamical similarity, introduced by David Sloan in \cite{sloan2018}.
\end{remark} 
\begin{proposition}A vector field $ X ^c $ on an exact  symplectic manifold $ (M, \omega,\theta) $ is  locally conformally Hamiltonian with parameter $ c \in \mathbb{R}   $ 	if and only if $ \mathcal{L} _{ X ^c } \omega = c \omega $.  
\end{proposition}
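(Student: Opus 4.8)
The plan is to reduce the closedness condition $d(i_{X^c}\omega + c\theta) = 0$ directly to the Lie-derivative condition by a single application of Cartan's magic formula, so that the equivalence falls out of one short exterior-calculus computation. First I would expand the differential using linearity of $d$, writing $d(i_{X^c}\omega + c\theta) = d(i_{X^c}\omega) + c\,d\theta$. At this point the sign convention of the excerpt enters: since $\omega = -d\theta$ we have $c\,d\theta = -c\omega$, and hence $d(i_{X^c}\omega + c\theta) = d(i_{X^c}\omega) - c\omega$.

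Next I would rewrite the remaining term $d(i_{X^c}\omega)$ by invoking Cartan's formula in the form $\mathcal{L}_{X^c}\omega = i_{X^c}(d\omega) + d(i_{X^c}\omega)$. Because $\omega$ is a symplectic form it is closed, so $d\omega = 0$ and the interior-product term vanishes, leaving $\mathcal{L}_{X^c}\omega = d(i_{X^c}\omega)$. Substituting this identity into the expression from the first step gives the single clean relation $d(i_{X^c}\omega + c\theta) = \mathcal{L}_{X^c}\omega - c\omega$.

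From here the stated equivalence is immediate and requires no further work: the one-form $i_{X^c}\omega + c\theta$ is closed precisely when $\mathcal{L}_{X^c}\omega - c\omega = 0$, that is, precisely when $\mathcal{L}_{X^c}\omega = c\omega$. Both directions of the ``if and only if'' are captured simultaneously by this identity, since it is an exact equality of two-forms rather than an implication.

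As for the main obstacle, I expect there to be essentially none of an analytic nature; the entire argument is a two-line computation. The only point demanding genuine care is the bookkeeping of signs, specifically applying the convention $\omega = -d\theta$ so that $c\,d\theta$ contributes $-c\omega$ and not $+c\omega$ — getting this wrong would produce the spurious condition $\mathcal{L}_{X^c}\omega = -c\omega$. It is also worth recording explicitly that the collapse of Cartan's formula depends on $\omega$ being closed, which is built into the definition of a symplectic form, so no additional hypotheses are invoked.
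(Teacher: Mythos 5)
Your proof is correct and follows essentially the same route as the paper: expand $d(i_{X^c}\omega + c\theta)$ by linearity, use Cartan's formula together with $d\omega = 0$ to identify $d(i_{X^c}\omega)$ with $\mathcal{L}_{X^c}\omega$, and use $d\theta = -\omega$ to obtain the identity $d(i_{X^c}\omega + c\theta) = \mathcal{L}_{X^c}\omega - c\omega$. If anything, your observation that both directions follow simultaneously from this single equality of two-forms is a slightly cleaner packaging than the paper's separate (and nearly identical) treatment of the two implications.
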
 
\begin{proof}
Since \( d \omega = 0 \), we have \( \mathcal{L}_{X^c} \omega = d (i_{X^c} \omega) + i_{X^c} (d \omega) = d (i_{X^c} \omega) \). 
Suppose,  $ d(i _{ X ^c } \omega + c \theta) = 0  $, then 
\[
d(i_{X^c} \omega + c \theta) = d (i_{X^c} \omega) + c d \theta = \mathcal{L}_{X^c} \omega - c \omega = 0.
\]
Conversely, if $ d(i_{X^c} \omega + c \theta) =0 $, then we have 
\[
0=d(i_{X^c} \omega + c \theta) = d (i_{X^c} \omega) + c d \theta = \mathcal{L}_{X^c} \omega - c \omega.
\]

Thus, the result follows.
\end{proof}

\begin{proposition}
	\label{prop:local}
The flow $ \phi _t $ of a vector field $ X ^c $ satisfies $ \phi ^\ast _t \omega = e ^{ ct }  \omega $ (that is,  $ \phi _t $ is a conformally symplectic transformations for each $ t $)  if and only if $ X ^c $ is locally conformally Hamiltonian with parameter $ c $. 
\end{proposition}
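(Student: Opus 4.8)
The plan is to reduce both implications to the infinitesimal characterization $\mathcal{L}_{X^c}\omega = c\omega$ established in the preceding proposition, together with the flow-derivative identity already derived in the proof of the conformally Hamiltonian flow proposition. Since $X^c$ is autonomous, its flow satisfies the one-parameter group law $\phi_{s+t} = \phi_s \circ \phi_t$, and the very same computation as before yields
\[
\frac{d}{dt}\bigl(\phi_t^*\omega\bigr) = \phi_t^*\,\mathcal{L}_{X^c}\omega,
\]
valid on the domain of the (possibly only local) flow. This identity is the engine for both directions of the equivalence.

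For the forward implication I would assume that $X^c$ is locally conformally Hamiltonian, which by the preceding proposition is equivalent to $\mathcal{L}_{X^c}\omega = c\omega$. Substituting this into the flow-derivative identity gives $\frac{d}{dt}\bigl(\phi_t^*\omega\bigr) = c\,\phi_t^*\omega$ with the initial datum $\phi_0^*\omega = \omega$. Viewing $t \mapsto \phi_t^*\omega$ as a curve of two-forms solving this linear ODE, uniqueness of solutions forces $\phi_t^*\omega = e^{ct}\omega$, which is precisely the stated conformal symplecticity.

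For the converse I would assume $\phi_t^*\omega = e^{ct}\omega$ and differentiate at $t = 0$. The left-hand derivative is $\mathcal{L}_{X^c}\omega$ by the definition of the Lie derivative, while the right-hand derivative of $e^{ct}\omega$ is $c\omega$; hence $\mathcal{L}_{X^c}\omega = c\omega$, and the preceding proposition returns that $X^c$ is locally conformally Hamiltonian. Equivalently, differentiating at an arbitrary $t$ gives $\phi_t^*\,\mathcal{L}_{X^c}\omega = \phi_t^*(c\omega)$, and applying the invertible pullback $(\phi_t^*)^{-1}$ recovers the same pointwise identity.

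There is essentially no serious obstacle here; the only points requiring care are that the flow-derivative identity relies on the one-parameter group property of the autonomous flow of $X^c$, and that in the converse one must extract $\mathcal{L}_{X^c}\omega = c\omega$ as a genuine equality of two-forms on all of $M$ — which the derivative at $t = 0$ supplies directly, since the Lie derivative is a local, pointwise operator.
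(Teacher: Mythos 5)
Your proposal is correct and follows essentially the same route as the paper: both directions reduce to the identity $\frac{d}{dt}(\phi_t^*\omega) = \phi_t^*\mathcal{L}_{X^c}\omega$ combined with the preceding proposition's characterization of locally conformally Hamiltonian fields via $\mathcal{L}_{X^c}\omega = c\omega$, solving the resulting linear ODE in one direction and differentiating the exponential in the other. The only cosmetic difference is that the paper's converse differentiates at arbitrary $t$ and cancels the pullback, whereas you also offer the cleaner differentiation at $t=0$; both are fine.
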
 
\begin{proof} 
If $X^c $ is a locally conformally Hamiltonian vector field then  
 \[\frac{ d}{dt} (\phi ^\ast _t \omega )  = \phi _t ^\ast \mathcal{L} _{ X ^c  } \omega = \phi _t ^\ast (c \omega)  \]
 which, as we have shown above, gives $ \phi _t ^\ast \omega = e ^{ ct } \omega $. Converserly suppose that $ \phi _t ^\ast \omega = e ^{ ct } \omega $. Then 
 \[\frac{ d}{dt} (\phi ^\ast _t \omega ) = \frac{ d } { dt } (e ^{ ct } \omega) =  e ^{ ct } (c\omega) = \phi _t ^\ast (c \omega)   \]
 then, since  $ \phi _t ^\ast (c \omega) =  \phi _t ^\ast \mathcal{L} _{ X ^c } \omega  $, we have  $ \mathcal{L} _{ X ^c } \omega = c\omega $. 
\end{proof} 

The following proposition offers a useful criterion for determining when a diffeomorphism is conformally symplectic. 
\begin{proposition}\label{prop:Jacobi-generalized}
	Let $ (M , \omega,\theta) $  an exact symplectic manifold and  $ f: M \to M $  a diffeomorphism. Then $ f $ is conformally symplectic with parameter $ c $ if and only if 
	\[ f ^\ast X _H =  \frac{ 1 } { c } X _{ H \circ f } \]
for all function $ H $ on $ M $. 
\end{proposition}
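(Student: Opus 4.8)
The plan is to exploit the defining relation $i_{X_H}\omega = dH$ for the ordinary Hamiltonian vector field $X_H$ together with the naturality of the interior product under the diffeomorphism $f$. The computational engine behind both implications is the single identity
\[
i_{f^*X_H}(f^*\omega) = f^*(i_{X_H}\omega) = f^*(dH) = d(H\circ f),
\]
valid for any diffeomorphism $f$ and any function $H$; here I use that $f^*$ commutes with the interior product (for a diffeomorphism one has $f^*(i_X\alpha) = i_{f^*X}(f^*\alpha)$) and with the exterior derivative, together with $f^*H = H\circ f$.

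For the forward direction I would assume $f^*\omega = c\omega$. Substituting into the displayed identity gives $c\, i_{f^*X_H}\omega = d(H\circ f)$, hence $i_{f^*X_H}\omega = \tfrac1c\, d(H\circ f) = i_{\frac1c X_{H\circ f}}\omega$, where the last equality uses the definition of $X_{H\circ f}$. Since $\omega$ is non-degenerate, the map $Y\mapsto i_Y\omega$ is injective, so $f^*X_H = \tfrac1c X_{H\circ f}$.

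For the converse I would assume $f^*X_H = \tfrac1c X_{H\circ f}$ for every $H$ and feed this into the same identity, obtaining $\tfrac1c\, i_{X_{H\circ f}}(f^*\omega) = d(H\circ f) = i_{X_{H\circ f}}\omega$, i.e. $i_{X_{H\circ f}}(f^*\omega - c\omega) = 0$ for all $H$. The main obstacle, and the only genuinely substantive point, is to deduce $f^*\omega - c\omega = 0$ from this family of relations. This rests on two observations: since $f$ is a diffeomorphism, as $H$ ranges over $C^\infty(M)$ so does $H\circ f$ (given $G$, take $H = G\circ f^{-1}$); and at each point $p$ the Hamiltonian vector fields $X_G(p)$ span $T_pM$, because $dG(p)$ may be any covector and $\omega$ identifies covectors with tangent vectors. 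Consequently the vectors $X_{H\circ f}(p)$ exhaust $T_pM$, and a $2$-form whose first-slot contraction with every tangent vector vanishes must be zero; therefore $f^*\omega = c\omega$, completing the equivalence.
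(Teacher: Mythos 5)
Your proof is correct and follows essentially the same route as the paper's: both arguments rest on pulling back the defining relation $i_{X_H}\omega = dH$ through $f$ and invoking the non-degeneracy of $\omega$; your packaged identity $i_{f^*X_H}(f^*\omega) = d(H\circ f)$ is just the paper's pointwise computation written intrinsically. The one place you add genuine content is the converse, which the paper dismisses with ``reverse the steps'': you correctly isolate the substantive point, namely that since $f$ is a diffeomorphism the functions $H\circ f$ exhaust $C^\infty(M)$ and the vectors $X_G(p)$ span $T_pM$, so the family of relations $i_{X_{H\circ f}}(f^*\omega - c\omega)=0$ forces $f^*\omega = c\omega$; this is exactly the justification the paper leaves implicit.
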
 
\begin{proof}Suppose  $ f $ is conformally symplectic, that is, $ f ^\ast \omega = c\omega $, then 
	\begin{align*} d (H \circ f) (z) \cdot v & = \omega  (z) (X _{ H \circ f } (z) , v)  = \frac{ 1 } { c } (f ^\ast \omega) (z)  (X _{ H \circ f } (z) , v)\\
		& =  \frac{ 1 } { c } \omega (f(z))  (f _\ast X _{ H \circ f } (z) , f_\ast v)    \\
		& =  \omega (f(z))  \left( \frac{ 1 } { c } f _\ast X _{ H \circ f } (z) , f_\ast v\right)   
	\end{align*} 
On the other hand, by the chain rule 
\[ d (H \circ f) (z) \cdot v = dH (f (z)) \cdot [d f(z) \cdot  v ] = dH (f (z)) \cdot [f _\ast(z) v ] = \omega (f (z)) (   X _H (f (z)) ,   f _\ast v)     \]
Since $ \omega $ is non-degenerate it follows that $ \frac{ 1 } { c } f _\ast X _{ H \circ f } (z) =  X _H (f (z)) $, or 
\[
	f ^\ast X _H ( f (z)) = \frac{ 1 } { c } X _{ H \circ f } (z).
\]
The converse can be demonstrated by reversing the steps of the argument.
\end{proof} 
The following corollary follows directly from the above proposition, the fact that \( H \circ f = f^\ast H = bH \), and the principle of linearity.
\begin{corollary}\label{cor:f-ast-X_H}
	Let $ (M , \omega,\theta) $  an exact  symplectic manifold,   $ f: M \to M $    conformally symplectic with parameter $ c $, and $ H $ is a function on $ M $ that is conformally invariant with parameter $ b $ (i.e. $ f ^\ast H = b H  $) then 
	\[ f ^\ast X _H =  \frac{ b } { c } X _{ H  }. \]
\end{corollary}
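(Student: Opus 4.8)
The plan is to deduce this directly from Proposition~\ref{prop:Jacobi-generalized}, using only the conformal invariance hypothesis and the linearity of the assignment $H \mapsto X_H$. First I would invoke Proposition~\ref{prop:Jacobi-generalized}: since $f$ is conformally symplectic with parameter $c$, it holds that
\[
f^\ast X_K = \frac{1}{c}\, X_{K \circ f}
\]
for \emph{every} smooth function $K$ on $M$. In particular I apply this identity with $K = H$, obtaining $f^\ast X_H = \tfrac{1}{c}\, X_{H \circ f}$.

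Next I would rewrite the pullback function appearing on the right. By definition of pullback of a function, $H \circ f = f^\ast H$, and the conformal invariance hypothesis gives $f^\ast H = bH$. Substituting yields $X_{H \circ f} = X_{bH}$.

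The only remaining point is the linearity of the map $H \mapsto X_H$: since the ordinary Hamiltonian vector field is defined by $i_{X_H}\omega = dH$ with $\omega$ fixed and non-degenerate, and the exterior derivative $d$ is $\mathbb{R}$-linear, the scalar $b$ factors out, so $X_{bH} = b\, X_H$. Chaining the three steps gives
\[
f^\ast X_H = \frac{1}{c}\, X_{H \circ f} = \frac{1}{c}\, X_{bH} = \frac{b}{c}\, X_H,
\]
which is the claim. There is no genuine obstacle here; the statement is a formal consequence of the preceding proposition. The only subtlety worth flagging explicitly is that $X_H$ denotes the \emph{ordinary} Hamiltonian vector field (defined by $i_{X_H}\omega = dH$), not the conformal one $X_H^c$, so that the linearity $X_{bH} = b X_H$ holds without any correction term from $\theta$.
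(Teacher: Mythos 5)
Your proof is correct and follows exactly the route the paper takes: it derives the corollary from Proposition~\ref{prop:Jacobi-generalized} applied to $H$, then substitutes $H \circ f = f^\ast H = bH$ and pulls the scalar $b$ out by linearity of $H \mapsto X_H$. The remark about $X_H$ being the ordinary (not conformal) Hamiltonian vector field, so that no $\theta$-correction interferes with linearity, is a sensible clarification consistent with the paper's usage.
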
 

\section{Cotangent Bundles and Scaled Cotangent Lift}\label{sec:cotangent-bundle}


In many applications, especially in classical mechanics, the phase space is often modeled as the cotangent bundle \( M = T^\ast Q \) of a configuration space \( Q \). The cotangent bundle \( M \) is naturally equipped with a canonical 1-form, denoted \( \theta_0 \).

At any point \( z = (q, p) \in M \), where \( q \in Q \) and \( p \in T_q^*Q \), the natural projection map \( \pi: M \to Q \) takes a point  $ z $ in the cotangent bundle to its corresponding point $ q $  in the configuration space. The differential of this projection at \( z \), denoted \( d\pi_z: T_zM \to T_qQ \), maps a tangent vector \( v \in T_zM \) to a tangent vector \( d\pi_z(v) \in T_qQ \).

The  1-form \( \theta_0 \), evaluated at \( z \) and acting on \( v \in T_zM \), is defined as:
\[
(\theta_0)_z(v) = \langle p, d\pi_z(v) \rangle,
\]
where \( p \in T_q^*Q \) is the covector at the base point \( q \), and \( \langle p, d\pi_z(v) \rangle \) represents the natural pairing between the covector \( p \) and the tangent vector \( d\pi_z(v) \).

Alternatively, \( (\theta_0)_z \) can be expressed as:
\[
(\theta_0)_z = (d\pi_z)^*p,
\]
where \( (d\pi_z)^*: T_q^*Q \to T_z^*M \) is the  transpose of the map \( d\pi_z: T_zM \to T_qQ \). 


The one-form \( \theta_0 \) is called the \textbf{Liouville one-form} or \textbf{canonical one-form}. The exterior derivative \( \omega_0 = -d\theta_0 \) is a symplectic form on \( M \). The forms \( \theta_0 \) and \( \omega_0 \) are known as the \textbf{canonical forms} on \( M = T ^\ast Q \). In this paper, we will always consider \( T^\ast Q \) with the canonical forms, treating \( T^\ast Q \) as the exact symplectic manifold \( (T^\ast Q, \omega_0, \theta_0) \).

In the case of a cotangent bundle $T^\ast Q$ with coordinates $\mathbf{q}$ on $Q$ and $(\mathbf{q}, \mathbf{p})$ on $M = T^\ast Q$, the canonical forms are $\omega_0 = d\mathbf{q} \wedge d\mathbf{p}$ and $\theta_0 = \mathbf{p} \, d\mathbf{q}$. It is then possible to express the conformal Hamiltonian equations in a more concrete form, as shown in the next proposition.

\begin{proposition} Let $ M = T ^\ast Q $, and let  $ (\mathbf{q} , \mathbf{p})  $ be canonical coordinates for $ \omega_0 $, so the canonical forms are $ \omega_0 = d \mathbf{q}  \wedge d \mathbf{p} $  and $ \theta_0 = \mathbf{p} d \mathbf{q} $. Suppose  $ X _{ F } ^c $ is a conformally Hamiltonian vector field with parameter $ c $, and $F$ the associated conformal Hamiltonian. Then,  in these coordinates, we have 
	\begin{equation}\label{eqn:XF} X _{ F } ^c = \left( \frac{ \partial F } { \partial \mathbf{p} } , - \frac{ \partial F } { \partial \mathbf{q} } +c \mathbf{p}  \right) 
	\end{equation}
Moreover,  $ (\mathbf{q} (t) , \mathbf{p} (t)) $ is an integral curve of $ X _F ^c $ if and only if 
\[ \dot{\mathbf{q}}= \frac{ \partial F } { \partial \mathbf{p} } , \quad \dot{\mathbf{p}} = - \frac{ \partial F } { \partial \mathbf{q} } + c \mathbf{p} \]
\end{proposition}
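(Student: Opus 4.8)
The plan is to work directly from the defining equation of a conformally Hamiltonian vector field, namely $i_{X_F^c}\omega_0 + c\theta_0 = dF$, and simply expand both sides in the canonical coordinates $(\mathbf{q},\mathbf{p})$. First I would write the unknown vector field with undetermined components as $X_F^c = A\,\partial_{\mathbf{q}} + B\,\partial_{\mathbf{p}}$, where $A$ and $B$ are the (vector-valued) $\mathbf{q}$- and $\mathbf{p}$-components to be determined. The goal is to show $A = \partial F/\partial\mathbf{p}$ and $B = -\partial F/\partial\mathbf{q} + c\mathbf{p}$, which is exactly the claimed formula \eqref{eqn:XF}.

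The key computation is the interior product $i_{X_F^c}\omega_0$. Using $\omega_0 = d\mathbf{q}\wedge d\mathbf{p} = \sum_i dq^i\wedge dp_i$ together with the Leibniz rule $i_X(\alpha\wedge\beta) = (i_X\alpha)\beta - \alpha(i_X\beta)$ and the contractions $i_{X_F^c}dq^i = A^i$, $i_{X_F^c}dp_i = B_i$, I would obtain $i_{X_F^c}\omega_0 = A\,d\mathbf{p} - B\,d\mathbf{q}$. Adding $c\theta_0 = c\mathbf{p}\,d\mathbf{q}$ then gives the left-hand side as $A\,d\mathbf{p} + (-B + c\mathbf{p})\,d\mathbf{q}$. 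On the right-hand side I expand $dF = \frac{\partial F}{\partial\mathbf{q}}\,d\mathbf{q} + \frac{\partial F}{\partial\mathbf{p}}\,d\mathbf{p}$.

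Since $\{d\mathbf{q}, d\mathbf{p}\}$ is a coframe, matching the coefficients of $d\mathbf{p}$ and $d\mathbf{q}$ (equivalently, invoking non-degeneracy of $\omega_0$) yields $A = \partial F/\partial\mathbf{p}$ and $-B + c\mathbf{p} = \partial F/\partial\mathbf{q}$, hence $B = -\partial F/\partial\mathbf{q} + c\mathbf{p}$, establishing \eqref{eqn:XF}. The integral-curve statement is then immediate: a curve $(\mathbf{q}(t),\mathbf{p}(t))$ is an integral curve of $X_F^c$ precisely when $(\dot{\mathbf{q}},\dot{\mathbf{p}}) = X_F^c(\mathbf{q},\mathbf{p})$, and substituting the components just found reproduces the displayed system $\dot{\mathbf{q}} = \partial F/\partial\mathbf{p}$, $\dot{\mathbf{p}} = -\partial F/\partial\mathbf{q} + c\mathbf{p}$.

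There is no deep obstacle here; the proof is entirely a coordinate computation. The only point demanding care is bookkeeping of signs in the interior product and the wedge-product convention for $\omega_0$ (together with the convention $\omega_0 = -d\theta_0$), since a sign slip there would shift the $c\mathbf{p}$ term. I would therefore state the contraction conventions explicitly and verify the $c\theta_0$ contribution lands in the $d\mathbf{q}$-component, which is what produces the extra $c\mathbf{p}$ compared to the classical Hamiltonian equations.
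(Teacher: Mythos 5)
Your proposal is correct and is essentially the same computation as the paper's proof: both expand the defining relation $i_{X_F^c}\omega_0 + c\theta_0 = dF$ in the canonical coframe $\{d\mathbf{q}, d\mathbf{p}\}$ using the same interior-product identities. The only (cosmetic) difference is direction — you solve for the unknown components $A, B$ by matching coefficients, while the paper verifies that the stated formula satisfies the equation, implicitly relying on uniqueness from nondegeneracy of $\omega_0$.
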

\begin{proof}
	Let $ X _F ^c $ be defined by equation \eqref{eqn:XF}. We then have to verify that $ i _{ X _F ^c } \omega_0 + c \theta_0 = dF $.  Now $ i _{ X _F ^c } d \mathbf{q} = \frac{ \partial F } { \partial \mathbf{p} } $, $ i _{ X _F ^c } d \mathbf{p} = - \frac{ \partial F } { \partial \mathbf{q} } + c \mathbf{p}  $ by construction, so 
	\begin{align*} 
		i _{ X _F } ^c \omega_0 + c \theta_0  & = i _{ X _F ^c } (d \mathbf{q} \wedge d \mathbf{p} )= (i _{ X _F ^c}  d \mathbf{q}) \wedge d \mathbf{p}  - d \mathbf{q} \wedge ( i _{ X _F  ^c } d \mathbf{p})  + c \left( \mathbf{p}  d\mathbf{q}\right)   \\
						  & = \frac{ \partial F } { \partial \mathbf{p} } d \mathbf{p} + \left( \frac{ \partial F } { \partial \mathbf{q} } - c \mathbf{p}  + c \mathbf{p} \right) d \mathbf{q} = dF.
	\end{align*} 
\end{proof} 
Although the primary focus of this paper is not the study of dynamical systems with friction, it is worth noting that the formalism of conformal Hamiltonian systems can be effectively applied to describe such systems, as demonstrated in \cite{mclachlan2001conformal} (see also \cite{sloan2021scale}) and illustrated in the following example.


\begin{example}


Consider the following time-dependent Hamiltonian function:
\[ 
H(q, p, t) = e^{-bt} \frac{p^2}{2} + e^{bt} U(q), 
\]
which describes a dynamical system with friction. The Hamiltonian equations are:
\[ 
\dot{q} = \frac{\partial H}{\partial p} = e^{-bt} p, \quad \dot{p} = -\frac{\partial H}{\partial q} = -e^{bt} \frac{\partial U}{\partial q}, 
\]
so that Newton's equation of motion is:
\[ 
\ddot{q} = -b \dot{q} - \frac{\partial U}{\partial q}.
\]
Now we show that the same equation can be obtained using the conformal Hamiltonian formalism with a time independent Hamiltonian. 
Consider the function \( F = \frac{p^2}{2} + U(q) \). If this is a conformal Hamiltonian with parameter \( c = -b \), we find that the corresponding differential equations are:
\[ 
\dot{q} = \frac{\partial F}{\partial p} = p, \quad \dot{p} = -\frac{\partial F}{\partial q} = -bp - \frac{\partial U}{\partial q}, 
\]
which yields the same Newton's equation as the Hamiltonian \( H \). Moreover, since in this case  \( X_F^c = (p, -bp - \frac{\partial U}{\partial q}) \), \( \theta_0 = p \, dq \), and \( c = -b \), then by Proposition \ref{prop:dF/dt} we have: 
\[ 
\frac{dF}{dt} = -b i_{X_F^c} \theta = -bp^2. 
\]
\end{example}

We now introduce an important method for generating conformal symplectic transformations on cotangent bundles by generalizing the concept of cotangent lifts.

Let \( Q_1 \) and \( Q_2 \) be manifolds, and let \( f: Q_1 \to Q_2 \) be a diffeomorphism. Recall that the \textbf{cotangent lift} \( \hat{f}: T^*Q_1 \to T^*Q_2 \) is defined as follows:

For each point \( z_1 = (q_1, p_1) \in T^*Q_1 \), the cotangent lift \( \hat{f} \) maps \( z_1 \) to a point \( z_2 = (q_2, p_2) \in T^*Q_2 \), where:
\[
z_2=\hat{f}(z_1)  = \left( f(q_1), (df^{-1}_{f(q_1)})^*(p_1) \right),
\]
where 
\begin{itemize} 
	\item 	\( df^{-1}_{f(q_1)}: T_{f(q_1)}Q_2 \to T_{q_1}Q_1 \) is the differential of the inverse map \( f^{-1} \) evaluated at \( f(q_1) \), 
	\item \( (df^{-1}_{f(q_1)})^*: T_{q_1}^* Q_1 \to T_{f(q_1)}^* Q_2 \) is the transpose of $ df^{-1}_{f(q_1)}$, which maps the covector \( p_1 \in T_{q_1}^*Q_1 \) to a covector in \( T_{f(q_1)}^*Q_2 \).
\end{itemize} 

To extend this idea, we introduce a scaled version of the cotangent lift that includes a scaling factor:

\begin{definition}
Let \( Q_1 \) and \( Q_2 \) be manifolds, and let \( f: Q_1 \to Q_2 \) be a diffeomorphism. The \textbf{scaled cotangent lift} \( \hat{f}^\lambda : T^*Q_1 \to T^*Q_2 \), with a scaling parameter \( \lambda \in \mathbb{R} \), is defined by:
\[
\hat{f}^\lambda(z_1) = \left( f(q_1), \lambda \cdot (df^{-1}_{f(q_1)})^*(p_1) \right),
\]
for each \( z_1 = (q_1, p_1) \in T^*Q_1 \).
\end{definition}

In coordinates \((\mathbf{q}, \mathbf{p})\) on $ T ^\ast Q _1 $, the scaled cotangent lift is expressed as:
\begin{equation}\label{eqn:cotangent-lift-in-coordinates} 
\hat{f}^\lambda (\mathbf{q}, \mathbf{p}) = \left( f(\mathbf{q}), \lambda (Df(\mathbf{q}))^{-T} \cdot \mathbf{p} \right),
\end{equation}
where \( Df(\mathbf{q}) \) is the Jacobian matrix of \( f \) evaluated at \( \mathbf{q} \), and \( Df(\mathbf{q})^{-T} \) denotes its inverse transpose. This explicit coordinate expression simplifies the application of the scaled cotangent lift in concrete problems by providing a direct representation of the transformation.

This construction is important because it ensures that $ \hat f ^{ \lambda }  $ is conformally symplectic, as it is shown in the following two results.

\begin{proposition}\label{prop:canonical-one-form-preserved}
	Let \( f : Q_1 \to Q_2 \) be a diffeomorphism and let  \( \hat f ^{ \lambda }  : T^\ast Q_1 \to T^\ast Q_2 \) be its  scaled cotangent lift. Let   \( \theta_{Q_1} \) and \( \theta_{Q_2} \) be the canonical one-forms on \( T^\ast Q_1 \) and \( T^\ast Q_2 \), respectively. Then \( \hat f ^{ \lambda }  \) preserves the canonical one-forms up to a scale factor, that is,
\[ 
	 (\hat f ^{ \lambda }) ^\ast \theta_{Q_2} = \lambda \theta_{Q_1}, 
\]
where \( \lambda \) is a nonzero scalar.
\end{proposition}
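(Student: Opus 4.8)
The plan is to verify the identity pointwise by unwinding the definitions of the pullback, the scaled cotangent lift, and the canonical one-form. Fix $\alpha \in T_q^\ast Q_1$ and a tangent vector $w \in T_\alpha(T^\ast Q_1)$. By definition of the pullback,
\[
\langle (\hat f^{\lambda})^\ast \theta_{Q_2}(\alpha), w\rangle = \langle \theta_{Q_2}(\hat f^{\lambda}(\alpha)), T\hat f^{\lambda}\cdot w\rangle,
\]
and the goal is to show the right-hand side equals $\lambda\langle\theta_{Q_1}(\alpha), w\rangle = \lambda\langle\alpha, T\pi_{Q_1}\cdot w\rangle$.

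First I would apply the defining formula for the canonical one-form on $T^\ast Q_2$ to rewrite the right side as $\langle \hat f^{\lambda}(\alpha),\, T\pi_{Q_2}\cdot T\hat f^{\lambda}\cdot w\rangle$. The crucial structural fact — and the step I expect to require the most care to state cleanly — is that the scaled cotangent lift covers the base diffeomorphism, i.e. $\pi_{Q_2}\circ \hat f^{\lambda} = f\circ\pi_{Q_1}$. This follows directly from the definition, since $\hat f^{\lambda}(\alpha)$ is a covector based at $f(q)$ whenever $\alpha$ is based at $q$. Differentiating this relation gives $T\pi_{Q_2}\cdot T\hat f^{\lambda} = Tf\cdot T\pi_{Q_1}$, which lets me replace $T\pi_{Q_2}\cdot T\hat f^{\lambda}\cdot w$ by $Tf\cdot T\pi_{Q_1}\cdot w$, a vector tangent to $Q_2$ at $f(q)$.

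Next I would invoke the definition of $\hat f^{\lambda}$ itself. Setting $v = Tf\cdot T\pi_{Q_1}\cdot w \in T_{f(q)}Q_2$, the defining relation $\langle\hat f^{\lambda}(\alpha), v\rangle = \lambda\langle\alpha, Tf^{-1}\cdot v\rangle$ produces exactly the cancellation $Tf^{-1}\cdot Tf = \mathrm{id}$ on $T\pi_{Q_1}\cdot w$, so the expression collapses to $\lambda\langle\alpha, T\pi_{Q_1}\cdot w\rangle$. Recognizing the right-hand factor as $\langle\theta_{Q_1}(\alpha), w\rangle$ by the definition of the canonical one-form on $T^\ast Q_1$ completes the identification, and since $\alpha$ and $w$ were arbitrary I conclude $(\hat f^{\lambda})^\ast\theta_{Q_2} = \lambda\theta_{Q_1}$.

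Finally, I would record an alternative, more modular route that avoids re-deriving the base computation: writing $\hat f^{\lambda} = m_\lambda\circ\hat f$, where $\hat f$ is the standard cotangent lift and $m_\lambda : T^\ast Q_2 \to T^\ast Q_2$ is fiberwise multiplication by $\lambda$, one has $(\hat f^{\lambda})^\ast\theta_{Q_2} = \hat f^\ast(m_\lambda^\ast\theta_{Q_2})$. Since $\pi_{Q_2}\circ m_\lambda = \pi_{Q_2}$, a one-line computation gives $m_\lambda^\ast\theta_{Q_2} = \lambda\theta_{Q_2}$, and combining this with the classical identity $\hat f^\ast\theta_{Q_2} = \theta_{Q_1}$ yields the claim at once. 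I would present the direct computation as the main argument, with this factorization as a remark, since it isolates the only genuinely new ingredient — the fiber-scaling by $\lambda$ — from the standard naturality of the tautological one-form.
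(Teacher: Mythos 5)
Your main argument is correct, but it takes a different route from the paper. You prove the identity from first principles: you observe that the scaled lift covers $f$, i.e.\ $\pi_{Q_2}\circ \hat f^{\lambda} = f\circ \pi_{Q_1}$, differentiate this, and then let the defining relation $\langle \hat f^{\lambda}(\alpha), v\rangle = \lambda\langle \alpha, Tf^{-1}\cdot v\rangle$ produce both the factor $\lambda$ and the cancellation $Tf^{-1}\cdot Tf = \mathrm{id}$. This is a complete, self-contained proof that does not rely on the classical naturality of the tautological one-form. The paper instead uses only your ``alternative'' route: it factors $\hat f^{\lambda} = \Lambda\circ \hat f$ with $\Lambda$ the fiberwise multiplication by $\lambda$, cites the standard identity $\hat f^{\ast}\theta_{Q_2} = \theta_{Q_1}$ from Libermann--Marle, and reduces the problem to showing $\Lambda^{\ast}\theta_{Q_2} = \lambda\theta_{Q_2}$. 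Your treatment of that last step is actually the more careful one: you derive it from $\pi_{Q_2}\circ \Lambda = \pi_{Q_2}$ together with fiberwise linearity of the pairing, whereas the paper's chain of equalities passes through the assertion $\Lambda_{\ast}v = \lambda v$, which is not literally true for a general tangent vector $v$ to $T^{\ast}Q_2$ (only the vertical component scales; the correct statement is $T\pi_{Q_2}\cdot T\Lambda\cdot v = T\pi_{Q_2}\cdot v$ while the base point moves to $\lambda z$, which is exactly what your version isolates). In short: the paper's proof is shorter because it outsources the base case to a cited theorem; your direct computation buys self-containedness and makes transparent where the single factor of $\lambda$ enters, and your remark recovers the paper's modular argument in a cleaner form.
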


\begin{proof}
We aim to show that \( ((\hat{f}^\lambda)^* \theta_{Q_2})_{z_1} = \lambda (\theta_{Q_1})_{z_1} \) for any point \( z_1 = (q_1, p_1) \in T^*Q_1 \).

We will use the following facts in the proof:

\begin{itemize}
    \item Let \( z_2 = \hat{f}^\lambda(z_1) = (q_2, p_2) \in T^*Q_2 \). By the definition of the scaled cotangent lift \( \hat{f}^\lambda \), we have:
    \[
    q_2 = f(q_1), \quad \text{and} \quad (df_{q_1})^\ast p_2 = \lambda p_1.
    \]
    
    \item The canonical 1-forms \( \theta_{Q_1} \) and \( \theta_{Q_2} \) are defined as:
    \[
    (\theta_{Q_1})_{z_1} = (d\pi_1)^\ast_{z_1} p_1, \quad (\theta_{Q_2})_{z_2} = (d\pi_2)^\ast_{z_2} p_2,
    \]
    where \( \pi_i: T^*Q_i \to Q_i \)  ( $ i = 1, 2 $)   are the natural projections \( \pi_i(q_i, p_i) = q_i \).
    
    \item For any \( z_1 = (q_1, p_1) \in T^*Q_1 \), we have:
    \[
    \pi_2 \circ \hat{f}^\lambda(z_1) = \pi_2(z_2) = q_2 = f(q_1) = f \circ \pi_1(z_1),
    \]
    so \( \pi_2 \circ \hat{f}^\lambda = f \circ \pi_1 \).
    
    \item By the definition of the pullback, we know:
    \[
    ((\hat{f}^\lambda)^* \theta_{Q_2})_{z_1} = (d\hat{f}^\lambda)^\ast_{z_1} (\theta_{Q_2})_{(z_2},
    \]
    where \( (d\hat{f}^\lambda)^\ast_{z_1}: T^*_{z_2}Q_2 \to T^*_{z_1}Q_1 \) is the transpose of the differential \( d\hat{f}^\lambda_{z_1} \).
\end{itemize}

Now, we can proceed with the calculation:

\begin{align*}
    ((\hat{f}^\lambda)^* \theta_{Q_2})_{z_1} & = (d\hat{f}^\lambda)^\ast_{z_1} (\theta_{Q_2})_{z_2} \\
    & = (d\hat{f}^\lambda)^\ast_{z_1} (d\pi_2)^\ast_{z_2} p_2 \quad \text{(since \( (\theta_{Q_2})_{z_2} = (d\pi_2)^\ast_{z_2} p_2 \))} \\
    & = (d(\pi_2 \circ \hat{f}^\lambda))^\ast_{z_1} p_2 \quad \text{(by the properties of the pullback)} \\
    & = (d(f \circ \pi_1))^\ast_{z_1} p_2 \quad \text{(since \( \pi_2 \circ \hat{f}^\lambda = f \circ \pi_1 \))} \\
    & = (d\pi_1)^\ast_{z_1} (df)^\ast_{q_1} p_2 \quad \text{(by the properties of the pullback)} \\
    & = (d\pi_1)^\ast_{z_1} (\lambda p_1) \quad \text{(since \( (df)^\ast_{q_1} p_2 = \lambda p_1 \))} \\
    & = \lambda (d\pi_1)^\ast_{z_1} p_1 \quad \text{(by linearity)} \\
    & = \lambda (\theta_{Q_1})_{z_1} \quad \text{(since \( (\theta_{Q_1})_{z_1} = (d\pi_1)^\ast_{z_1} p_1 \))}.
\end{align*}

Thus, we conclude that:
\[
(\hat{f}^\lambda)^* \theta_{Q_2} = \lambda \theta_{Q_1}.
\]
\end{proof}

\begin{corollary}\label{cor:symplectic-form-preserved}
	Let \( f : Q_1 \to Q_2 \) be a diffeomorphism and let  \( \hat f ^{ \lambda }  : T^\ast Q_1 \to T^\ast Q_2 \) be its  scaled cotangent lift. Let $ \omega _{ Q _1 } $ and $ \omega _{ Q _2 } $ be the  canonical symplectic forms on $ T ^\ast Q _1 $ and $ T ^\ast Q _2 $, respectively. Then $ \hat f  ^{ \lambda } $ is conformally symplectic, that is $ (\hat f  ^{ \lambda })^\ast \omega _{ Q _2 } = \lambda  \omega _{ Q _1 } $. 
\end{corollary}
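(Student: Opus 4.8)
The plan is to derive this corollary directly from Proposition~\ref{prop:canonical-one-form-preserved} by applying the exterior derivative and exploiting the naturality of $d$ (i.e., that pullback by a smooth map commutes with the exterior derivative). First I would recall that the canonical symplectic forms are defined by $\omega_{Q_1} = -d\theta_{Q_1}$ and $\omega_{Q_2} = -d\theta_{Q_2}$, so that the claim reduces entirely to controlling the pullback of $d\theta_{Q_2}$ under $\hat{f}^\lambda$.

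Next I would compute directly:
\[
(\hat{f}^\lambda)^\ast \omega_{Q_2} = -(\hat{f}^\lambda)^\ast d\theta_{Q_2} = -d\bigl((\hat{f}^\lambda)^\ast \theta_{Q_2}\bigr) = -d(\lambda \theta_{Q_1}) = -\lambda\, d\theta_{Q_1} = \lambda\, \omega_{Q_1},
\]
where the second equality uses the naturality of the exterior derivative, the third invokes Proposition~\ref{prop:canonical-one-form-preserved}, and the fourth uses that $\lambda$ is a constant scalar and hence factors out of $d$. This establishes that $\hat{f}^\lambda$ is conformally symplectic with parameter $\lambda$.

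Since the argument is a one-line consequence of the naturality of $d$ together with the already-established behavior of $\hat{f}^\lambda$ on the canonical one-form, there is no substantive obstacle here. The only point requiring any care is to treat $\lambda$ as a constant so that $d(\lambda \theta_{Q_1}) = \lambda\, d\theta_{Q_1}$, which holds by the hypothesis that $\lambda$ is a nonzero scalar; this is precisely the scaled analogue of the classical fact that an ordinary cotangent lift preserves the canonical symplectic form exactly.
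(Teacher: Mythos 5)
Your proposal is correct and follows exactly the same route as the paper's own proof: apply the exterior derivative to the identity $(\hat f^{\lambda})^\ast \theta_{Q_2} = \lambda \theta_{Q_1}$ from Proposition~\ref{prop:canonical-one-form-preserved}, use that pullback commutes with $d$, and pull the constant $\lambda$ out. Nothing is missing.
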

\begin{proof}
 Let   \( \theta_{Q_1} \) and \( \theta_{Q_2} \) be the canonical one-forms on \( T^\ast Q_1 \) and \( T^\ast Q_2 \), respectively. By Proposition \ref{prop:canonical-one-form-preserved}
 we have that 	$(\hat f ^{ \lambda }) ^\ast \theta_{Q_2} = \lambda \theta_{Q_1}$. Taking the exterior derivative and changing the sign we get
 \[ -d ( (\hat f ^{ \lambda }) ^\ast \theta_{Q_2}) =- d ( \lambda  \theta_{Q_1}).\]
 Since the exterior derivative commutes with the pull-back, we get
 \[(\hat f ^{ \lambda }) ^\ast (- d ( \theta_{Q_2})) = \lambda (-d (   \theta_{Q_1})).\]
 The proof follows since $ \omega _{ Q _1 } = -d (\theta _{ Q _1 }) $ and $ \omega _{ Q _2 } = - d (\theta _{ Q _2 })  $.  
\end{proof}



\section{Conformally Symplectic Lie Group Actions}
Let \( G \) be a Lie group, and let \( \mathfrak{g} \) be its Lie algebra. Consider the action \( \Phi: G \times M \to M \) of \( G \) on \( M \). When \( g \) is fixed, we denote by \( \Phi_g: M \to M \) the map defined by \( \Phi_g(x) = \Phi(g, x) \). For each \( g \in G \), the map \( \Phi_g \) is a diffeomorphism of \( M \) whose inverse is given by \( \Phi_{g^{-1}} \). 

With these notations, we have that 
\[
\phi_t = \Phi_{\exp(t\xi)}
\]
is a flow on \( M \), where \( \exp: \mathfrak{g} \to G \) is the exponential map.\footnote{It is standard to use \( \exp(x) \) to denote the exponential map and \( e^x \) to represent the usual exponential function on real numbers. Since much of this paper concerns the multiplicative group \( \mathbb{R}^{+} \), we will alternate between \( \exp(x) \) and \( e^x \) to refer to the exponential map of the group  \( \mathbb{R}^{+} \), depending on which notation is more convenient in a given context.
}  To each element \( \xi \in \mathfrak{g} \), there is an associated vector field on \( M \), which we denote by \( \xi_M \), and call the {\bf infinitesimal generator} of the action corresponding to $ \xi $. This vector field is defined as follows:
\[
\xi_M(x) = \left. \frac{d}{d\tau} \right|_{\tau=0} \Phi_{\exp(\tau\xi)}(x).
\]

For a given point \( x \in M \), the group orbit through \( x \) is the set 
\[
\operatorname{Orb}(x) = \{\Phi_g(x) \mid g \in G\}.
\]
We denote by \( \operatorname{Ad}_g(\xi) \) the adjoint action of \( g \) on \( \xi \), which in the case of matrix groups is given by
\[
	\operatorname{Ad}_g(\xi) =  \left. \frac{ d } { dt } \right | _{ t = 0 } g\, \exp(t \xi) \,g ^{ - 1 }   =g \,\xi\, g^{-1}.
\]
A useful formula, needed later, is given in the following proposition. See Lemma 9.3.7 in \cite{marsden2013introduction} for a proof.

\begin{proposition}\label{prop:Ad}
    Let \( \Phi \) be a left action of a group \( G \) on \( M \). Then for every \( g \in G \) and \( \xi \in \mathfrak{g} \),
    \begin{equation}
    (\operatorname{Ad}_g \xi)_M = \Phi^*_{g^{-1}} \xi_M.
    \end{equation}
\end{proposition}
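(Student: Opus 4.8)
The plan is to unwind both sides pointwise and reduce the identity to the standard Lie-theoretic fact that conjugation intertwines the exponential map with the adjoint action. Fix $g \in G$, $\xi \in \mathfrak{g}$, and an arbitrary point $x \in M$, and compute $(\Phi^*_{g^{-1}} \xi_M)(x)$ directly.

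First I would recall the convention for the pullback of a vector field: for a diffeomorphism $\phi$ and a vector field $X$, one has $(\phi^* X)(x) = T_{\phi(x)}\phi^{-1}\bigl(X(\phi(x))\bigr)$. Applying this with $\phi = \Phi_{g^{-1}}$, whose inverse is $\Phi_g$ since $\Phi$ is a left action, and writing $y = \Phi_{g^{-1}}(x)$, gives
\[
(\Phi^*_{g^{-1}} \xi_M)(x) = T_y \Phi_g\bigl(\xi_M(y)\bigr).
\]
Next I would substitute the definition $\xi_M(y) = \left.\frac{d}{d\tau}\right|_{\tau=0}\Phi_{\exp(\tau\xi)}(y)$ and pass it through the linear map $T_y\Phi_g$, which commutes with the $\tau$-derivative because it is the tangent map of a fixed smooth map applied to the velocity of a curve. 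This yields
\[
(\Phi^*_{g^{-1}} \xi_M)(x) = \left.\frac{d}{d\tau}\right|_{\tau=0} \Phi_g\bigl(\Phi_{\exp(\tau\xi)}(y)\bigr).
\]

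I would then apply the composition law $\Phi_g \circ \Phi_h = \Phi_{gh}$ for a left action twice: first to collapse $\Phi_g \circ \Phi_{\exp(\tau\xi)} = \Phi_{g\exp(\tau\xi)}$, and then, recalling $y = \Phi_{g^{-1}}(x)$, to obtain $\Phi_{g\exp(\tau\xi)}\bigl(\Phi_{g^{-1}}(x)\bigr) = \Phi_{g\exp(\tau\xi)g^{-1}}(x)$. The final and conceptually only non-bookkeeping step is to invoke the identity $g\exp(\tau\xi)g^{-1} = \exp(\tau\,\operatorname{Ad}_g \xi)$, which holds because conjugation by $g$ is a Lie group automorphism whose differential at the identity is $\operatorname{Ad}_g$. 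Substituting this gives
\[
(\Phi^*_{g^{-1}} \xi_M)(x) = \left.\frac{d}{d\tau}\right|_{\tau=0}\Phi_{\exp(\tau\,\operatorname{Ad}_g \xi)}(x) = (\operatorname{Ad}_g \xi)_M(x),
\]
which is exactly the claim, as $x$ was arbitrary.

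I expect the main obstacle to be purely notational rather than conceptual: one must keep the pullback/pushforward convention consistent (the pullback by $\Phi_{g^{-1}}$ is expressed through the tangent map of its inverse $\Phi_g$) and track the order of composition carefully so that the conjugate $g(\cdot)g^{-1}$ emerges with the correct placement of $g$ and $g^{-1}$, rather than the inverse ordering. No analysis is required beyond the exponential–conjugation identity, which may be cited as standard.
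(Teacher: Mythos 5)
Your proof is correct and is essentially the standard argument from Lemma 9.3.7 of Marsden--Ratiu, which the paper cites in lieu of giving its own proof: unwind the pullback via $T\Phi_g$, use the left-action composition law to produce the conjugate $g\exp(\tau\xi)g^{-1}$, and invoke $g\exp(\tau\xi)g^{-1}=\exp(\tau\operatorname{Ad}_g\xi)$. Your pullback convention also matches the one the paper uses implicitly in its later equivariance computation, so nothing further is needed.
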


We now transition to the specific case where \( G = \mathbb{R}^+ \), the multiplicative group of positive real numbers.  This shift is important because much of this paper focuses on the action of  $ \mathbb{R}  ^{ + } $. 
In this setting, the Lie algebra \( \mathfrak{g} \) is identified with \( \mathbb{R} \), and the exponential map corresponds to the usual exponential function on real numbers.

We now introduce some important concepts for this work, namely, conformally symplectic actions and scaling symmetries. 
The following definition generalizes the notion of a symplectic action.

\begin{definition}
	Let \( (M, \omega, \theta) \) be an exact symplectic manifold. An action \( \Phi : \mathbb{R}^+ \times M \to M \) of the multiplicative group \( \mathbb{R}^+ \) on \( M \) is said to be \textbf{conformally symplectic} with parameter \( c \in \mathbb{R} \) if:
	\[
	\Phi_g^* \omega = g^c \omega \quad \text{for all} \quad g \in \mathbb{R}^+.
	\]
	In the special case when \( c = 0 \), the action is symplectic.
\end{definition}

Now, consider the Lie algebra element \( \xi \in \mathbb{R} \). The map \( \Phi_{e^{\tau \xi}} : M \to M \) defines a flow on \( M \). The \textbf{infinitesimal generator} of the action corresponding to \( \xi \) is the vector field:
\[
\xi_M = \left. \frac{d}{d\tau} \right|_{\tau = 0} \Phi_{e^{\tau \xi}}(z).
\]

Given the conformally symplectic property \( \Phi_g^* \omega = g^c \omega \), we can compute the Lie derivative of \( \omega \) with respect to \( \xi_M \) as follows:
\[
\mathcal{L}_{\xi_M} \omega = \left. \frac{d}{d\tau} \right|_{\tau = 0} \Phi_{e^{\tau \xi}}^* \omega = \left. \frac{d}{d\tau} \right|_{\tau = 0} e^{\tau c \xi} \omega = (c \xi) \omega.
\]
An action with this property is called \textbf{infinitesimally conformally symplectic}.

The following definition generalizes the concept of an invariant function.

\begin{definition}[Conformal Invariance]
Let \( (M, \omega, \theta) \) be an exact symplectic manifold, and let \( F: M \to \mathbb{R} \) be a function on \( M \). Suppose the action \( \Phi: \mathbb{R}^+ \times M \to M \) of the multiplicative group \( \mathbb{R}^+ \) on \( M \) is conformally symplectic with parameter \( c \). We say that \( F \) is \textbf{conformally invariant} if there exists \( b \in \mathbb{R} \) such that:
\[
\Phi_g^* F = g^b F \quad \text{for all} \quad g \in \mathbb{R}^+.
\]
If \( b = 0 \), the function is simply called invariant.
\end{definition}

If \( \xi_M \) is the infinitesimal generator associated with an element \( \xi \in \mathbb{R} \) from the Lie algebra, then the Lie derivative of \( F \) with respect to \( \xi_M \) is:
\[
\mathcal{L}_{\xi_M} F = \left. \frac{d}{d\tau} \right|_{\tau=0} \Phi_{e^{\tau \xi}}^* F = \left. \frac{d}{d\tau} \right|_{\tau=0} e^{b \xi \tau} F = (b \xi) F.
\]

This property is referred to as \textbf{infinitesimal conformal invariance}. It extends the classical notion of \textbf{infinitesimal invariance} (see \cite{holm2009geometric} for the classical definition). In the terminology of \cite{bravetti2023scaling}, a function \( F \) with this property is called a \textbf{scaling function}.

Scaling symmetries provide an important extension of Hamiltonian symmetries, where both the symplectic form and the Hamiltonian scale under group actions rather than remaining invariant (as in the case of Hamiltonian symmetries).  We formalize this concept in the following definition.


\begin{definition}[Scaling Symmetry]
Let \( (M, \omega, \theta, H) \) be a Hamiltonian system on an exact symplectic manifold, and let \( \Phi : \mathbb{R}^+ \times M \to M \) be a conformally symplectic action. We say that \( \Phi \) is a \textbf{scaling symmetry} if \( \Phi \) is conformally symplectic and the Hamiltonian \( H \) is conformally invariant. That is, there exist \( c, b \in \mathbb{R} \) such that:
\[
\Phi_g^* \omega = g^c \omega, \quad \Phi_g^* H = g^b H \quad \text{for all} \quad g \in \mathbb{R}^+.
\]
\end{definition}

If \( \xi_M \) is the infinitesimal generator of the scaling symmetry corresponding to the Lie algebra element \( \xi \in \mathfrak{g} = \mathbb{R} \), then:
\[
\mathcal{L}_{\xi_M} \omega = (\xi c) \omega, \quad \mathcal{L}_{\xi_M} H = (\xi b) H.
\]



The following proposition establishes a relationship between scaling symmetries and Hamiltonian vector fields. This result, which will be instrumental in our discussion of relative equilibria in Section \ref{sec:relative-equlibria}, is analogous to Corollary \ref{cor:f-ast-X_H} but applies specifically  to scaling symmetries rather than diffeomorphisms. Its proof follows a similar approach to that of Proposition \ref{prop:Jacobi-generalized} (see also \cite{mclachlan2018hamiltonian}).

\begin{proposition}\label{prop:scaling-X_H}
Let \( (M, \omega, \theta, H) \) be a Hamiltonian system on an exact symplectic manifold, and let \( \Phi : \mathbb{R}^+ \times M \to M \) be a scaling symmetry. Then:
\[
\Phi_g^* X_H = g^{b - c} X_H.
\]
\end{proposition}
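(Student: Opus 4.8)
The plan is to deduce this directly from Corollary \ref{cor:f-ast-X_H} by applying it to the diffeomorphism $\Phi_g$ for each fixed $g \in \mathbb{R}^+$. The key observation is that Definition \ref{def:scaling-symmetry} supplies, for every fixed $g$, precisely the two hypotheses of that corollary, except that the relevant multipliers now depend on $g$.

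First I would fix $g \in \mathbb{R}^+$ and regard $\Phi_g : M \to M$ as a diffeomorphism. The scaling-symmetry hypothesis gives $\Phi_g^* \omega = g^c \omega$, so $\Phi_g$ is conformally symplectic with parameter $g^c$ (a positive real number for each fixed $g$); it likewise gives $\Phi_g^* H = g^b H$, so $H$ is conformally invariant under $\Phi_g$ with parameter $g^b$. These are exactly the hypotheses of Corollary \ref{cor:f-ast-X_H}, with the roles of the constants $c$ and $b$ there now played by the quantities $g^c$ and $g^b$.

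Applying the corollary then yields
\[ \Phi_g^* X_H = \frac{g^b}{g^c}\, X_H = g^{b-c}\, X_H, \]
which is the claim. Alternatively, following the remark that the proof mirrors that of Proposition \ref{prop:Jacobi-generalized}, one can argue directly without quoting the corollary: from $\Phi_g^* \omega = g^c \omega$ and the non-degeneracy of $\omega$ one obtains $\Phi_g^* X_H = \frac{1}{g^c} X_{H \circ \Phi_g}$, and substituting $H \circ \Phi_g = \Phi_g^* H = g^b H$ together with the linearity $X_{g^b H} = g^b X_H$ produces the same conclusion.

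The computation is entirely routine; the only genuine care needed is notational. One must keep the exponents $b$ and $c$ appearing in the scaling-symmetry relations distinct from the multipliers entering Corollary \ref{cor:f-ast-X_H}: the corollary's multipliers are the full $g$-dependent quantities $g^b$ and $g^c$, not $b$ and $c$ themselves, and it is their ratio $g^b/g^c = g^{b-c}$ that produces the stated scaling factor. I do not expect any substantive obstacle beyond this bookkeeping.
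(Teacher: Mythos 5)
Your proposal is correct and matches the paper's intended argument: the paper omits a detailed proof, stating only that it follows the approach of Proposition \ref{prop:Jacobi-generalized}, which is precisely your alternative route, and your primary route via Corollary \ref{cor:f-ast-X_H} with multipliers $g^c$ and $g^b$ is just that same argument packaged through the corollary. The bookkeeping point you flag (the corollary's parameters are $g^b$ and $g^c$, not $b$ and $c$) is exactly the only subtlety, and you handle it correctly.
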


\subsection{Scaled Cotangent Lifted  Action}
Given an action of  the multiplicative group \(  \mathbb{R}^{+} \) on a manifold \( Q \), we can define a corresponding action on the cotangent bundle \( T^*Q \). The standard method for this is using the cotangent lift \cite{holm2009geometric}. Here, however, we define a ``scaled" cotangent lift, which is more general.

\begin{definition}[Scaled Cotangent Lifted Action]
Let \( \Psi \) be an action of the group \( \mathbb{R}^{+} \) on a manifold \( Q \). For any \( g \in \mathbb{R}  ^{ + } \) and $ z = (p, q) \in T ^\ast Q $ , the scaled cotangent lift \( \hat{\Psi}_g ^c  \)  with parameter $ c \in \mathbb{R}  $ of the diffeomorphism \( \Psi_g \) to \( T^*Q \) is defined as:
\[  \hat\Psi_g ^c (z) = (\Psi _g (q) , g ^c \cdot (d \Psi _{ g ^{ - 1 } }) ^\ast (p) )     \]
where  \(\Psi_{g^{-1}} = (\Psi_g)^{-1} \).
Then the \textbf{scaled cotangent lifted action}  with parameter $ c $ is the action \( \hat{\Psi} ^c  : \mathbb{R}  ^{ + } \times T^*Q \to T^*Q \) given by 
\begin{equation}\label{eqn:cotangent-lifted-action-in-coordinates} 
\hat{\Psi} ^c (g, z) = \hat{\Psi}_g ^c (z). 
\end{equation} 
\end{definition}

If $ \Psi $ is the action of $ \mathbb{R}  ^{ + } $ on $ Q $ in  local coordinates $ (\mathbf{q} , \mathbf{p}) $ on $ T ^\ast Q $ , we can express the scaled cotangent lifted action with parameter $ c $  on $ T ^\ast Q $  as 
\begin{equation}\label{eqn:cotangent-lifted-action-coordinates}  \hat \Psi ^c  (g, (\mathbf{q} , \mathbf{p})) = \left( \Psi _g (\mathbf{q} ) , g ^c (D \Psi _g (\mathbf{q}) ) ^{ - T } \cdot \mathbf{p} \right).      
\end{equation} 

An important property of scaled lifted actions is given by the following proposition
\begin{proposition} Every scaled cotangent lifted action of the multiplicative group  $  \mathbb{R}  ^{ + } $ is conformally symplectic with respect to the canonical symplectic form $ \omega _0 $. 	 
\end{proposition}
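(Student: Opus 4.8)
The plan is to observe that the conformally symplectic condition for the action reduces, for each fixed \( g \in G \), to the already-established behaviour of the scaled cotangent lift of a single diffeomorphism. Concretely, I would fix \( g \in \mathbb{R}^+ \) and recognize the map \( \hat{\Psi}_g^c : T^\ast Q \to T^\ast Q \) as the scaled cotangent lift of the diffeomorphism \( \Psi_g : Q \to Q \) with parameter \( \lambda = g^c \), and then invoke Corollary \ref{cor:symplectic-form-preserved}.

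First I would make the identification precise. Recall that the scaled cotangent lift \( \hat{f}^\lambda \) of a diffeomorphism \( f \) with parameter \( \lambda \) is defined by \( \langle \hat{f}^\lambda(\alpha), v \rangle = \lambda \langle \alpha, Tf^{-1}(v) \rangle \). Applying this with \( f = \Psi_g \) and \( \lambda = g^c \), and using \( (\Psi_g)^{-1} = \Psi_{g^{-1}} \), gives \( \langle \hat{f}^\lambda(\alpha), v \rangle = g^c \langle \alpha, T\Psi_{g^{-1}}(v) \rangle = \langle \alpha, g^c(T\Psi_{g^{-1}} \cdot v) \rangle \), which is exactly the defining pairing of \( \hat{\Psi}_g^c \). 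Hence \( \hat{\Psi}_g^c \) coincides with the scaled cotangent lift of \( \Psi_g \) with parameter \( g^c \).

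Having made the identification, the conclusion follows at once: by Corollary \ref{cor:symplectic-form-preserved} (applied in the case \( Q_1 = Q_2 = Q \) with scale factor \( \lambda = g^c \)), we obtain \( (\hat{\Psi}_g^c)^\ast \omega_0 = g^c \omega_0 \) for every \( g \in \mathbb{R}^+ \). Since \( \hat{\Psi}_g^c \) is the \( g \)-component of the action \( \hat{\Psi}^c \), this is precisely the statement that \( \hat{\Psi}^c \) is conformally symplectic with parameter \( c \).

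The only point requiring a little care is the bookkeeping of the parameter: the fixed scale factor in the single-map statement is here \( g \)-dependent, equal to \( g^c \), so one must confirm that the exponent matches the definition of a conformally symplectic action (where \( \Phi_g^\ast \omega = g^c \omega \)) rather than a constant multiplier. There is no genuine analytic obstacle, since all the real work is already contained in Proposition \ref{prop:canonical-one-form-preserved} and Corollary \ref{cor:symplectic-form-preserved}; the proof amounts to correctly specializing those results to the family \( \{\Psi_g\}_{g \in G} \).
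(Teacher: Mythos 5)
Your proof is correct and follows essentially the same route as the paper, which simply cites Corollary \ref{cor:symplectic-form-preserved}; you merely make explicit the identification of \( \hat{\Psi}_g^c \) with the scaled cotangent lift of \( \Psi_g \) with parameter \( \lambda = g^c \), which is the intended (and only) content of the argument.
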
 
\begin{proof} 
The proof follows from Corollary \ref{cor:symplectic-form-preserved}.
\end{proof} 
The following proposition gives a useful  local coordinate formula  for the infinitesimal generators of the scaled  lifted action on \( T^\ast Q \).
\begin{proposition}[Infinitesimal Lifted Actions]
    Let \(  \mathbb{R}^+ \) act on \( Q \) via the action \( \Psi ^c : \mathbb{R}  ^{ + } \times Q \to Q \). Let \( \hat{\Psi} ^c :\mathbb{R}  ^{ + } \times T^\ast Q \to T^\ast Q \) denote the scaled cotangent lifted action on \( T^\ast Q \) with parameter \( c \). Let \( \xi \in  \mathbb{R} \) be an element of the Lie algebra of \( \mathbb{R}  ^{ + } \), and let \( \xi_Q \) be the infinitesimal generator of the action \( \Psi \) on \( Q \) associated with \( \xi \). Then, in local coordinates, the infinitesimal generator of the action \( \hat{\Psi} ^c \) on \( T^\ast Q \) is given by
    \begin{equation} \label{eqn:xi-ast-TQ}
        \hat{\xi}_{T^\ast Q} ^c  (\mathbf{q}, \mathbf{p}) = \left( \xi_Q (\mathbf{q}), \left( c \xi \operatorname{Id} - \left( D \xi_Q (\mathbf{q}) \right)^T \right) \cdot \mathbf{p} \right).
    \end{equation} 
\end{proposition}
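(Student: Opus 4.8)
The plan is to compute the infinitesimal generator directly from its definition,
\[
\hat{\xi}_{T^\ast Q}^c(\mathbf{q},\mathbf{p}) = \left.\frac{d}{d\tau}\right|_{\tau=0}\hat{\Psi}^c_{e^{\tau\xi}}(\mathbf{q},\mathbf{p}),
\]
using the coordinate expression \eqref{eqn:cotangent-lifted-action-coordinates} for the scaled cotangent lifted action. Writing $g=e^{\tau\xi}$, so that $g^c=e^{\tau c\xi}$, the base-point component is $\Psi_{e^{\tau\xi}}(\mathbf{q})$, whose $\tau$-derivative at $\tau=0$ is by definition $\xi_Q(\mathbf{q})$; this immediately yields the first slot of \eqref{eqn:xi-ast-TQ}. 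The fiber component is $e^{\tau c\xi}\,(D\Psi_{e^{\tau\xi}}(\mathbf{q}))^{-T}\cdot\mathbf{p}$, so the work reduces to differentiating this expression at $\tau=0$ via the product rule.

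First I would handle the scalar factor: $\frac{d}{d\tau}\big|_{\tau=0}e^{\tau c\xi}=c\xi$, and since $g=1$ (the group identity) at $\tau=0$ gives $\Psi_1=\operatorname{id}$, hence $D\Psi_1(\mathbf{q})=\operatorname{Id}$ and $(D\Psi_1(\mathbf{q}))^{-T}=\operatorname{Id}$, the contribution of this term is $c\xi\,\mathbf{p}=c\xi\operatorname{Id}\cdot\mathbf{p}$. Next I would differentiate the matrix factor $A(\tau):=(D\Psi_{e^{\tau\xi}}(\mathbf{q}))^{-T}$. Setting $B(\tau):=D\Psi_{e^{\tau\xi}}(\mathbf{q})$ with $B(0)=\operatorname{Id}$, I use the standard identities $\frac{d}{d\tau}B^{-1}=-B^{-1}B'B^{-1}$ and transposition to obtain $A'(0)=-(B'(0))^T$, since $B(0)^{-T}=\operatorname{Id}$.

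The key remaining step — and the one I expect to be the main obstacle — is identifying $B'(0)$ with $D\xi_Q(\mathbf{q})$. This rests on interchanging the $\tau$-derivative with the spatial derivative $D$ (with respect to $\mathbf{q}$), which is justified by smoothness of $\Psi$ (Clairaut/Schwarz):
\[
B'(0)=\left.\frac{d}{d\tau}\right|_{\tau=0}D\Psi_{e^{\tau\xi}}(\mathbf{q})
      =D\!\left(\left.\frac{d}{d\tau}\right|_{\tau=0}\Psi_{e^{\tau\xi}}(\mathbf{q})\right)
      =D\xi_Q(\mathbf{q}).
\]
Hence $A'(0)\cdot\mathbf{p}=-(D\xi_Q(\mathbf{q}))^T\cdot\mathbf{p}$. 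Combining the two contributions gives the fiber component $\big(c\xi\operatorname{Id}-(D\xi_Q(\mathbf{q}))^T\big)\cdot\mathbf{p}$, which together with the base component $\xi_Q(\mathbf{q})$ establishes formula \eqref{eqn:xi-ast-TQ}. The entire argument is a bookkeeping exercise in the product and inverse-transpose differentiation rules; the only genuine subtlety is the legitimacy of exchanging the two differentiations, so I would state that dependence explicitly when invoking it.
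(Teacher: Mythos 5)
Your proposal is correct and follows essentially the same route as the paper: a direct coordinate computation from equation \eqref{eqn:cotangent-lifted-action-coordinates} via the product rule, with the base component immediate and the fiber component split into the scalar factor $c\xi\,\mathbf{p}$ and the derivative of the inverse-transpose Jacobian. The only cosmetic difference is that you differentiate $(D\Psi_{e^{\tau\xi}}(\mathbf{q}))^{-T}$ using the matrix identity $\frac{d}{d\tau}B^{-1}=-B^{-1}B'B^{-1}$ together with an explicit Clairaut interchange, whereas the paper reaches the same identity $\left.\frac{d}{dt}\right|_{t=0}(D\Psi_{\exp t\xi}(\mathbf{q}))^{-T}\cdot\mathbf{p}=-(D\xi_Q(\mathbf{q}))^T\cdot\mathbf{p}$ by differentiating the pairing $\left\langle (D\Psi_{\exp t\xi}(\mathbf{q}))^{-T}\cdot\mathbf{p},\,D\Psi_{\exp t\xi}(\mathbf{q})\cdot\mathbf{v}\right\rangle=\left\langle\mathbf{v},\mathbf{p}\right\rangle$.
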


	\begin{proof} 
Let \( \Psi \) be an action of \( \mathbb{R}^+ \) on \( Q \), and define \( g = \exp(t \xi) \). By definition, the infinitesimal generator \( \xi_Q \) is given by:
\[
\xi_Q(\mathbf{q}) = \left. \frac{d}{dt} \right|_{t=0} \Psi_{\exp(t \xi)}(\mathbf{q}).
\]
Now, using equation \eqref{eqn:cotangent-lifted-action-coordinates}, we compute the infinitesimal generator \( \hat{\xi}_{T^*Q}^c \) of the cotangent lift:

		\begin{align*} 
			\hat\xi _{ T ^\ast Q } ^c  (\mathbf{q} , \mathbf{p}) & = \left .\frac{ d }{dt}\right| _{ t = 0 } (\hat\Psi_{ \exp  t \xi } ^c  )(\mathbf{q} , \mathbf{p})  \\
								     & =  \left .\frac{ d }{dt}\right| _{ t = 0 }\left(  \Psi _{ \exp  t \xi }  (\mathbf{q})  , e ^{  c t \xi }   (D \Psi_{ \exp t \xi } (\mathbf{q})  )^{ - T }   \cdot \mathbf{p}  \right) \\
								     & = \left(  
								     \xi _{ Q } (\mathbf{q})  , \left .\frac{ d }{dt}\right| _{ t = 0 } e ^{ c t \xi}   (D \Psi_{ \exp  t \xi } (\mathbf{q})  )^{ - T }   \cdot \mathbf{p}  \right)\\
								     & = \left(  
								     \xi _{ Q } (\mathbf{q})  ,  
							     \left[ \left( \frac{ d }{dt} e ^{ c t\xi } \right)   (D \Psi_{ \exp  t \xi } (\mathbf{q})  )^{ - T }   \cdot \mathbf{p} +  e ^{ c t \xi}  \left( \frac{ d }{dt} (D \Psi_{ \exp  t \xi } (\mathbf{q})  )^{ - T }   \right)\cdot \mathbf{p} \right] _{ t = 0 }  \right)\\
								     & = \left(  \xi _{ Q } (\mathbf{q})  ,   \left( c\xi \operatorname{Id}  -  ( D \xi _{ Q } (\mathbf{q})  )^{ T }   \right)\cdot \mathbf{p}   \right)
		\end{align*}
		where the last equality follows from the fact that $\left .\Psi _{ \exp t \xi } (\mathbf{q}) \right| _{  t = 0 } = \mathbf{q} $  and the  following calculation. Since 
\[ \left\langle (D \Psi _{ \exp  t \xi} (\mathbf{ q })) ^{ - T } \cdot \mathbf{p} , D \Psi _{ \exp  t \xi } (\mathbf{q}) \cdot \mathbf{v} \right\rangle = \left\langle \mathbf{p} , \mathbf{v} \right\rangle      \]
for all $t$, differentiating both sides  with respect to $ t $ gives,
\[ \left . \frac{ d } { dt } \right | _{ t = 0 } \left\langle (D \Psi _{ \exp  t \xi} (\mathbf{ q })) ^{ - T } \cdot \mathbf{p} , D \Psi _{ \exp  t \xi } (\mathbf{q}) \cdot \mathbf{v} \right\rangle = 0\]
	which leads to: 
\begin{align*} 
	\left\langle \left. \frac{ d } { dt } \right| _{ t = 0 } (D \Psi _{ \exp  t \xi } (\mathbf{q})) ^{ - T } \cdot \mathbf{p} , \mathbf{v} \right\rangle & = - \left\langle \mathbf{p} , \left. \frac{ d } { dt } \right| _{ t = 0 } D \Psi _{ \exp  t \xi } (\mathbf{q}) \cdot \mathbf{v} \right\rangle \\
																			      & = - \left\langle \mathbf{p} , D \xi _{ Q } (\mathbf{q}) \cdot \mathbf{v} \right\rangle \\
																			      & = - \left\langle (D \xi _{ Q } (\mathbf{q})) ^T \cdot \mathbf{p} , \mathbf{v} \right\rangle.
\end{align*} 
Since \( \mathbf{v} \) is arbitrary, we conclude:
\[
\left. \frac{d}{dt} \right|_{t=0} (D \Psi_{\exp(t \xi)}(\mathbf{q}))^{-T} \cdot \mathbf{p} = -(D \xi_Q(\mathbf{q}))^T \cdot \mathbf{p}.
\]

	\end{proof} 

\section{Conformal Momentum Map}

\begin{definition}[{\bf Conformal momentum map}]
	Let $ (M, \omega,\theta) $ an exact symplectic manifold and $ \Phi : \mathbb{R}  ^{ + }   \times M \to M $ the action of the  multiplicative group  $\mathbb{R}  ^{ + } $  on $M$.  Suppose $ \Phi $ is a conformally symplectic action with parameter $ c $.	We say that a map $J: M \to \mathbb{R} $ 
	is a {\bf conformal momentum map}  with parameter $ c $ for  $ \Phi $ provided that for every $ \xi \in  \mathbb{R} $
	\[ i _{ \xi _M } \omega  + (c \xi)   \theta = dJ _{ \xi } , \]
	where $ J _{ \xi } : M \to \mathbb{R}  $ (the {\bf conformal momentum function} associated to $ \xi $) is defined by $ J _{ \xi } (z)   = \left\langle J,  \xi  \right\rangle =   \xi J (z) $, \footnote{
Here $ \left\langle \cdot , \cdot \right\rangle $ denotes the pairing between the Lie algebra and its dual. Since in this case the Lie algebra is  $ \mathfrak{g} = \mathbb{R} $, the pairing  reduces to the multiplication between real numbers.} and    $ \xi _M $ is the infinitesimal generator of the action corresponding to $ \xi$.
In other words, $ J $ is a conformal momentum map  for the action, provided that \begin{equation}\label{eqn:momentum-map-condition} 
	X _{J_\xi}^{\xi c} = \xi _M,
\end{equation} 
that is, $ \xi _M $ is a conformal Hamiltonian vector field with parameter $  \xi c $ and conformal Hamiltonian $ J _{ \xi } $. 
\end{definition}

\begin{remark}\label{rmk:momentum-map}
	 By the definition of conformal momentum map the function $ J _{ \xi } $ satisfies the equation $  i _{ \xi _M } \omega  + (c \xi)   \theta = dJ _{ \xi } $ with $X _{ J _{ \xi } } ^{ \xi c } =  \xi _M$. If we take $ \xi = 1 $, then it follows that $ J $ satisfies the equation  
	$ i _{ X ^{c} _{J }  } \omega  + c  \theta = dJ $, where $ X _J ^c $ is the infinitesimal generator corresponding to the Lie algebra element $1$.
\end{remark} 

The next result is the analogue of Theorem 4.2.10 in \cite{abraham2008foundations}, which helps in constructing conformal momentum maps. 
\begin{theorem}\label{thm:ixitheta} Let $ (M, \omega,\theta) $ an exact symplectic manifold. Let $ \Phi $ be a conformally symplectic action such that  $ \Phi_g^\ast \theta = g^c \theta $ for all $ g \in  \mathbb{R}  ^{ + } $. Then $ J : M\to  \mathbb{R}  ^{ + } $ defined by  
	\[ J _{ \xi } = \xi J  = i _{ \xi_M } \theta \]
is a conformal momentum map with parameter $ c $, that is, \( i _{ \xi _M } \omega  + (c \xi)   \theta = dJ _{ \xi } \). \end{theorem}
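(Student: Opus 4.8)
The plan is to reduce the statement to the single infinitesimal identity $\mathcal{L}_{\xi_M}\theta = (c\xi)\theta$, and then to obtain the desired formula by combining this with Cartan's homotopy formula together with the relation $\omega = -d\theta$. In other words, I expect the whole proof to be a short computation once the Lie derivative of $\theta$ is known.

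First I would compute $\mathcal{L}_{\xi_M}\theta$ directly from the hypothesis. By the definition of the infinitesimal generator, $\xi_M$ is the velocity at $\tau = 0$ of the flow $\phi_\tau = \Phi_{e^{\tau\xi}}$, so that $\mathcal{L}_{\xi_M}\theta = \left.\frac{d}{d\tau}\right|_{\tau=0}\phi_\tau^\ast\theta$. The assumption $\Phi_g^\ast\theta = g^c\theta$ applied with $g = e^{\tau\xi}$ gives $\phi_\tau^\ast\theta = \Phi_{e^{\tau\xi}}^\ast\theta = e^{c\tau\xi}\theta$, and differentiating at $\tau = 0$ yields $\mathcal{L}_{\xi_M}\theta = (c\xi)\theta$. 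This step is the crux of the argument: it transfers the finite conformal scaling of $\theta$ encoded in the hypothesis into its infinitesimal counterpart.

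Next I would apply Cartan's magic formula, $\mathcal{L}_{\xi_M}\theta = i_{\xi_M}d\theta + d(i_{\xi_M}\theta)$. Using $d\theta = -\omega$ and the definition $J_\xi = i_{\xi_M}\theta$, this reads $\mathcal{L}_{\xi_M}\theta = -\,i_{\xi_M}\omega + dJ_\xi$. Substituting the value $\mathcal{L}_{\xi_M}\theta = (c\xi)\theta$ obtained above and rearranging gives $dJ_\xi = i_{\xi_M}\omega + (c\xi)\theta$, which is exactly the defining equation of a conformal momentum map with parameter $c$. Since this identity holds for every $\xi \in \mathfrak{g} = \mathbb{R}$, the map $J$ with $J_\xi = i_{\xi_M}\theta$ is a conformal momentum map, as claimed.

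I do not anticipate a genuine obstacle here; the argument is essentially routine. The only points requiring care are the sign bookkeeping arising from the convention $\omega = -d\theta$, and confirming that the hypothesis $\Phi_g^\ast\theta = g^c\theta$ is precisely the ingredient needed to produce $\mathcal{L}_{\xi_M}\theta = (c\xi)\theta$ rather than merely the weaker conformal scaling of $\omega$ that conformal symplecticity alone would guarantee. It is worth noting that this stronger hypothesis on $\theta$ is genuinely used, which is why the theorem is stated in terms of the chosen primitive one-form and not just the symplectic form.
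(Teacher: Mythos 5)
Your proposal is correct and follows essentially the same route as the paper: both derive $\mathcal{L}_{\xi_M}\theta = (c\xi)\theta$ by differentiating $\Phi_{e^{\tau\xi}}^\ast\theta = e^{c\tau\xi}\theta$ at $\tau = 0$, then apply Cartan's formula with $d\theta = -\omega$ to obtain $dJ_\xi = i_{\xi_M}\omega + (c\xi)\theta$. Your closing remark that the hypothesis on $\theta$ (not merely on $\omega$) is what makes the argument work is a fair observation, consistent with how the theorem is stated.
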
 
\begin{proof} Let $ \phi _t = \Phi _{ e ^{ t \xi  } } $, be the flow associated with the group action $ \Phi $. Since $  \Phi_g ^\ast  \theta = g^c\theta $ it follows that 
\[ \mathcal{L} _{ \xi_M } \theta = \left. \frac{ d } { dt }  \right| _{ t = 0 }  \phi _t ^\ast \theta = \left. \frac{ d } { dt }  \right| _{ t = 0 }  \Phi _{ e 
		^{ t \xi }} ^\ast \theta =  \left. \frac{ d } { dt }  \right |_{ t = 0 }  e
	^{ (   t c \xi  ) }\theta = (c \xi)\,  \theta.   \]

Then by Cartan's formula 
\[  (c \xi) \,  \theta = \mathcal{L} _{ \xi _M } \theta = i _{ \xi _M } d\theta + d (i _{ \xi _M } \theta).\]
Hence, $ i _{ \xi _M } \omega + (c \xi) \,  \theta = d( i _{ \xi _M } \theta) $.  So, $ J _{ \xi }  = i _{ \xi _M } \theta $ satisfies the definition of momentum map. 

 \end{proof}

 \subsection{Conformal Momentum Maps for Actions on Cotangent Bundles}
Suppose \( Q \) is a manifold and \( M = T^\ast Q \) is the cotangent bundle with the canonical  forms. Let \( \Psi \) be an action of the multiplicative group \( \mathbb{R}^{+} \) on $ Q $, and let \( \hat{\Psi} ^c  \) be the scaled cotangent lifted action with parameter $ c $  of \( \Psi \) to \( T^\ast Q \). The following theorem provides a specialized formula for the conformal momentum map in this context, which will be useful in examples and applications.

\begin{theorem}[{\bf Noether's Formula for Scaled Cotangent Lifts}]\label{thm:Noether-formula}
Let \( \Psi \) be an action of the multiplicative group \( \mathbb{R}^+ \) on a manifold \( Q \), and let \( \xi \in \mathbb{R} \) be an element of the Lie algebra of \( \mathbb{R}^+ \). Suppose \( \hat{\Psi}^c \) is the scaled cotangent lift of \( \Psi \) with parameter \( c \). Then, the conformal momentum map \( J: T^*Q \to \mathbb{R} \) is given by:
\begin{equation}\label{eqn:J-cotangent-lifted}
J_\xi(z) = \left\langle J(z), \xi \right\rangle = \left\langle p, \xi_Q(q) \right\rangle,
\end{equation}
where \( z = (q, p) \in T^*Q \), with \( q = \pi(z) \) as the base point under the canonical projection \( \pi: T^*Q \to Q \), and \( p \in T^*_q Q \). Here, \( \xi_Q \) denotes the infinitesimal generator of the action \( \Psi \) on \( Q \), and the pairing on the right-hand side is between the covector \( p \in T_q^* Q \) and the vector \( \xi_Q(q) \in T_q Q \).

In local coordinates \( (\mathbf{q}, \mathbf{p}) \), the formula becomes:
\begin{equation}
J_\xi(\mathbf{q}, \mathbf{p}) = \xi J(\mathbf{q}, \mathbf{p}) = \left\langle \mathbf{p}, \xi_Q(\mathbf{q}) \right\rangle.
\end{equation}
\end{theorem}

\begin{proof}
We prove the theorem in local coordinates \( (\mathbf{q}, \mathbf{p}) \). In local coordinates, we have 
\[ 
X_{J_\xi}^{\xi c} = \left( \frac{\partial J_\xi}{\partial \mathbf{p}}, -\frac{\partial J_\xi}{\partial \mathbf{q}} + (c \xi) \mathbf{p} \right), 
\]
and so we need to show that  
\[ 
X_{J_\xi}^{\xi c} = \left( \frac{\partial J_\xi}{\partial \mathbf{p}}, -\frac{\partial J_\xi}{\partial \mathbf{q}} + (c \xi) \mathbf{p} \right) = \hat\xi_{T^\ast Q} ^c (\mathbf{q}, \mathbf{p}),
\]
where $ \hat \xi _{ T ^\ast Q } ^c $ is the infinitesimal generator of the scaled cotangent lifted action $\hat \Psi^c  $. 
Considering \( \mathbf{p} \) as a column vector, we have \( J_\xi(\mathbf{q}, \mathbf{p}) = \left\langle \mathbf{p}, \xi_Q(\mathbf{q}) \right\rangle = \mathbf{p}^T \xi_Q(\mathbf{q}) \). It follows that 
\[ 
\frac{\partial J_\xi}{\partial \mathbf{p}}(\mathbf{q}, \mathbf{p}) = \xi_Q(\mathbf{q}), 
\]
and 
\[ 
\frac{\partial J_\xi}{\partial \mathbf{q}}(\mathbf{q}, \mathbf{p}) \cdot \dot{\mathbf{q}} = \mathbf{p}^T D \xi_Q(\mathbf{q}) \dot{\mathbf{q}} = \left( \left( D \xi_Q(\mathbf{q}) \right)^T \mathbf{p} \right)^T \dot{\mathbf{q}}. 
\]
This implies
\[ 
-\frac{\partial J_\xi}{\partial \mathbf{q}}(\mathbf{q}, \mathbf{p}) + (c \xi) \mathbf{p} = \left( (c \xi) \operatorname{Id} - \left( D \xi_Q(\mathbf{q}) \right)^T \right) \mathbf{p}. 
\]
Hence, by Equation \eqref{eqn:xi-ast-TQ}, we have
\[
X_{J_\xi}^{\xi c}(\mathbf{q}, \mathbf{p}) = \left( \frac{\partial J_\xi}{\partial \mathbf{p}}, -\frac{\partial J_\xi}{\partial \mathbf{q}} + (c \xi) \mathbf{p} \right) = \left( \xi_Q(\mathbf{q}), \left( (c \xi) \operatorname{Id} - \left( D \xi_Q(\mathbf{q}) \right)^T \right) \mathbf{p} \right) = \hat\xi_{T^\ast Q} ^c (\mathbf{q}, \mathbf{p}).
\]
\end{proof}

The following example will be useful for our discussion of the Newtonian \( N \)-body problem.

\begin{example}\label{ex:momentum-map}
Consider the configuration space \( Q = \mathbb{R}^{3n} \) and its cotangent bundle \( T^*Q \) with local coordinates \( (\mathbf{q}, \mathbf{p}) \in T^*Q \). Let \( \Psi: \mathbb{R}^+ \times \mathbb{R}^{3n} \to \mathbb{R}^{3n} \) denote the group action of the multiplicative group \( \mathbb{R}^+ \) on \( Q \), defined by:
\[
\Psi(g, \mathbf{q}) = g \mathbf{q} \quad \text{for all } g \in \mathbb{R}^+.
\]
The corresponding scaled cotangent lifted action \( \hat{\Psi}^c \) is given by:
\[
\hat{\Psi}^c(g, (\mathbf{q}, \mathbf{p})) = \left( g \mathbf{q}, g^{c-1} \mathbf{p} \right).
\]
The choice \( c = 1/2 \) gives the Kepler scalings, which we will is useful when studying the Newtonian \( N \)-body problem. Clearly, the infinitesimal generator \( \xi_Q \) of the action on \( Q \) is:
\[
\xi_Q(\mathbf{q}) = \left. \frac{d}{dt} \right|_{t=0} \Psi_{\exp(t \xi)} = \xi \mathbf{q}.
\]
Thus, considering \( \mathbf{p} \) as a column vector, the conformal momentum function associated with \( \xi \) is:
\[
J_\xi(\mathbf{q}, \mathbf{p}) = \left\langle \mathbf{p}, \xi_Q(\mathbf{q}) \right\rangle = \xi \mathbf{p}^T \mathbf{q},
\]
and the momentum map is:
\[
J(\mathbf{q}, \mathbf{p}) = \mathbf{p}^T \mathbf{q}.
\]

As an exercise, we can compute \( \hat{\xi}_{T^*Q}^c(\mathbf{q}) \) using equation \eqref{eqn:xi-ast-TQ}. Explicitly, we have:
\[
\hat{\xi}_{T^*Q}^c(\mathbf{q}) = \left( \xi_Q(\mathbf{q}), \left( c \xi \operatorname{Id} - (D \xi_Q(\mathbf{q}))^T \right) \mathbf{p} \right) = \left( \xi \mathbf{q}, (c-1) \xi \mathbf{p} \right),
\]
which for \( c = 1/2 \) simplifies to:
\[
\hat{\xi}_{T^*Q}^c(\mathbf{q}) = \left( \xi \mathbf{q}, -\frac{1}{2} \xi \mathbf{p} \right).
\]

Next, we verify that \( J \) is indeed a conformal momentum map with parameter \( c = 1/2 \) by showing that \( \hat{\xi}_{T^*Q}^c \) is a conformally Hamiltonian vector field with conformal Hamiltonian \( J_\xi \). Explicitly, we compute:
\[
\frac{\partial J_\xi}{\partial \mathbf{p}} (\mathbf{q}, \mathbf{p}) = \xi \mathbf{q} = \xi_Q(\mathbf{q}),
\]
and for \( c = 1/2 \), we have:
\[
- \frac{\partial J_\xi}{\partial \mathbf{q}} + (c \xi) \mathbf{p} = -\xi \mathbf{p} + \frac{1}{2} \xi \mathbf{p} = -\frac{1}{2} \xi \mathbf{p}.
\]
Thus, we confirm:
\[
\left( \frac{\partial J_\xi}{\partial \mathbf{p}}, - \frac{\partial J_\xi}{\partial \mathbf{q}} + \frac{1}{2} \xi \mathbf{p} \right) = \hat{\xi}_{T^*Q}^c,
\]
which shows that \( J \) is a conformal momentum map, since \( J_\xi \) is a conformal Hamiltonian with parameter \( c = 1/2 \) for the conformally Hamiltonian vector field \( \hat{\xi}_{T^*Q}^c \).
\end{example}

 \subsection{Properties of Conformal Momentum Maps}
In this subsection, we explore key properties of conformal momentum maps in exact symplectic manifolds with conformally symplectic actions. We begin by generalizing Noether's theorem to the setting of conformally invariant Hamiltonians, leading to new conservation laws for conformal momentum maps. We then explore the connection between conformal momentum maps and scaling functions, establishing conditions under which the momentum map serves as a scaling function. Finally, we extend the notion of equivariance to conformal equivariance, which in this context reduces to conformal invariance. We demonstrate how the conformal momentum map exhibits this conformal invariance in scaled cotangent lifted actions.

With the definitions of conformal invariance and conformal momentum map now established, we can derive a generalization of Noether's theorem for this context. This generalized theorem extends the result presented in \cite{zhang2020generalized}, which is restricted to cotangent bundles.

\begin{theorem}[Hamiltonian Noether's Theorem]
    Let \( (M, \omega,\theta) \) be an exact symplectic manifold. Consider \( \Phi \) to be a conformally symplectic action with parameter $ c $  of \(\mathbb{R}^{+} \) on \( M \) with a conformal momentum map \( J \). If \( H \) is conformally invariant with parameter \( b \), then along the flow of \( X_H \), the momentum map \( J \) satisfies the equation:
    \[
    \frac{dJ}{dt} = -bH + c \, i_{X_H} \theta
    \]
    Consequently, the function 
    \[
    F = J + bHt - c \int_{0}^{t} i_{X_H} \theta \, dt
    \]
    is a constant of motion, that is,  \( \frac{dF}{dt} = 0 \).
\end{theorem}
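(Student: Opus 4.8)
The plan is to compute the time derivative of the conformal momentum function $J_\xi$ directly along an integral curve of $X_H$, using the defining equation of the conformal momentum map together with the infinitesimal conformal invariance of $H$, and then to read off the result for $J$ by specializing $\xi$. Here $X_H$ denotes the ordinary Hamiltonian vector field, so $i_{X_H}\omega = dH$, and since $\mathfrak{g}^\ast = \mathbb{R}$ the object $J$ is an honest real-valued function, so $dJ/dt$ makes sense without further interpretation.

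First I would fix $\xi \in \mathfrak{g} = \mathbb{R}$ and recall that, by definition of the conformal momentum map, $dJ_\xi = i_{\xi_M}\omega + (c\xi)\theta$. Pairing both sides with $X_H$ and recognizing that the left-hand side is the rate of change of $J_\xi$ along the flow gives
\[
\frac{dJ_\xi}{dt} = dJ_\xi(X_H) = \omega(\xi_M, X_H) + (c\xi)\, i_{X_H}\theta.
\]
The key identity is then $\omega(\xi_M, X_H) = -\omega(X_H, \xi_M) = -(i_{X_H}\omega)(\xi_M) = -dH(\xi_M) = -\mathcal{L}_{\xi_M}H$, where I used $i_{X_H}\omega = dH$ and the fact that for a function the Lie derivative along $\xi_M$ coincides with the directional derivative $dH(\xi_M)$. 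Invoking the infinitesimal conformal invariance $\mathcal{L}_{\xi_M}H = (b\xi)H$ yields $\omega(\xi_M, X_H) = -(b\xi)H$, so that
\[
\frac{dJ_\xi}{dt} = -(b\xi)H + (c\xi)\, i_{X_H}\theta = \xi\left(-bH + c\, i_{X_H}\theta\right).
\]
Since $J_\xi = \xi J$, dividing by $\xi$ (equivalently taking $\xi = 1$ as in Remark \ref{rmk:momentum-map}) gives the first claim $\tfrac{dJ}{dt} = -bH + c\, i_{X_H}\theta$.

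For the constant of motion I would first record the elementary but essential observation that $H$ is conserved along its own Hamiltonian flow: $\tfrac{dH}{dt} = dH(X_H) = (i_{X_H}\omega)(X_H) = \omega(X_H, X_H) = 0$ by antisymmetry of $\omega$. Differentiating $F = J + bHt - c\int_0^t i_{X_H}\theta\, dt$, using $\tfrac{d}{dt}(bHt) = bH$ (since $\dot H = 0$) and the fundamental theorem of calculus for the integral term, I would obtain $\tfrac{dF}{dt} = \tfrac{dJ}{dt} + bH - c\, i_{X_H}\theta$, into which the formula from the first part substitutes to give $0$.

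The computation is essentially mechanical once the defining relation of $J$ is inserted; I do not expect a genuine obstacle, only two points requiring care. The first is the sign bookkeeping forced by the convention $\omega = -d\theta$ and the antisymmetry of $\omega$ in the key identity $\omega(\xi_M, X_H) = -\mathcal{L}_{\xi_M}H$. The second, easily overlooked, is that the second part of the statement genuinely relies on $H$ being constant along $X_H$; without this the term $bHt$ would not differentiate to $bH$ and the cancellation would fail.
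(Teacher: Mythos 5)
Your proof is correct and follows essentially the same route as the paper: both insert the defining relation $dJ_\xi = i_{\xi_M}\omega + (c\xi)\theta$, contract with $X_H$, use the antisymmetry of $\omega$ together with $i_{X_H}\omega = dH$, and invoke infinitesimal conformal invariance $\mathcal{L}_{\xi_M}H = (b\xi)H$; working with general $\xi$ and then dividing by $\xi$ is only cosmetically different from the paper's choice of $\xi = 1$. Your explicit verification of the second claim (including the observation that $\dot H = 0$ is what makes $bHt$ differentiate to $bH$) is a welcome addition, since the paper's proof stops after deriving the formula for $dJ/dt$.
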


\

\begin{proof}
	By Remark \ref{rmk:momentum-map}  $ J $ satisfies the equation  $ i _{ X ^{c} _{J }  } \omega  + c  \theta = dJ $, where $ X _J ^c $ is the infinitesimal generator corresponding to the Lie algebra element $1$. Hence,  
	\begin{align*} 
	\mathcal{L} _{ X _H } J & = dJ (X_H) = i _{ X _H } (dJ) \mbox{ and using } dJ =i _{ X_J^c } \omega + c \theta   \\
	& = i _{ X _H } i _{ X_J^c } (\omega) + c\,i _{ X _H } \theta \\ 
	& = - i _{ X_J^c } i _{ X _H } \omega + c \,i _{ X _H } \theta \\
	& = - i _{  X_J^c } dH + c \,i _{ X _H } \theta \\
	& = - \mathcal{L} _{ X_J^c } H  + c\, i _{ X _H } \theta\\
	& = - b H  + c\, i _{ X _H } \theta.
\end{align*} 
Hence,
\[ \frac{ dJ } { dt } = \{ J, H \} = \mathcal{L} _{ X _H } J = - b H  + c\, i _{ X _H } \theta,\]
where $ 
\{ \cdot, \cdot \} $ denotes the Poisson brackets. 

\end{proof} 

We now  establish an important property of conformally Hamiltonian vector fields. This result lays the foundation for understanding how conformal momentum maps relate to scaling functions:

\begin{lemma}\label{lem:iXtheta}
    Let \( (M, \omega, \theta) \) be an exact symplectic manifold. Suppose \( F: M \to \mathbb{R} \) is a smooth function, and let \( X_F^c \) be a conformally Hamiltonian vector field with parameter \( c \), with conformal Hamiltonian \( F \). Then the following statements hold:
    \begin{enumerate}
        \item If \( F = i_{X_F^c} \theta \), then \( \mathcal{L}_{X_F^c} \theta = c \theta \).
        \item If \( M \) is connected and \( \mathcal{L}_{X_F^c} \theta = c \theta \), then \( F = i_{X_F^c} \theta + \text{constant} \).
    \end{enumerate}
\end{lemma}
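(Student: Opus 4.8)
The plan is to derive a single master identity relating $\mathcal{L}_{X_F^c}\theta$, $c\theta$, and the exact form $d(i_{X_F^c}\theta - F)$, from which both statements drop out at once. First I would apply Cartan's magic formula to write
\[
\mathcal{L}_{X_F^c}\theta = i_{X_F^c}\,d\theta + d\bigl(i_{X_F^c}\theta\bigr).
\]
Since we are on an exact symplectic manifold with $\omega = -d\theta$, I would substitute $d\theta = -\omega$ to obtain $\mathcal{L}_{X_F^c}\theta = -\,i_{X_F^c}\omega + d(i_{X_F^c}\theta)$.

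Next I would invoke the defining relation of the conformal Hamiltonian vector field, namely $i_{X_F^c}\omega + c\theta = dF$, which rearranges to $i_{X_F^c}\omega = dF - c\theta$. Inserting this into the previous expression yields the key identity
\[
\mathcal{L}_{X_F^c}\theta = c\theta + d\bigl(i_{X_F^c}\theta - F\bigr).
\]
This one equation contains both halves of the lemma. For statement (1), the hypothesis $F = i_{X_F^c}\theta$ makes the argument of the exterior derivative vanish identically, so the exact term is zero and $\mathcal{L}_{X_F^c}\theta = c\theta$. For the converse (2), the hypothesis $\mathcal{L}_{X_F^c}\theta = c\theta$ forces $d(i_{X_F^c}\theta - F) = 0$, so the $0$-form $i_{X_F^c}\theta - F$ is closed.

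The computation is entirely routine, and I do not anticipate any real obstacle beyond careful bookkeeping of the two sign flips (one from $d\theta = -\omega$, one from rearranging the defining relation). The only point that genuinely warrants a comment is the meaning of \emph{constant} in the converse: a closed $0$-form is only \emph{locally} constant, so the conclusion should be read as $F = i_{X_F^c}\theta + \mathrm{const}$ on each connected component of $M$; this becomes a single genuine constant precisely when $M$ is connected. I would state this qualification explicitly so that the ``$+\,\text{constant}$'' in the lemma is unambiguous.
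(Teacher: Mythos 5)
Your proof is correct and follows essentially the same route as the paper's: Cartan's magic formula combined with the defining relation $i_{X_F^c}\omega + c\theta = dF$ and the sign convention $\omega = -d\theta$, merely repackaged as a single master identity $\mathcal{L}_{X_F^c}\theta = c\theta + d\bigl(i_{X_F^c}\theta - F\bigr)$ from which both parts drop out. Your observation that the ``constant'' in part (2) is only locally constant (one constant per connected component of $M$) is a valid refinement that the paper leaves implicit.
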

\begin{proof}
    Suppose \( F = i_{X_F^c} \theta \). By definition of conformally Hamiltonian vector field we have:
    \[
    dF = d(i_{X_F^c} \theta) = c \theta + i_{X_F^c} \omega.
    \]
Since $ \omega = - d \theta $, using Cartan's magic formula \( \mathcal{L}_{X_F^c} \theta = d(i_{X_F^c} \theta) + i_{X_F^c} d\theta \), we obtain:
    \[
    \mathcal{L}_{X_F^c} \theta = d(i_{X_F^c} \theta) -i_{X_F^c} \omega = c \theta.
    \]
    This proves the first direction.

    Conversely, suppose \( \mathcal{L}_{X_F^c} \theta = c \theta \). Using Cartan's formula again, we have:
    \[
    c \theta = d(i_{X_F^c} \theta) - i_{X_F^c} \omega.
    \]
    Rearranging, we get:
    \[
    d(i_{X_F^c} \theta) = c \theta + i_{X_F^c} \omega = dF,
    \]
    which implies \( F = i_{X_F^c} \theta + \text{constant} \), since \( M \) is connected, and any two functions with the same differential must differ by a constant on a connected domain.
\end{proof}
Next, we demonstrate that the conformal momentum map defined in Theorem  \ref{thm:ixitheta}, is, in fact, a scaling function under the conformally symplectic action. This result ties directly to the reduction framework studied in \cite{bravetti2023scaling}:

\begin{corollary}
	Let \( (M, \omega, \theta) \) be an exact symplectic manifold, and let \( \Phi \) be a conformally symplectic action with parameter \( c \). Consider the  momentum map \( J: M \to \mathbb{R} \), defined by \( J_\xi = \xi J = i_{\xi_M} \theta \) as described in Theorem \ref{thm:ixitheta}, where \( \xi_M \) is the infinitesimal generator of the action corresponding to the Lie algebra element \( \xi \in \mathbb{R} \). With this choice of momentum map, \( J \) is a scaling function with parameter \( c \).
\end{corollary}

\begin{proof}The conformal momentum map \( J \) satisfies the equation \( i_{X_J^c} \omega + c \theta = dJ \), where \( X_J^c \) is the infinitesimal generator corresponding to the Lie algebra element \( 1 \). Since \( J = i_{X_J^c} \theta \), by Lemma \ref{lem:iXtheta}, we know that \( \mathcal{L}_{X_J^c} \theta = c \theta \).

To confirm \( J \) as a scaling function, we compute:

 \begin{align*}
        \mathcal{L}_{X_{J}^c} J &= i_{X_{J}^c} d{J} \\
        &= i_{X_J^c} (d(i_{X_J^c} \theta)) \quad (\text{since }  J = i _{ X _J ^c } \theta ) \\
        &= i_{X_J^c} (\mathcal{L}_{X_J^c} \theta - i_{X_J^c} d\theta) \quad (\text{by Cartan's magic formula}) \\
	&= i_{X_J^c} (c\theta + i_{X_J^c} \omega)  (\text{since } \mathcal{L}_{X_J^c} \theta = c \theta \mbox{ and } \omega = - d \theta  )  \\
	&= c i_{X_J^c} \theta \quad   (\text{since }  i _{ X _J ^c } i_{X_J^c} \omega =0  ) \\
        &= c J.
    \end{align*}
Therefore, \( J \) satisfies \( \mathcal{L}_{X_J^c} J = c J \), and thus \( J \) is a scaling function with parameter \( c \).
\end{proof} 

We now extend our analysis by exploring the relationship between a conformally Hamiltonian vector field and its associated function. In particular, we show that when the vector field satisfies a specific scaling condition on the symplectic potential, the function can be adjusted to become a scaling function:

\begin{proposition}\label{prop:Ltheta=theta}
Let \( (M, \omega, \theta) \) be a connected exact symplectic manifold, and let \( F: M \to \mathbb{R} \) be a smooth function. Suppose \( X_F^c \) is a conformally Hamiltonian vector field with parameter \( c \), associated with \( F \). If \( \mathcal{L}_{X_F^c} \theta = c \theta \), then there exists a function \( \tilde{F} \), differing from \( F \) by at most an additive constant, which is a scaling function with parameter \( c \) for the vector field \(  X_F^c \).
\end{proposition}

\begin{proof}
    Suppose \( \mathcal{L}_{X_F^c} \theta = c \theta \). By Lemma \ref{lem:iXtheta}, we know that \( F = i_{X_F^c} \theta + k_1 \) for some constant \( k_1 \). Define \( \tilde{F} = F - k_1 \), so that \( \tilde{F} = i_{X_F^c} \theta \). Clearly, \( d\tilde{F} = dF = c \theta + i_{X_F^c} \omega \), and \( X_F^c = X_{\tilde{F}}^c \). 

    Then, we compute:
    \begin{align*}
        \mathcal{L}_{X_{\tilde{F}}^c} \tilde{F} &= i_{X_{\tilde{F}}^c} d\tilde{F} \\
        &= i_{X_F^c} (d(i_{X_F^c} \theta)) \quad (\text{since } X_F^c = X_{\tilde{F}}^c) \\
        &= i_{X_F^c} (\mathcal{L}_{X_F^c} \theta - i_{X_F^c} d\theta) \quad (\text{by Cartan's magic formula}) \\
        &= i_{X_F^c} (c\theta + i_{X_F^c} \omega) \\
        &= c i_{X_F^c} \theta \\
        &= c \tilde{F}.
    \end{align*}
    Therefore, \( \mathcal{L}_{X_{\tilde{F}}^c} \tilde{F} = c \tilde{F} \), as required.
\end{proof}


Suppose \( G \) is a group acting on \( Q \), with an induced cotangent lifted action on \( T^*Q \). It is well known that the cotangent lifted action admits an equivariant momentum map, which is expressed by the formula:
\[
\left\langle J(z), \xi \right\rangle = \left\langle p, \xi_Q(q) \right\rangle,
\]
where \( z = (q, p) \in T^*Q \), with \( q = \pi(z) \) as the base point under the canonical projection \( \pi: T^*Q \to Q \), and \( p \in T^*_q Q \).
In the Abelian case, the notion of equivariance simplifies to invariance, as the adjoint action is trivial. The equivariance of the momentum map is crucial, as it ensures that the momentum map is compatible with the group action, a necessary condition for performing symplectic reduction.

However, the conformal momentum map associated with the scaled cotangent lifted action, as described in equation \eqref{eqn:J-cotangent-lifted}, is not equivariant. Despite this, we have shown previously that the scaled cotangent lifted action is infinitesimally conformally invariant. As we will demonstrate below, the action is, in fact, fully conformally invariant. To establish this, we begin with a definition that extends the concept of equivariance.

\begin{definition}
Let $(M, \omega,\theta)$ be an exact symplectic manifold with a conformally symplectic action $\Phi$ (with parameter $c$) of the multiplicative group $\mathbb{R}^{+}$ on $M$, and let $J$ be a conformal momentum map for $\Phi$. Then $J$ is said to be {\bf conformally equivariant} if for all $g \in \mathbb{R}  ^{ + } $,
\[ 
J(\Psi_g(z)) = g^c \operatorname{Ad}^*_{g^{-1}}(J(z)), 
\]
where $\Psi$ is the action of $\mathbb{R}  ^{ + } $ on $Q$.
\end{definition}

Since  here the group is $ \mathbb{R}^{+}$, the adjoint action is trivial. Hence,
\[ 
J(\Psi_g(z)) = g^c \operatorname{Ad}^*_{g^{-1}}(J(z)) = g^c J(z), 
\]
which means that conformal equivariance reduces to conformal invariance. The proof that the conformal momentum map is conformally invariant can be achieved by a slight modification of the classical proof found in \cite{marsden2013introduction}.

\begin{theorem}
Let \( \Psi \) be an action of the multiplicative group \( \mathbb{R}^+ \) on \( Q \), and let \( \hat{\Psi}^c \) be the scaled cotangent lift of this action with parameter \( c \). Then, the conformal momentum map associated with \( \hat{\Psi}^c \) is conformally invariant.
\end{theorem}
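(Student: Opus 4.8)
The plan is to verify the pointwise identity $J(\hat\Psi^c_g(\alpha_q)) = g^c J(\alpha_q)$ for every $g \in G$ and every $\alpha_q \in T^\ast Q$. As observed just before the statement, since $G = \mathbb{R}^+$ is Abelian the adjoint action is trivial, so this equality is precisely conformal equivariance and hence conformal invariance. Because $J$ takes values in $\mathfrak{g}^\ast = \mathbb{R}$, it suffices to test against an arbitrary $\xi \in \mathfrak{g} = \mathbb{R}$ and show that the conformal momentum function satisfies $J_\xi(\hat\Psi^c_g(\alpha_q)) = g^c J_\xi(\alpha_q)$.

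First I would evaluate the left-hand side using the Noether formula of Theorem \ref{thm:Noether-formula}. Writing $\beta = \hat\Psi^c_g(\alpha_q)$, which is a covector based at $\Psi_g(q)$, the formula gives $J_\xi(\beta) = \langle \beta, \xi_Q(\Psi_g(q))\rangle$. Next I would substitute the definition of the scaled cotangent lift, which pulls the scaling constant out in front; taking $v = \xi_Q(\Psi_g(q))$ one obtains
\[
J_\xi(\hat\Psi^c_g(\alpha_q)) = \langle \hat\Psi^c_g(\alpha_q), \xi_Q(\Psi_g(q))\rangle = g^c \big\langle \alpha_q, T\Psi_{g^{-1}} \cdot \xi_Q(\Psi_g(q)) \big\rangle.
\]
This is where the factor $g^c$, which distinguishes the conformal statement from the classical invariance result of \cite{marsden2013introduction}, is produced.

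The heart of the argument is then to simplify $T\Psi_{g^{-1}} \cdot \xi_Q(\Psi_g(q))$. For this I would invoke Proposition \ref{prop:Ad} applied to the action $\Psi$ of $G$ on $Q$, which yields the transformation rule $T\Psi_g \cdot \xi_Q(q) = (\operatorname{Ad}_g \xi)_Q(\Psi_g(q))$ for infinitesimal generators. Applying it with $g$ replaced by $g^{-1}$ and base point $\Psi_g(q)$ gives $T\Psi_{g^{-1}} \cdot \xi_Q(\Psi_g(q)) = (\operatorname{Ad}_{g^{-1}} \xi)_Q(q)$. Since $G = \mathbb{R}^+$ is Abelian, $\operatorname{Ad}_{g^{-1}} \xi = \xi$, so this collapses to $\xi_Q(q)$. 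Substituting back gives $J_\xi(\hat\Psi^c_g(\alpha_q)) = g^c \langle \alpha_q, \xi_Q(q)\rangle = g^c J_\xi(\alpha_q)$, and since $\xi$ was arbitrary the claim follows.

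The step I expect to require the most care is the transformation law for the infinitesimal generator under $T\Psi_{g^{-1}}$; everything else is bookkeeping with the definitions. This is exactly the place where the classical proof must be modified, since one must track how the generator at the shifted base point $\Psi_g(q)$ pulls back, and it is the Abelian triviality of $\operatorname{Ad}$ that makes the expression reduce cleanly to $\xi_Q(q)$. As a sanity check one can compare with Example \ref{ex:momentum-map}, where $J(\mathbf{q},\mathbf{p}) = \mathbf{p}^T \mathbf{q}$ and $\hat\Psi^c_g(\mathbf{q},\mathbf{p}) = (g\mathbf{q}, g^{c-1}\mathbf{p})$ give $J(\hat\Psi^c_g(\mathbf{q},\mathbf{p})) = (g^{c-1}\mathbf{p})^T (g\mathbf{q}) = g^c\, \mathbf{p}^T \mathbf{q}$, confirming the emergence of the factor $g^c$.
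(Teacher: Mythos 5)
Your proposal is correct and follows essentially the same route as the paper's proof: apply the Noether formula of Theorem \ref{thm:Noether-formula}, use the definition of the scaled cotangent lift to extract the factor $g^c$, convert $T\Psi_{g^{-1}}\cdot\xi_Q(\Psi_g(q))$ into $(\operatorname{Ad}_{g^{-1}}\xi)_Q(q)$ via Proposition \ref{prop:Ad}, and invoke the triviality of the adjoint action for $G=\mathbb{R}^+$. The only cosmetic difference is that you collapse $\operatorname{Ad}_{g^{-1}}\xi$ to $\xi$ immediately, whereas the paper carries the coadjoint action through to the final line before discarding it.
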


\begin{proof}
Consider the diffeomorphism \( \hat{\Psi}_g^c: T^*Q \to T^*Q \) associated with the scaled cotangent lift action \( \hat{\Psi}^c \). Let \( z = (q, p) \in T^*Q \), where \( q = \pi(z) \) is the base point under the canonical projection \( \pi: T^*Q \to Q \), and \( p \in T^*_q Q \) is the corresponding covector. 
Then, since \[ \hat{\Psi}_g^c(z) = (\Psi_g(q), g^c \cdot (d \Psi_{g^{-1}})^*(p)), \] for any \( g \in \mathbb{R}^+ \), we have the following:

\begin{align*}
	  \left\langle J ( \hat{\Psi}_g^c (z)), \xi \right\rangle 
	  & = \left\langle g^c\,  (d \Psi_{g^{-1}})^*(p), \xi_Q(\Psi_g(q)) \right\rangle  \quad \mbox{By   Theorem \ref{thm:Noether-formula} and the definition of } \hat\Psi _g ^c \\
	  & = \left\langle p, g^c\, (d \Psi_{g^{-1}}) \xi_Q(\Psi_g(q)) \right\rangle \\
	  & = g^c \left\langle p, d \Psi_{g^{-1}} \circ \xi_Q \circ \Psi_g(q) \right\rangle \\
	  & = g^c \left\langle p, (\Psi_g^* \xi_Q)(q) \right\rangle \quad \text{(by definition of the pullback)} \\
	  & = g^c \left\langle p, (\operatorname{Ad}_{g^{-1}} \xi)_Q(q) \right\rangle \quad \text{(by Proposition \ref{prop:Ad})} \\
	  & = g^c \left\langle J(z), \operatorname{Ad}_{g^{-1}} \xi \right\rangle \quad \text{(by Theorem \ref{thm:Noether-formula})} \\
	  & = \left\langle g^c \operatorname{Ad}^*_{g^{-1}} J(z), \xi \right\rangle \\
	  & = \left\langle g^c J(z), \xi \right\rangle.
\end{align*}
Thus, we conclude that \( J(\hat{\Psi}_g(z)) = g^c J(z) \), confirming the conformal invariance of \( J \).
\end{proof}

\section{Relative Equilibria of Scaling Symmetries}\label{sec:relative-equlibria}

The definition of relative equilibria for scaling symmetries requires careful consideration, as traditional definitions in the context of symplectic actions (e.g., Definition 8.3 in \cite{libermann2012symplectic}) are not applicable in our scenario. As a consequence, we adopt the approach outlined in \cite{montaldi2000relative} to define relative equilibria.




\begin{definition}[{\bf Relative Equilibria of Scaling Symmetries}]\label{defn:relativeequilibrium}
    Let $(M, \omega,\theta, H)$ be a Hamiltonian system on an exact symplectic manifold. Suppose there is a conformal symplectic action $\Phi$ of the multiplicative group $\mathbb{R}^+$ on $M$. Assume that $H$ is conformally invariant under $\Phi$, implying that $\Phi$ represents a scaling symmetry of the Hamiltonian system.

    Let $z_e \in M$ and $\gamma: I \to M$ be the maximal integral curve (with $I$ being the maximal interval of definition) of the Hamiltonian system $(M, \omega,\theta, H)$ passing through $z_e$, with the initial condition $\gamma(0) = z_e$.

    Then $z_e \in M$ is called a {\bf relative equilibrium} (with respect to the scaling symmetry) of the Hamiltonian vector field $X_H$ if, for every $t \in I$, there exists $\eta(t) \in \mathbb{R}  ^{ + } $ such that $\eta(0) = \operatorname{Id}$ and $\gamma(t) = \Phi_{\eta(t)}(z_e)$.
\end{definition}

The following result provides several equivalent characterizations of relative equilibria for scaling symmetries.
\begin{theorem}[Relative Equilibrium Theorem for Scaling Symmetries]\label{thm:relative-equilibrium}
Under the hypotheses of Definition \ref{defn:relativeequilibrium}  the following statements are equivalent. 

\begin{enumerate}[(i)] 
	\item $ z _e $ is a relative equilibrium.
	\item The curve $ \gamma (t) $,  the maximal integral curve of the Hamiltonian system with $\gamma(0) = z_e$, is contained in the group orbit   $ \operatorname{Orb}(z _e) = \{ \Phi _g (z _e)\, |\, g \in \mathbb{R}  ^{ + } \}   $.
	\item There exists an element \(\xi\) in the Lie algebra \(\mathbb{R}  \) of the symmetry group \(\mathbb{R}  ^{ + } \) such that:\[
X_H(z_e) = \xi_M(z_e)
\]
where \(X_H\) is the Hamiltonian vector field generated by \(H\), and \(\xi_M\) is the infinitesimal generator of the group action on \(M\) corresponding to \(\xi\).
\item There exists a Lie algebra element  $ \xi \in \mathbb{R}  $ such that $ \gamma (t) = \Phi  _{ \eta (t) } (z _e) $ is a maximal integral curve of the Hamiltonian system $ (M , \omega , H) $ with 
	\[ \eta (t) =  \begin{cases}   [ (c - b) \xi t + 1 ]  ^{1/ (c - b) }& \mbox{if } b \neq c  \\e ^{ \xi t } & \mbox{if } b = c.\end{cases}
\]
\end{enumerate} 
\end{theorem}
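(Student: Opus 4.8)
The plan is to prove the chain of equivalences in the order $(i) \Leftrightarrow (ii) \Rightarrow (iii) \Rightarrow (iv) \Rightarrow (i)$, since each implication builds naturally on the previous one. The equivalence $(i) \Leftrightarrow (ii)$ is essentially a restatement of the definition: by Definition \ref{defn:relativeequilibrium}, $z_e$ is a relative equilibrium precisely when $\gamma(t) = \Phi_{\eta(t)}(z_e)$ for some curve $\eta(t)$ in $G$, and this is exactly the condition that $\gamma(t) \in \operatorname{Orb}(z_e)$ for all $t \in I$. I would spell out that containment in the orbit gives, for each $t$, an element $g \in G$ with $\gamma(t) = \Phi_g(z_e)$, and conversely.

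Next, for $(ii) \Rightarrow (iii)$, the key idea is to differentiate. If $\gamma(t) = \Phi_{\eta(t)}(z_e)$ lies in the orbit with $\eta(0) = \operatorname{Id}$, then differentiating at $t = 0$ gives $X_H(z_e) = \dot\gamma(0)$, which is the derivative of a curve in the orbit through $z_e$, hence tangent to the orbit. Since $G = \mathbb{R}^+$ is one-dimensional with Lie algebra $\mathfrak{g} = \mathbb{R}$, the tangent space to the orbit at $z_e$ is spanned by the single infinitesimal generator, so there is a scalar $\xi \in \mathfrak{g}$ with $X_H(z_e) = \xi_M(z_e)$. Writing $\eta(t) = e^{\tau(t)\xi}$ near $t=0$ and using the chain rule with the definition of $\xi_M$ makes this precise. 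For $(iii) \Rightarrow (iv)$, I would take the $\xi$ from $(iii)$ and propose the explicit curve $\eta(t)$ given in the statement, then verify that $t \mapsto \Phi_{\eta(t)}(z_e)$ is the integral curve of $X_H$ through $z_e$. The main computation is to check that $\frac{d}{dt}\Phi_{\eta(t)}(z_e) = X_H(\Phi_{\eta(t)}(z_e))$.

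The hard part will be this last verification, and it is where the peculiar form of $\eta(t)$ comes from. Here I expect to use Proposition \ref{prop:scaling-X_H}, which states $\Phi_g^\ast X_H = g^{b-c} X_H$, equivalently $(\Phi_g)_\ast X_H = g^{c-b} X_H$ after inverting. The velocity of the curve is $\frac{d}{dt}\Phi_{\eta(t)}(z_e) = \dot\eta(t)\,(\Phi_{\eta(t)})_\ast$ acting appropriately on $\xi_M(z_e)$; pushing the generator $\xi_M$ forward by $\Phi_{\eta(t)}$ and applying the scaling relation produces a factor $\eta(t)^{c-b}$, so that matching $\dot\gamma(t)$ with $X_H(\gamma(t)) = \xi_M(\gamma(t))$ forces $\eta(t)$ to satisfy the ODE $\dot\eta = \eta^{\,c-b}$ with $\eta(0)=1$. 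Solving this separable equation yields exactly $\eta(t) = [(c-b)\xi t + 1]^{1/(c-b)}$ when $b \neq c$ and $\eta(t) = e^{\xi t}$ in the degenerate case $b = c$ where the exponent blows up. Finally, $(iv) \Rightarrow (i)$ is immediate, since $(iv)$ exhibits the integral curve through $z_e$ as $\Phi_{\eta(t)}(z_e)$ with $\eta(0) = \operatorname{Id}$, which is precisely the defining condition of a relative equilibrium. I would organize the write-up so that the scaling relation of Proposition \ref{prop:scaling-X_H} does the heavy lifting in both the forward verification and the identification of $\eta(t)$.
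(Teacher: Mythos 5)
Your overall architecture is the same as the paper's: $(i)\Leftrightarrow(ii)$ is definitional, $(iii)$ follows by differentiating at $t=0$ and noting that the tangent space to the orbit is spanned by the infinitesimal generators, and $(iv)$ is obtained from $(iii)$ by exploiting Proposition \ref{prop:scaling-X_H}. (The paper gets to $(iv)$ by first flowing along $\xi_M$ via $y(t)=\Phi_{\exp(t\xi)}(z_e)$ and then performing an explicit time reparametrization $d\tau/dt=\exp((c-b)\xi t)$; writing the ODE for $\eta$ directly, as you propose, is an equivalent and legitimate route. Your explicit inclusion of $(iv)\Rightarrow(i)$ also closes the logical loop more cleanly than the paper does.)

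However, the ODE you write down in the key step $(iii)\Rightarrow(iv)$ is wrong, and its solution is not the stated $\eta(t)$. For the multiplicative group one computes $\frac{d}{dt}\Phi_{\eta(t)}(z_e)=\frac{\dot\eta(t)}{\xi\,\eta(t)}\,\xi_M\bigl(\Phi_{\eta(t)}(z_e)\bigr)$ (the factor $\dot\eta/\eta$ arises because the generator of $\mathbb{R}^+$ at $g$ is $g\,\partial_g$, not $\partial_g$), while Proposition \ref{prop:scaling-X_H}, together with $X_H(z_e)=\xi_M(z_e)$ and the identity $T\Phi_g(\xi_M(z_e))=\xi_M(\Phi_g(z_e))$ (equation \eqref{eqn:xiM}; note $(\Phi_g)_*\xi_M=\xi_M$ since $G$ is abelian, by Proposition \ref{prop:Ad}), gives $X_H(\Phi_{\eta}(z_e))=\eta^{\,b-c}\,\xi_M(\Phi_{\eta}(z_e))$ --- the scaling factor attaches to $X_H$, not to the pushed-forward generator. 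Matching the two expressions forces $\dot\eta=\xi\,\eta^{\,1+b-c}$ with $\eta(0)=1$, not $\dot\eta=\eta^{\,c-b}$: you have dropped the $\xi$, omitted the $\dot\eta/\eta$ contribution, and flipped the role of the exponent. Your equation integrates to $\eta(t)=\bigl[(1+b-c)t+1\bigr]^{1/(1+b-c)}$, which agrees with the claimed $\bigl[(c-b)\xi t+1\bigr]^{1/(c-b)}$ only in the accidental case $c-b=\tfrac12$ (and not even for the Kepler scaling, where $c-b=\tfrac32$); in the degenerate case $b=c$ it gives $\eta=1+t$ rather than $e^{\xi t}$. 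The correct ODE $\dot\eta=\xi\,\eta^{\,1+b-c}$ does reproduce both branches of the theorem, so the fix is local, but as written the verification you outline would fail to produce the stated formula.
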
 
\begin{proof}
The logic of the proof will go as follows:
\[ (i) \iff (ii) , \quad \mbox{and}\quad  (i)\implies (iii)\implies (iv)\implies (i).\]

	$(i)\implies (ii) $.
	By definition, for each $ t \in I $ there exists $ \eta (t) \in 
	\mathbb{R}  ^{ + } $ such that  $ \eta (0)  = \operatorname{Id} $ and     $\gamma(t) =  \Phi_{\eta(t)}(z_e)$. Thus,  $\gamma(t)$ remains in the group orbit of $z_e$ for all $t \in \mathbb{R}$. 

$(ii)\implies (i) $. 
Let $\gamma: I \to \mathbb{R}  $ be the maximal integral curve of the Hamiltonian system with $\gamma(0) = z_e$. Then, by assumption, $\gamma(t)$ lies in $\operatorname{Orb}(z_e)$,   for all $t \in I$. 
   Hence, by definition of the group orbit, for each $ t \in \mathbb{R}  $ there is $ \eta(t) \in 
   \mathbb{R}  ^{ + } $ such that   \[
   \gamma(t) =\Phi_{\eta(t)}(z_e),
   \]
   that is, $ z _e $ is a relative equilibrium.

	$(i)\implies (iii) $. By the definition of relative equilibrium it follows that the trajectory $ \gamma (t)  $ through $ z _e $ is contained in a single group orbit $ \operatorname{Orb}(z _e) $, so its derivative $ X _H (z _e) = \dot\gamma (0) $ lies on the tangent space to the group orbit $ T _{ z _e }  (\operatorname{Orb } (z _e)) = \{ \xi _M (z _e) | \xi \in 
	\mathbb{R}  \} $, where $ \xi _M (z _e) $ is the  infinitesimal generator of the action at $ z_e $. Hence, there is $ \xi \in \mathbb{R}  $ such that $ X _H (z _e) = \xi _M (z _e) $.   

$(iii)\implies (iv) $. We assume that $ X _H (z _e) $ is equal to $ \xi _M (z _e) $, and we generalize the argument  used in Proposition 8.5 page 239 in \cite{libermann2012symplectic}.

First, consider the path \( y(t) = \Phi(\exp(t \xi), z_e) \), where \( \Phi: \mathbb{R}^+ \times M \to M \) is the group action. Since \( y(0) = z_e \), we want to verify that \( y(t) \) satisfies the ODE \( \frac{dy}{dt} = \xi_M(y(t)) \). The infinitesimal generator \( \xi_M(z_e) \) is given by:
\[
\xi_M(z_e) = \left. \frac{d}{d\tau} \right|_{\tau = 0} \Phi(\exp(\tau \xi), z_e).
\]
Then, using a shift in the time parameter \( s = t + \tau \), we have:

\begin{equation} 	\begin{aligned}\label{eqn:dy/dt} \frac{ d y (t) } { dt } & =   \left. \frac{ d } { ds }\right |  _{ s= t }  \Phi (\exp (s \xi) , z_e)   =\left. \frac{ d } { d\tau }\right |  _{ \tau= 0 }  \Phi (\exp ((t+\tau) \xi) , z_e) \\
				& = \left. \frac{ d } { d\tau }\right |  _{ \tau= 0 } \Phi(\exp(\tau \xi)  , \Phi (\exp (t \xi) , z_e)) \\
				& = \xi _M ( \Phi (\exp (t \xi) , z_e))\\
				& = \xi _M (y (t))   
\end{aligned}
			\end{equation} 
so $ y (t) $ is a path satisfying $ y (0) = z _e $ and    the ODE $ \frac{ d y } { dt } = \xi _M (y (t)) $.

Next, we want to show that \( \xi_M(y(t)) = d\Phi_{\exp(t \xi)}(\xi_M(z_e)) \). Recall that for a smooth map \( f: M \to N \), the differential \( df \) acts on a vector \( v \in T_x M \) as:
\[
df_x(v) = \left. \frac{d}{d\tau} \right|_{\tau = 0} f(\alpha(\tau)),
\]
where \( \alpha(\tau) \) is a path in \( M \) such that \( \alpha(0) = x \) and \( \left. \frac{d}{d\tau} \right|_{\tau = 0} \alpha(\tau) = v \). The differential   $ df _x $ is also often denoted by $ f _\ast (x)  $ and called pushforward.  Using this definition, we obtain:
\begin{equation}\label{eqn:xiM}
\begin{aligned}
\xi_M(y(t)) &= \left. \frac{d}{d\tau} \right|_{\tau = 0} \Phi(\exp(t \xi), \Phi(\exp(\tau \xi), z_e)) \\
            &= \left. \frac{d}{d\tau} \right|_{\tau = 0} \Phi_{\exp(t \xi)}(y(\tau)) \\
            &= d\Phi_{\exp(t \xi)}(\xi_M(z_e)).
\end{aligned}
\end{equation}					
We now seek a formula for $ X _H( y (t))    $. 

Since the action is conformally symplectic with parameter \( c \) and \( H \) is conformally invariant with parameter \( b \), it follows by Proposition \ref{prop:scaling-X_H}  that
\[
	((\Phi_g)^\ast X_H)(z_e) = g^{b - c} X_H(\Phi_g(z_e)).
\]

Applying \( (\Phi_g)_\ast \) to both sides of this equation, we obtain, for each \( g \in \mathbb{R}^+ \),
\[
(\Phi_g)_\ast \left(((\Phi_g)^\ast X_H)(z_e)\right) = (\Phi_g)_\ast \left( g^{b - c} X_H(\Phi_g(z_e)) \right).
\]

Since \( (\Phi_g)_\ast \circ (\Phi_g)^\ast \) acts as the identity, this simplifies to
\[
X_H(\Phi_g(z_e)) = g^{b - c} \cdot ((\Phi_g)_\ast X_H)(  \Phi_g(z_e)) = g^{b - c} \cdot d\Phi_g (X_H(z_e)),
\]
where we have expressed the pushforward as \( d\Phi_g \) for clarity.


Then, taking $ g = \exp (\xi t) $ we find   
\begin{equation} \begin{aligned} \label{eqn:XH} X _H (y (t))&  =   X _H (\Phi _{\exp \xi t } (z _e))\\
	& = \exp( (b - c)\xi t) \cdot d \Phi _{\exp \xi t } (X _H (z _e)) \\
	& = \exp( (b - c)\xi t)\cdot d \Phi _{\exp \xi t } (\xi_M(z _e))\\
	& = \exp( (b - c)\xi t)\cdot \xi _M (y (t)) 
\end{aligned}\end{equation}  
where we used that $ \xi _M (z _e) = X _H (z _e) $, and equation \eqref{eqn:xiM}.

If \( b = c \), then using equations \eqref{eqn:dy/dt} and \eqref{eqn:XH}, we find that
\[
\frac{d}{dt} y(t) = \xi_M(y(t)) = X_H(y(t)).
\]
Since \( \gamma(t) \) is an integral curve of the Hamiltonian system, we must have \( \frac{d}{dt} \gamma(t) = X_H(\gamma(t)) \). Therefore, the curves \( y(t) \) and \( \gamma(t) \) are maximal integral curves of the same differential equation, with the same initial condition. Consequently, \( y(t) \) and \( \gamma(t) \) coincide, which implies that
\[
\Phi_{\exp(t \xi)}(z_e) = y(t) = \gamma(t) = \Phi_{\eta(t)}(z_e), 
\]
and thus \( \eta(t) = e^{\xi t} \).


If $ b \neq c $, we need to show that $\gamma(t)$, the Hamiltonian integral curve, is a reparametrization of the group trajectory $y(t)$.
Using equation \eqref{eqn:dy/dt} and \eqref{eqn:XH} we obtain 
\[
	\frac{ d  } { dt }y(t) = \xi _M (y (t)) = \exp \left( (c - b) \xi t \right) \cdot X _H (y (t)),   
\]
while  $ \gamma (t) $ to be an integral curve of the Hamiltonian system we must have $	\frac{ d  } { d \tau}\gamma(\tau) =  X _H (\gamma (\tau))$. This shows that the group trajectory $y(t)$ is not directly an integral curve of the Hamiltonian vector field $X_H$, but rather follows a scaled version of it, with the scaling factor $\exp((c-b)\xi t)$.

To convert \( y(t) \) into an actual Hamiltonian trajectory, we introduce a new time variable \( \tau \), so that \( \gamma(\tau) = y(t(\tau)) \)  satisfies the equation: 
\[
\frac{d}{d\tau}  y(t(\tau)) = X_H( y(t(\tau))).
\]
We can achieve this by defining \( \tau \) such that:
\[
\frac{d\tau}{dt} = \exp((c - b) \xi t).
\]
Solving this differential equation with the initial condition \( \tau(0) = 0 \) gives:
\[
\tau = \frac{1}{(c - b) \xi} \left[ \exp\left((c - b) \xi t \right) - 1 \right].
\]
Now, inverting this expression to find \( t \) in terms of \( \tau \), we get:
\[
t = \frac{1}{(c - b) \xi} \ln\left[(c - b) \xi \tau + 1 \right].
\]
Thus, we can express the integral curve \( \gamma(\tau) \) as a reparametrization of \( y(t) \) using \( t(\tau) \):
\[
\gamma(\tau) = y(t(\tau)) = \Phi_{\exp(\xi t(\tau))}(z_e).
\]
Substituting \( t(\tau) \) into the expression for \( \gamma(\tau) \) given above, we obtain:
\[
\gamma(\tau) = \Phi_{\eta(\tau)}(z_e), \quad \text{where} \quad \eta(\tau) = \left[(c - b) \xi \tau + 1 \right]^{1/(c - b)}.
\]

This shows that \( \gamma(\tau) \) is indeed an integral curve of the Hamiltonian system, reparametrized from the scaled group trajectory \( y(t) \).

$(iv)\implies (i) $.  This is clear, since  we are given an explicit formula for $ \gamma (t) = \Phi _{ \eta (t) } (z _e) $.     

\end{proof} 

\begin{remark} A more explicit proof that  $ (i)\implies (iii) $ is provided below.
	On the one hand we have, 
	\begin{align*} \xi _M(z_e) = \left.\frac{ d } { dt }  \right|_{t=0}  [\Phi_{\exp(t \xi)}( z_e)] &  = \left.\frac{ d } { dt }  \right|_{t=0}  [\Phi(\exp(t \xi), z_e)] \\
				& = \left[ D _1 \Phi (\exp t \xi , z_e) \cdot \frac{ d } { dt} (\exp (t \xi)) \right] _{ t = 0 }\\
				& =    D _1 \Phi (	\operatorname{Id} , z_e)  \cdot \xi 	\end{align*} 
On the other hand we have, 
	\begin{align*} X_H(z_e) = \left.\frac{ d } { dt }  \right|_{t=0} \gamma(t) &  = \left.\frac{ d } { dt }  \right|_{t=0}  [\Phi(g(t), z_e)] \\
				& = \left[ D _1 \Phi (g(t), z_e) \cdot \frac{ d } { dt} g(t) \right] _{ t = 0 }\\
				& =    D _1 \Phi (	\operatorname{Id} , z_e)  \cdot  \frac{ d } { dt} g(0)	\end{align*} 
Hence $ X _H (z _e) =  \xi _M (z _e) $ for some Lie algebra element $ \xi \in \mathbb{R}  $ such that $ \xi = \frac{ d } { dt} g (0) $.   
\end{remark}

\begin{theorem} 
  Let $(M, \omega,\theta, H)$ be a Hamiltonian system on an exact symplectic manifold. Suppose there is a conformal symplectic action $\Phi$ of the multiplicative group $\mathbb{R}^+$ on $M$. Assume that $H$ is conformally invariant under $\Phi$, and that $ J $ is the conformal momentum map with parameter $c$ of this action. 
Then the  point $ z _e $ is a relative equilibrium if and only if there is a Lie algebra element $\xi \in \mathbb{R}  $ such that 
\[
	d H _{ \xi } +(c \xi) \theta = d (H - \xi J) +(c \xi) \theta   =0
\]
where $ H _{ \xi } $ is the {\bf augmented Hamiltonian}. 
\end{theorem}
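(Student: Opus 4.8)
The plan is to reduce the statement to the pointwise characterization of relative equilibria already established in Theorem~\ref{thm:relative-equilibrium}, and then to translate the resulting vector-field equation into a one-form equation using the nondegeneracy of $\omega$.

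First I would invoke the equivalence $(i)\Leftrightarrow(iii)$ of Theorem~\ref{thm:relative-equilibrium}: the point $z_e$ is a relative equilibrium if and only if there exists $\xi\in\mathfrak{g}$ with $X_H(z_e)=\xi_M(z_e)$. This converts the assertion into showing that, at the single point $z_e$, the identity $X_H(z_e)=\xi_M(z_e)$ is equivalent to the vanishing of $dH_\xi+(c\xi)\theta$ at $z_e$, where $H_\xi=H-\xi J$.

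Next I would write out the two defining relations and contract them. The Hamiltonian vector field $X_H$ is taken with conformal parameter $0$, so its defining equation is the plain $i_{X_H}\omega=dH$ with no $\theta$-term. On the other hand, the defining property of the conformal momentum map (see the Definition of conformal momentum map and Remark~\ref{rmk:momentum-map}) gives $i_{\xi_M}\omega+(c\xi)\theta=dJ_\xi=d(\xi J)$, hence $i_{\xi_M}\omega=d(\xi J)-(c\xi)\theta$. Because $\omega$ is nondegenerate at $z_e$, the equality of tangent vectors $X_H(z_e)=\xi_M(z_e)$ holds if and only if the contracted one-forms agree there, i.e. $i_{X_H}\omega|_{z_e}=i_{\xi_M}\omega|_{z_e}$. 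Substituting the two expressions yields $dH|_{z_e}=d(\xi J)|_{z_e}-(c\xi)\theta|_{z_e}$, which rearranges to $d(H-\xi J)|_{z_e}+(c\xi)\theta|_{z_e}=0$, exactly $dH_\xi+(c\xi)\theta=0$ at $z_e$. Since every step is an equivalence, reading the chain in both directions proves the statement.

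There is no serious obstacle here; the only point requiring care is the translation step, where I must use that $\omega$ is nondegenerate \emph{pointwise}, so that equality of the one-forms $i_X\omega$ at $z_e$ is genuinely equivalent to equality of the vectors $X$ at $z_e$, and not merely implied by it. I would emphasize that all of the $\theta$-dependence in the final equation originates from the conformal momentum map, which is precisely what distinguishes this criterion from the classical symplectic case, where relative equilibria are the critical points of $H_\xi$ (i.e. $dH_\xi=0$) with no $\theta$-term present.
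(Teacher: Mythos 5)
Your proposal is correct and follows essentially the same route as the paper: both reduce to the characterization $X_H(z_e)=\xi_M(z_e)$ from Theorem \ref{thm:relative-equilibrium}, subtract the defining relations $i_{X_H}\omega=dH$ and $i_{\xi_M}\omega+(c\xi)\theta=dJ_\xi$, and use pointwise nondegeneracy of $\omega$ to make the translation between vectors and one-forms an equivalence. Your explicit remark that nondegeneracy is what makes the converse direction work is exactly the point the paper relies on as well.
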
 
\begin{proof} 
First suppose $ z _e $ is a relative equilibrium. By Theorem \ref{thm:relative-equilibrium}
 $ X _H (z _e) = \xi _M (z _e) $ for some $ \xi \in \mathfrak{g} $. By the definition of the conformal momentum map we have that $ X _H (z _e) = X _{ J _{ \xi } } ^{ \xi c } (z _e) $. Since 
\[  i _{ X _H }  \omega = dH, \mbox{ and } i _{  X _{ J _{ \xi } }  ^{ \xi c } } \omega + (c \xi) \theta = d  J _{ \xi }\]
subtracting the two equations gives 
\begin{align*} 
	i _{ X _H }  \omega -  i _{  X _{ J _{ \xi } }  ^{ \xi c } } \omega - (c \xi) \theta = d (H-  \xi J) \\
	i _{ X _H -  X _{ J _{ \xi } }  ^{ \xi c }} \omega - (c \xi) \theta = d (H - \xi J).  
\end{align*} 
Since  $ X _H (z _e) = X _{ J _{ \xi } } ^{ \xi c } (z _e) $, the equations above give 
\[
	d H _{ \xi }+(c \xi) \theta  = d (H - \xi J) +(c \xi) \theta   =0
\]
which is the condition for relative equilibria.

If, on the other hand, \( dH_{\xi} + (c \xi) \theta = 0 \), then by reversing the computations above and using the nondegeneracy of the symplectic form \( \omega \), it follows that \( X_H - X_{J_{\xi}}^{\xi c} = 0 \). This implies, by the definition of the conformal momentum map, that \( X_H = \xi _M  \), that is, \(X _H \) points in the direction of the group orbit through \( z_e \).

\end{proof} 

We can now give a general definition of central configurations. 
\begin{definition}[Central Configurations]
    Let \( (T^\ast Q, \omega_0,\theta _0 ,  H) \) be a Hamiltonian system on \( T^\ast Q \). Let \( \Psi \) be an action of the multiplicative group \(  \mathbb{R}^+ \) on \( Q \), and let \( \Phi = \hat{\Psi} ^c  \) be the induced scaled cotangent lifted action on \( T^\ast Q \), which is conformally symplectic. Assume that \( H \) is conformally invariant under \( \Phi \) so  that \( \Phi \) is a scaling symmetry of the system. 

Now,  let \( z_e = (q_e, p_e) \) denote a covector based at \( q_e \in Q \). If \( z_e = (q_e, p_e) \) is a relative equilibrium for the scaling symmetry, then the point  \( q_e \in Q \) is called a {\bf central configuration}.
\end{definition}
The following example demonstrates how the concepts and definitions discussed above apply to the Newtonian $n$-body problem.
\begin{example}\label{ex:n-body}
    Consider the Newtonian \( n \)-body problem, which describes the motion of \( n \) point particles with masses \( m_i \in \mathbb{R}^+ \), positions \( q_i \in \mathbb{R}^3 \), and momenta \( p_i \in \mathbb{R}^3 \), where \( i = 1, 2, \ldots, n \). Let \( \mathbf{q} = (q_1, \ldots, q_n) \in Q = \mathbb{R}^{3n} \) and \( \mathbf{p} = (p_1, \ldots, p_n) \in \mathbb{R}^{3n} \) so that \( (\mathbf{q}, \mathbf{p}) \) are canonical coordinates on \( T^\ast Q \), and let \( M = \operatorname{diag}(m_1,m _1, m _1  , m  _2 , m _2, m _2   \ldots, m_n, m  _n , m _n) \).

    The Hamiltonian function for this problem is
    \[
        H = \frac{1}{2} \mathbf{p}^T M^{-1} \mathbf{p} + U(\mathbf{q}),
    \]
    where
    \[
        U(\mathbf{q}) = - \sum_{i<j} \frac{m_i m_j}{\|q_i - q_j\|}
    \]
    is a homogeneous function of degree \(-1\). Then Hamilton's equations  of motion are
\[ \mathbf{\ddot q} = -M ^{ - 1 } \frac{ \partial U(\mathbf{q} ) } { \partial \mathbf{q} }. \]
We assume that the center of mass of the particles is located at the origin, i.e., \( m_1 q_1 + m_2 q_2 + \cdots + m_n q_n = 0 \). This assumption simplifies the analysis by eliminating technical complications due to the non-trivial interaction between the scaling group \(\mathbb{R}^+\) and translations in \( \mathbb{R}^2 \). Together, translations and scalings form a semidirect product, so fixing the center of mass at the origin allows us to isolate the effects of scaling.

Consider the action of the group \( \mathbb{R}^+ \) on \( T^\ast Q \) given by
\[
\Phi(g, (\mathbf{q}, \mathbf{p})) = (g \mathbf{q}, g^{-1/2} \mathbf{p}).
\]

From Example \ref{ex:momentum-map}, we know that \( \Phi \) is a scaled cotangent lift, implying that \( \Phi \) is a conformally symplectic action. Furthermore, it is easy to verify that \( \Phi^\ast H = g^{-1} H \), showing that \( \Phi \) is a scaling symmetry, sometimes referred to as Kepler scaling \cite{bravetti2023scaling}. The momentum map for this action, as noted in Example \ref{ex:momentum-map}, is \( J = \mathbf{p}^T \mathbf{q} \).

    To obtain the equations for the central configurations, we first write the equation \( d(H - \xi J) + (c \xi) \theta_0 = 0 \), where $ \theta _0 $ is the canonical one-form,  explicitly in coordinates:
    \begin{align*}
	    d(H - \xi J) + (c \xi) \theta_0 &=\frac{\partial H}{\partial \mathbf{q}} d\mathbf{q}  + \frac{\partial H}{\partial \mathbf{p}} d\mathbf{p} - \xi \left( \frac{\partial J}{\partial \mathbf{q}} d\mathbf{q} + \frac{\partial J}{\partial \mathbf{p}} d\mathbf{p} \right) + c \xi \mathbf{p} d\mathbf{q} \\
        &= \frac{\partial U}{\partial \mathbf{q}} d\mathbf{q} + M^{-1} \mathbf{p} d\mathbf{p} - \xi \left( \mathbf{p} d\mathbf{q} + \mathbf{q} d\mathbf{p} \right) + c \xi \mathbf{p} d\mathbf{q} \\
        &= \left( \frac{\partial U}{\partial \mathbf{q}} - \xi \mathbf{p} + c \xi \mathbf{p} \right) d\mathbf{q} + \left( M^{-1} \mathbf{p} - \xi \mathbf{q} \right) d\mathbf{p} = 0.
    \end{align*}

    From this, we get the equations
    \[
        \frac{\partial U}{\partial \mathbf{q}} + (c - 1) \xi \mathbf{p} = 0, \quad \text{and} \quad \mathbf{p} = \xi M \mathbf{q}.
    \]
    Hence,
    \[
        \frac{\partial U}{\partial \mathbf{q}} + (c - 1) \xi^2 M \mathbf{q} = 0.
    \]

    From Example \ref{ex:momentum-map}, we know that we must choose \( c = \frac{1}{2} \), then
    \[
        \frac{\partial U}{\partial \mathbf{q}} = \frac{1}{2} \xi^2 M \mathbf{q},
    \]
which gives the classical equation for the central configurations. Using Hamilton's equations we see that 
$  \mathbf{\ddot q} = \frac{1}{2} \xi ^2 \mathbf{q} $.  
This indicates that the acceleration vector of each particle is directed towards the origin with the magnitude of the acceleration being proportional to the particle's distance from the origin. Moreover, from the fact that the potential is homogeneous of degree \(-1\), it is easy to show that
    \[
        \xi^2 = - \frac{2U(\mathbf{q})}{\mathbf{q}^T M \mathbf{q}}.
    \]
\end{example}

\section{Relative Equilibria of Scaling Symmetry for Cotangent Bundles}\label{sec:rel-eq-cot-bundle}
We now turn our attention to \textbf{simple mechanical systems}, which are defined on a manifold \( Q \) with a Lagrangian \( L : TQ \to \mathbb{R} \).
In this context  \( (q, v) \in TQ \) is an element of the tangent bundle, \( v \in T_qQ \) represents a tangent vector in the tangent space at \( q \), \( (q, p) \in T^*Q \) is an element of the cotangent bundle, and \( p \in T_q^*Q \) denotes an element of the cotangent space at  \( q \).

The Lagrangian is given by:
\[
L(q, v) = K(q, v) - U(q),
\]
where \( K(q, v) \) represents the kinetic energy, and \( U(q) \) denotes the potential energy.

Assume that \( Q \) is equipped with a Riemannian metric \( g \), which defines a smoothly varying collection of inner products \( \langle\langle \cdot, \cdot \rangle\rangle_q \) on each tangent space \( T_qQ \). This inner product induces a norm \( \| \cdot \|_q \), defined by:
\[
\|v\|^2 _q  = \langle\langle v, v \rangle\rangle_q, \quad v \in T_qQ.
\]
The kinetic energy is then expressed as:
\[
K(q, v) = \frac{1}{2} \|v\|_q^2,
\]
where \( v \in T_qQ \) is a tangent vector at \( q \in Q \).

Let \( L : TQ \to \mathbb{R} \) be a smooth Lagrangian function. At each point \( q \in Q \), we restrict \( L \) to the fiber over \( q \), namely the tangent space \( T_qQ \), giving a function:
\[
L_q : T_qQ \to \mathbb{R}.
\]
On each tangent space, we define the mapping:
\[
\mathbb{F}L_q : T_qQ \to T_q^*Q, \quad v \mapsto dL_q(v),
\]
where \( dL_q(v) \in T_q^*Q \) is the derivative (or differential) of \( L_q \) at \( v \). This map associates a velocity \( v \in T_qQ \) with a momentum \( p = dL_q(v) \in T_q^*Q \).

For simple mechanical systems, given $ q \in Q $,  this fiber-wise map satisfies:
\[
\left\langle \mathbb{F}L_q(v), w \right\rangle = \left. \frac{d}{ds} \right|_{s=0} L(q, v+ s w) = \left\langle\left\langle v, w \right\rangle\right\rangle_q,
\]
for all \(v,  w  \in T_qQ \), where the inner product on the right-hand side is  the one induced by the Riemannian metric.

These fiber-wise maps \( \mathbb{F}L_q \) collectively assemble into the global map known as the \textbf{Legendre transform} or \textbf{fiber derivative}:
\[
\mathbb{F}L : TQ \to T^*Q, \quad (q, v) \mapsto (q, p) = (q, \mathbb{F}L_q(v)).
\]

In local coordinates, the Legendre transformation takes the form:
\[
	\mathbb{F}L(\mathbf{q}, \mathbf{\dot{q}}) = \left(\mathbf{q},\mathbb{F}L _{ q } (\mathbf{q}, \mathbf{\dot{q}}) \right)=\left(\mathbf{q}, \frac{\partial L}{\partial \mathbf{\dot{q}}}\right),
\]
where \( \mathbf{q} \) are the position coordinates and \( \mathbf{\dot{q}} \) are the velocity coordinates.

Then the inner product on the fibers of $ T ^\ast Q $ induced by $ \left\langle \left\langle \cdot , \cdot \right\rangle   \right\rangle_q$ is defined by the expression
\[\left\langle \left\langle p , \tilde p\right\rangle \right\rangle _{ q}^\ast = \left\langle \left\langle \mathbb{F}L_q ^{ - 1 } (p) , \mathbb{F}L_q ^{ - 1 } ({\tilde p}) \right\rangle \right\rangle _q       \]
for all $ p  , \tilde p \in T_q ^\ast Q $. 
The Lagrangian system on $TQ $ corresponds, via the Legendre transform, to a Hamiltonian system on $ T ^\ast Q $ with Hamiltonian: 
\[ H (q,p) =  K (q,p) + U (q) , \]
where:
\[ K  (q,p) =   \frac{1}{2}  \left\langle \left\langle p , p \right\rangle \right\rangle  _q ^\ast  = \frac{1}{2} \left\langle \left\langle \mathbb{F}L_q ^{ - 1 } (p) , \mathbb{F}L_q ^{ - 1 } ({ p}) \right\rangle \right\rangle_q.  
\] 
Note that we use the same symbol \( K \) for both the function \( K(q, p) \) defined on the \( T^*Q \) and \( K(q, v) \) defined on \( TQ \), which is a  slight abuse of notation.

Let \( \Psi \) be the action of \( \mathbb{R}^{+} \) on \( Q \), the configuration manifold of a simple mechanical system, and let $ \hat{\Psi} $ be its cotangent lift.  Assume that \( \Psi \) satisfies the following conditions:

\begin{assumption}\label{a:1}
	The potential energy \( U(q) \) is conformally invariant with parameter \( b \) with respect to $ \hat{\Psi} $, meaning that \( \hat{\Psi}_g^* U(q) =  \Psi_g^* U(q) = U(\Psi_g(q)) = g^b U(q) \) for some \( b \in \mathbb{R} \).
\end{assumption}
\begin{assumption}\label{a:2} The kinetic energy \( K(q,p) \) is conformally invariant with parameter \( a \) with respect to $ \hat{\Psi} $, meaning that \( \hat\Psi_g^* K(q,p) =g^a K(q,p) \) for some \( a \in \mathbb{R} \).
\end{assumption}

We have the following proposition:
\begin{proposition}\label{prop:assumptions}
Under Assumption \ref{a:1} and \ref{a:2}, if the action \( \Psi \) induces a scaled cotangent lifted action \( \hat{\Psi}^c \) on \( T^*Q \), defined by 
\[\hat{\Psi}^c _g (q,p) = \left(\Psi _g (q) , g ^c \cdot (d \Psi _{ g ^{ - 1 } }) ^\ast  (p)\right),     \]
then \( \hat{\Psi}^c \) is a conformally symplectic action. In particular, if $c = (b-a)/2 $, 
then \( \hat{\Psi}_g^c \) satisfies
\[
(\hat{\Psi}_g^c)^* K(q,p) = K(\hat{\Psi}_g^c(q,p)) = g^b K(q,p).
\]
In this case, the Hamiltonian \( H \) is scaling invariant, meaning
\[
(\hat{\Psi}_g^c)^* H = g^b H.
\]
Thus, \( \hat{\Psi}_g ^c  \) represents a scaling symmetry.
\end{proposition}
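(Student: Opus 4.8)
The plan is to verify the three assertions in turn, leaning on the structural results already established. The conformal symplecticity of $\hat\Psi^s$ is almost immediate: for each fixed $g$ the map $\hat\Psi^s_g = g^{(a+b)/2}\hat\Psi_g$ is the scaled cotangent lift of the diffeomorphism $\Psi_g$ with scale factor $\lambda = g^{(a+b)/2}$, so Corollary \ref{cor:symplectic-form-preserved} gives $(\hat\Psi^s_g)^\ast\omega_0 = g^{(a+b)/2}\omega_0$ directly. Hence $\hat\Psi^s$ is a conformally symplectic action with parameter $s = (a+b)/2$.

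The core of the proposition is the transformation law for the kinetic energy viewed as a function on $T^\ast Q$, namely $K(p_q) = K(\mathbb{F}L^{-1}(p_q))$. I would first determine how $K$ transforms under the \emph{unscaled} cotangent lift $\hat\Psi_g$ and then reinstate the scalar factor $g^{(a+b)/2}$ at the end. The idea is to relate $\mathbb{F}L^{-1}(\hat\Psi_g(p_q))$ to $T\Psi_g(\mathbb{F}L^{-1}(p_q))$ via Lemma \ref{lem:legendre-lift}, then combine this with Assumption \ref{a:2} and the degree-two homogeneity of $K$ along the fibers of $TQ$ (so that $K(\mu\,u_q) = \mu^2 K(u_q)$). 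Writing $v_q = \mathbb{F}L^{-1}(p_q)$, this pins down $K\circ\hat\Psi_g$ on $T^\ast Q$; since $\hat\Psi^s_g$ further multiplies the covector by $g^{(a+b)/2}$ and $K$ is quadratic in $p$, composition contributes an additional factor $g^{a+b}$, and the two contributions should collapse to the single exponent $b$, yielding $(\hat\Psi^s_g)^\ast K = g^b K$.

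With the kinetic term under control, the Hamiltonian claim is a one-line assembly. Because $H = K + U$, Assumption \ref{a:1} gives $(\hat\Psi^s_g)^\ast U = \Psi_g^\ast U = g^b U$ (the fibre rescaling leaves the base function $U$ untouched), and the preceding step gives $(\hat\Psi^s_g)^\ast K = g^b K$, so $(\hat\Psi^s_g)^\ast H = g^b H$. Together with the conformal symplecticity established above, this is precisely the statement that $\hat\Psi^s$ is a conformally symplectic action under which $H$ is conformally invariant, i.e. a scaling symmetry in the sense of Definition \ref{def:scaling-symmetry}.

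I expect the main obstacle to be the exponent bookkeeping in the kinetic step, and in particular the sign of $a$. Assumption \ref{a:2} states that the metric itself rescales by $g^a$ under the tangent lift, so the induced cometric on $T^\ast Q$ rescales by $g^{-a}$; consequently the unscaled cotangent lift scales $K$ on $T^\ast Q$ by $g^{-a}$ rather than $g^{a}$, meaning the literal reading of Lemma \ref{lem:legendre-lift} must be corrected by an explicit conformal factor, the true relation being $\mathbb{F}L^{-1}(\hat\Psi_g(p_q)) = g^{-a}\,T\Psi_g(\mathbb{F}L^{-1}(p_q))$. This is exactly what forces the scaling exponent to be $s = (a+b)/2$, through the identity $2s - a = b$. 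As a sanity check I would specialize to the Kepler/$n$-body normalization of Examples \ref{ex:momentum-map} and \ref{ex:n-body}, where $a = 2$, $b = -1$ and $s = \frac{1}{2}$, to confirm that the conformal factor sits on the correct side.
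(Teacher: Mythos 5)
Your argument is correct and follows the same overall strategy as the paper: conformal symplecticity comes from Corollary \ref{cor:symplectic-form-preserved} applied to $\hat{\Psi}^s_g=g^{(a+b)/2}\hat{\Psi}_g$, the kinetic term is handled by transporting $K$ back to $TQ$ through the Legendre transform and invoking Assumption \ref{a:2} together with fibrewise quadratic homogeneity, and the potential term is immediate since $U$ lives on the base. What deserves emphasis is that the ``obstacle'' you flag at the end is not a hypothetical worry but a genuine defect in the paper's own proof, and your bookkeeping is the correct one. Lemma \ref{lem:legendre-lift} as stated asserts $\mathbb{F}L(T\Psi_g(v_q))=\hat{\Psi}_g(p_q)$, which holds only when $\Psi_g$ is an isometry; under Assumption \ref{a:2} the correct intertwining is $\mathbb{F}L\circ T\Psi_g=g^{a}\,\hat{\Psi}_g\circ\mathbb{F}L$, equivalently your relation $\mathbb{F}L^{-1}(\hat{\Psi}_g(p_q))=g^{-a}\,T\Psi_g(\mathbb{F}L^{-1}(p_q))$, so the unscaled cotangent lift rescales the cometric (hence $K$ on $T^\ast Q$) by $g^{-a}$, not $g^{a}$. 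The paper's proof, taking the lemma literally, arrives at $\left\langle\left\langle \hat{\Psi}_g p_q,\hat{\Psi}_g p_q\right\rangle\right\rangle_{T^\ast Q}=g^{a}\left\langle\left\langle p_q,p_q\right\rangle\right\rangle_{T^\ast Q}$ and then closes with ``$a+2s=b$, or equivalently, $s=\tfrac{a+b}{2}$,'' which is algebraically inconsistent ($a+2s=b$ would give $s=(b-a)/2$); the consistent chain is the one you wrote, $2s-a=b$, and only it reproduces $s=\tfrac12$ for the Kepler normalization $a=2$, $b=-1$. So your proposal is not merely an alternative route but a repaired version of the paper's argument; the one refinement I would ask for is to actually carry out the general-metric verification of the corrected intertwining (pair $\mathbb{F}L(T\Psi_g v_q)$ against vectors of the form $T\Psi_g u_q$ and use Assumption \ref{a:2}), rather than relying on the $n$-body sanity check alone.
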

\begin{proof} 
	The cotangent lifted action $ \hat{\Psi} _g $ is defined by 
\[\hat{\Psi} _g (q,p) = \left(\Psi _g (q) ,  (d \Psi _{ g ^{ - 1 } }) ^\ast  (p)\right).     \]
Hence, by Assumption \ref{a:2}, 
	\[ (\hat{\Psi} _{ g }) ^\ast K (q, p) = \frac{1}{2}  \left\langle \left\langle (d \Psi _{ g ^{- 1 } } ) ^\ast( p)  , (d \Psi _{ g ^{- 1 } } ) ^\ast (p) \right\rangle \right\rangle _q  ^\ast = \frac{1}{2}  g ^a \left\langle \left\langle  p,p \right\rangle \right\rangle _q ^\ast.  \]
On the other hand, if $ c = (b - a) /2 $, then  
\begin{align*} (\hat{\Psi} _{ g } ^c ) ^\ast K (q, p) & = \frac{1}{2}  \left\langle \left\langle  g ^c \cdot ( d \Psi _{ g ^{- 1 } } ) ^\ast( p)  , g ^c \cdot (d \Psi _{ g ^{- 1 } } ) ^\ast (p) \right\rangle \right\rangle _q  ^\ast \\
	& = \frac{1}{2}  g ^{a+2c} \left\langle \left\langle  p,p \right\rangle \right\rangle _q ^\ast  \\
	& =  \frac{1}{2}  g ^{b} \left\langle \left\langle  p,p \right\rangle \right\rangle _q ^\ast \\
	& = g ^b K (q , p). 
\end{align*} 
By Assumption 1, we also  have that \(\hat{\Psi}_g^\ast U = g^b U\). Hence,  \((\hat{\Psi}_g ^c )^\ast H = g^b H\). Moreover, it is clear that   $ \hat{\Psi}^c $ is conformally symplectic with parameter $ c $,  by the definition of scaled  cotangent lifted action. 
\end{proof} 
Since the action is conformally symplectic by Proposition \ref{prop:assumptions}, Theorem \ref{thm:Noether-formula} 
implies that the momentum map corresponding to the action \( \hat{\Psi}^c \) is the map \( J: T^*Q \to 
\mathbb{R}  \) given by
\[
\left\langle J(q,p), \eta \right\rangle = \left\langle p, \eta_Q(q) \right\rangle,
\]
where \( \eta_Q \) is the infinitesimal generator of the action on \( Q \) corresponding to \( \eta \in \mathbb{R}  \).

For each $ q \in Q $ the {\bf locked inertia tensor} is the map $ \mathbb{I} (q ) : 
\mathbb{R}  \to \mathbb{R}  $ given by
\[\left\langle \eta,\mathbb{I} (q ) \xi\right\rangle = \left\langle \left\langle \xi _Q (q ) , \eta _Q (q) \right\rangle \right\rangle_q     \]
for every $ \xi, \eta \in \mathbb{R}  $. 
Given $ \xi \in \mathbb{R}  $ the {\bf augmented potential} $U _{ \xi }:Q\to \mathbb{R}  $    is defined by 
\[U _{ \xi }  (q) = U (q ) - \frac{1}{2} \left\langle \xi, \mathbb{I} \xi  \right\rangle    \]
and the {\bf augmented kinetic energy} by 
\[K _{ \xi }(q,p) = \frac{1}{2} \left( \| p - \mathbb{F}L_q (\xi _Q (q )) \|_{q}^\ast \right)  ^{ 2 }.    \]

The next proposition shows that we can  write the augmented Hamiltonian in terms of the augmented potential and the augmented kinetic energy. 
\begin{proposition}
	$H _{ \xi }q,p) = K _{ \xi } (q,p) + U _{ \xi } (q,p) 
	$, where $  (q, p) \in T ^\ast Q $.  
\end{proposition}

\begin{proof} 
	By the definition of  $ \left\langle \left\langle \cdot , \cdot \right\rangle \right\rangle _{q}^\ast $ and the definition of locked inertia tensor 
	we have that 
	\[ \left\langle \left\langle \mathbb{F}L_q (\xi _Q (q)),   \mathbb{F}L_q (\xi _Q (q)) \right\rangle \right\rangle _{q}^\ast = \left\langle \left\langle \xi _Q (q) , \xi _Q( q)) \right\rangle \right\rangle_q    = \left\langle \xi ,\mathbb{I} (q) \xi \right\rangle.    \]
	By the definition of (conformal) momentum map and Legendre transform  we have that   
	\begin{align*} \left\langle J (q,p) , \xi \right\rangle & = \left\langle p , \xi _Q (q) \right\rangle = \left\langle \mathbb{F}L _q  (\mathbb{F}L _q  ^{ - 1 }  (p) ), \xi _Q (q) \right\rangle = \left\langle \left\langle
		\mathbb{F}L_q ^{ - 1 }  (p) , \xi _Q (q) \right\rangle \right\rangle_q \\
		& =   \left\langle \left\langle \mathbb{F}L_q ^{ - 1 } (p)   ,   \mathbb{F}L_q  ^{ - 1 } (\mathbb{F}L_q (\xi _Q (q))) \right\rangle \right\rangle_q   = \left\langle \left\langle p , \mathbb{F}L _q  (\xi _Q (q)) \right\rangle \right\rangle _{q}^\ast   
\end{align*} 
Using the previous equations and the expression for the conformal momentum map yields

\begin{align*} 
	K _{ \xi } (q,p) +&  U _{ \xi } (q,p) 
	= 	\frac{1}{2}( \|p  - \mathbb{F}L_q(\xi_{Q}(q ))\|^\ast _{q})^2 + U(q ) - \frac{1}{2} \langle \xi, \mathbb{I}(q ) \xi \rangle 
	\\											      & = \frac{1}{2} (\| p \|^\ast_{q})^2  - \langle\langle p , \mathbb{F}L_q(\xi_{Q}(q)) \rangle  \rangle_{q }^\ast  + \frac{1}{2} (\|\mathbb{F}L_q(\xi_{Q}(q))\|^\ast_{ q} )^2
	+ U(q) -  \frac{1}{2} \langle \xi, \mathbb{I}(q ) \xi \rangle 
	\\
				& =  \frac{1}{2} (\| p \|^\ast_{q})^2  - \langle J(q,p), \xi \rangle  +  \frac{1}{2} \langle \xi, \mathbb{I}(q ) \xi \rangle  + U(q) -   \frac{1}{2} \langle \xi, \mathbb{I}(q ) \xi \rangle 
				\\
			& =  \frac{1}{2} (\| p \|^\ast_{q})^2   - \langle J(q,p), \xi \rangle + U(q) 
			\\
									      & = H(q,p) - \langle J(q,p), \xi \rangle\\
									      & = H_{\xi}(q,p).						
\end{align*} 
\end{proof}

\begin{theorem}\label{thm:augmented-potential}
	Consider a simple mechanical system on \( Q \) with a Lagrangian function \( L: TQ \to \mathbb{R} \), and let \( \Psi \) be an action of the multiplicative group \( \mathbb{R}^+ \) on \( Q \) satisfying Assumptions \ref{a:1} and \ref{a:2}. This action induces a scaled cotangent lift \( \hat{\Psi}^c \) with \( c = (b - a)/2 \), which is a scaling symmetry. 

	Then, a point \( z =  (q, p) \in T^*Q \) is a relative equilibrium of the scaling symmetry if and only if the following conditions are satisfied:
	\begin{enumerate}
	    \item \( p  = \mathbb{F}L_q (\xi_Q(q)) \), and
	    \item \( q \) solves the equation
	    \[
	    dU_\xi(q) + (c \xi) \theta_L (\xi_Q(q)) = 0,
	    \]
	    where \( \theta_L = (\mathbb{F}L)^* \theta_0 \) is known as the \textbf{Lagrangian one-form}. For further details on the Lagrangian one-form, see \cite{marsden2013introduction}.
	\end{enumerate}
\end{theorem}

\begin{proof}
First notice that  \( \hat{\Psi}^c \) is a scaling symmetry by Proposition \ref{prop:assumptions}. Then, let \( (\mathbf{q}, \mathbf{p}) \) be local coordinates, and define
\[
\left( \frac{\partial L}{\partial \mathbf{\dot{q}}} \right)_{\mathbf{\xi}_Q} = \left( \frac{\partial L}{\partial \mathbf{\dot{q}}} \right)_{\mathbf{\dot{q}} = \mathbf{\xi}_Q(\mathbf{q})},
\]
where \(\mathbf{\xi}_Q\) is the infinitesimal generator of the group action. Then we have
\[
K_{\xi}(\mathbf{q}, \mathbf{p}) = \frac{1}{2} \left\| \mathbf{p} - \left( \frac{\partial L}{\partial \mathbf{\dot{q}}} \right)_{\mathbf{\xi}_Q} \right\|^2 _q ,
\]
where \(\|\cdot\|_q \) is the norm induced by the Riemannian metric, which depends on \(\mathbf{q}\). The Hamiltonian is then expressed as
\[
H_{\xi}(\mathbf{q}, \mathbf{p}) = K_{\xi}(\mathbf{q}, \mathbf{p}) + U_{\xi}(\mathbf{q}),
\]
where \(U_{\xi}(\mathbf{q})\) is the amended potential. Furthermore, the Liouville one-form is \(	\theta_0 = \mathbf{p} \, d\mathbf{q}\).


Since the following formula holds (see Theorem 3.6 in \cite{do1992riemannian}):
\[
\frac{ \partial } { \partial q _k } \left\langle \left\langle u,v \right\rangle\right\rangle _q    = \left\langle  \left\langle \nabla _{\partial _{ q _k }  } u, v \right\rangle \right\rangle _q  + \left\langle  \left\langle u, \nabla _{\partial _{ q _k } } v \right\rangle \right\rangle _q,
\]
where \( \nabla \) is the Levi-Civita covariant derivative, it follows that the derivative of \( K_\xi(\mathbf{q}, \mathbf{p}) \) with respect to \( q_k \) is:
\begin{equation} \label{eqn:dq}
	\frac{ \partial } { \partial q _k } K _{ \xi } (\mathbf{q} ,\mathbf{p} ) = \frac{1}{2}  \frac{ \partial } { \partial q _k }    \left\| \mathbf{p} - \left( \frac{\partial L}{\partial \mathbf{\dot{q}}} \right)_{\mathbf{\xi}_Q} \right\|^2 _q = 
	\left\langle \left\langle \nabla_{\partial {q _k }} \left( \mathbf{p} - \left( \frac{\partial L}{\partial \mathbf{\dot{q}}} \right)_{\mathbf{\xi}_Q} \right), \mathbf{p} - \left( \frac{\partial L}{\partial \mathbf{\dot{q}}} \right)_{\mathbf{\xi}_Q}  \right\rangle \right\rangle _q.
\end{equation}

Moreover, 
\begin{equation} \label{eqn:dp}
\begin{aligned} 
	\frac{ \partial } { \partial p _k } K _{ \xi } (\mathbf{q} ,\mathbf{p} )&  = \frac{1}{2}  \frac{ \partial } { \partial p _k }    \left\| \mathbf{p} - \left( \frac{\partial L}{\partial \mathbf{\dot{q}}} \right)_{\mathbf{\xi}_Q} \right\|^2 _q\\
										& = 
	\left\langle \left\langle 
	\frac{ \partial } { \partial p _k } \left( \mathbf{p} - \left( \frac{\partial L}{\partial \mathbf{\dot{q}}} \right)_{\mathbf{\xi}_Q} \right), \mathbf{p} - \left( \frac{\partial L}{\partial \mathbf{\dot{q}}} \right)_{\mathbf{\xi}_Q}  \right\rangle \right\rangle _q\\
										& = 	\left\langle \left\langle 
	\frac{ \partial } { \partial p _k }  \mathbf{p}, \mathbf{p} - \left( \frac{\partial L}{\partial \mathbf{\dot{q}}} \right)_{\mathbf{\xi}_Q}  \right\rangle \right\rangle _q
									\end{aligned} 
\end{equation} 

To compute the differential of the Hamiltonian, we have:

\begin{align*}
    dH_{\xi} + (c \xi)\theta &= \frac{\partial K_{\xi}}{\partial \mathbf{q}} \, d\mathbf{q} + \frac{\partial K_{\xi}}{\partial \mathbf{p}} \, d\mathbf{p} + \frac{\partial U_{\xi}}{\partial \mathbf{q}} \, d\mathbf{q} + (c \xi) \, \mathbf{p} \, d\mathbf{q} \\
			     & = \left[  \frac{\partial K_{\xi}}{\partial \mathbf{q}}  + \frac{\partial U_{\xi}}{\partial \mathbf{q}}  + (c \xi) \, \mathbf{p} \right] d \mathbf{q}   +\frac{\partial K_{\xi}}{\partial \mathbf{p}} \, d\mathbf{p} 
\end{align*} 
From which we obtain the equations
\[
	\frac{\partial K_{\xi}}{\partial \mathbf{p}} = 0, \quad \frac{\partial K_{\xi}}{\partial \mathbf{q}}  + \frac{\partial U_{\xi}}{\partial \mathbf{q}}  + (c \xi) \, \mathbf{p} =0
\]
By equation \eqref{eqn:dp},  $ 	\frac{\partial K_{\xi}}{\partial \mathbf{p}} = 0 $ implies  $\mathbf{p} - \left( \frac{\partial L}{\partial \mathbf{\dot{q}}} \right)_{\mathbf{\xi}_Q}=0$. Using equation \eqref{eqn:dq} this implies that $\frac{\partial K_{\xi}}{\partial \mathbf{q}}=0$.

From this, it follows that
\[
\mathbf{p} = \left( \frac{\partial L}{\partial \mathbf{\dot{q}}} \right)_{\mathbf{\xi}_Q}, \quad \text{and} \quad \frac{\partial U_{\xi}}{\partial \mathbf{q}}  + (c \xi) \, \left( \frac{\partial L}{\partial \mathbf{\dot{q}}} \right)_{\mathbf{\xi}_Q} = 0.
\]
Multiplying by $ d \mathbf{q} $ the second equation can also be written as $\frac{\partial U_{\xi}}{\partial \mathbf{q}} d \mathbf{q}   + (c \xi) \, \left( \frac{\partial L}{\partial \mathbf{\dot{q}}} \right)_{\mathbf{\xi}_Q} d \mathbf{q} = 0  $.
Since \( \frac{\partial L}{\partial \mathbf{\dot{q}}} \, d\mathbf{q} \) is the coordinate expression of the Lagrangian one-form \(\theta_L\) (see \cite{marsden2013introduction}), the proof follows.
\end{proof}

The proof of the theorem shows that the equations for a relative equilibrium under a scaling symmetry in local coordinates are:
\begin{equation}\label{eqn:rel-equilibria-local-coord}
\mathbf{p} = \left( \frac{\partial L}{\partial \mathbf{\dot{q}}} \right)_{\xi_Q}, \quad \text{and} \quad \frac{\partial U_\xi}{\partial \mathbf{q}} + (c \xi) \left( \frac{\partial L}{\partial \mathbf{\dot{q}}} \right)_{\xi_Q} = 0,
\end{equation}
while the equations for central configurations are:
\begin{equation} \label{eqn:central-config-local-coord}
\frac{\partial U_\xi}{\partial \mathbf{q}} + (c \xi) \left( \frac{\partial L}{\partial \mathbf{\dot{q}}} \right)_{\xi_Q} = 0.
\end{equation}
These equations are particularly convenient in concrete examples.

We can now apply this theory to the $ n $-body problem.

\begin{example} 
Let us revisit the Newtonian \( n \)-body problem, using the same notations as in Example \ref{ex:n-body}. We aim to derive the equation for central configurations using Equation \eqref{eqn:central-config-local-coord}.

The Lagrangian of the \( n \)-body problem is given by
\[ L(\mathbf{q}, \mathbf{\dot{q}}) = \frac{1}{2} \mathbf{\dot{q}}^T M \mathbf{\dot{q}} - U(\mathbf{q}), \]
and therefore,
\[ \mathbb{F}L (\mathbf{q}, \mathbf{\dot{q}}) = \left( \mathbf{q}, \frac{\partial L}{\partial \mathbf{\dot{q}}} \right) = \left( \mathbf{q}, M \mathbf{\dot{q}} \right). \]

The action of the multiplicative  group \( \mathbb{R}^{+} \) on \( Q \) is given by \( \Psi(g, \mathbf{q}) = (g, g \mathbf{q}) \). The cotangent lifted action on \( T^\ast Q \) is \( \hat\Psi(g, (\mathbf{q}, \mathbf{{p}})) = (g \mathbf{q}, g ^{ - 1 }  \mathbf{p}) \). Hence, $ (\hat{\Psi} _g ) ^\ast K (\mathbf{q} , \mathbf{p} ) = g ^{ - 2 } K ( \mathbf{q} , \mathbf{p}) $, and $ \hat{\Psi}_g  ^\ast U (\mathbf{q}) =\Psi_g  ^\ast U (\mathbf{q}) = g ^{ - 1 } U (\mathbf{q})$. 
By Proposition \ref{prop:assumptions} with $ b = - 1 $ and $ a = - 2 $ we take $ c = (b-a)/2 = 1/2 $, and hence 
the scaled cotangent lifted action is 
\[ \hat{\Psi} ^c (g, (\mathbf{q}, \mathbf{p})) =  (g \mathbf{q}, g^{1/2} g ^{ - 1 }  \mathbf{p})= (g \mathbf{q}, g^{-1/2} \mathbf{p}).\]
It is straightforward to verify that \( (\hat{\Psi}_g ^c )^\ast H = g^{-1} H \) and \( (\hat{\Psi}_g ^c )^\ast \omega _0  = g^{1/2} \omega _0  \), confirming that \( \hat{\Psi}^c  \) is a scaling symmetry.

In our case, we have
\[
\left( \frac{\partial L}{\partial \mathbf{\dot{q}}} \right)_{\xi_Q} = M \left( \mathbf{\dot{q}} \right)_{\mathbf{\dot{q}} = \xi_Q(\mathbf{q})},
\]
and since \( \xi_Q(\mathbf{q}) = \xi \mathbf{q} \), it follows that
\[
\left( \frac{\partial L}{\partial \mathbf{\dot{q}}} \right)_{\xi_Q} = M \xi_Q(\mathbf{q}) = M \xi \mathbf{q}.
\]
Therefore, by Equation \eqref{eqn:central-config-local-coord}, and noting that \( \left\langle \xi, \mathbb{I} \xi \right\rangle = \xi^2 \mathbf{q}^T M \mathbf{q} \), we obtain
\[
\frac{\partial U_\xi}{\partial \mathbf{q}} + (c \xi) \left( \frac{\partial L}{\partial \mathbf{\dot{q}}} \right)_{\xi_Q} = \frac{\partial}{\partial \mathbf{q}} \left( U(\mathbf{q}) - \frac{1}{2} \xi^2 \mathbf{q}^T M \mathbf{q} \right) + c \xi^2 M \mathbf{q} = 0.
\]

Taking \( c = \frac{1}{2} \), the equation simplifies to
\[
\frac{\partial U}{\partial \mathbf{q}} = (c - 1) \xi^2 M \mathbf{q} = \frac{1}{2} \xi^2 M \mathbf{q},
\]
which matches the result obtained in Example \ref{ex:n-body}.
\end{example} 

\section*{Acknowledgments}
We thank Shengda Hu, Corey Shanbrom,  David Sloane and Cristina Stoica for useful discussions and remarks concerning this paper.
\bibliographystyle{amsplain}

\bibliography{references}
\end{document}